\newcommand{\davg}{\mathcal{D}_\text{avg}}
\newcommand{\gpid}{\mathcal{D}_P}
\newcommand{\frobd}{\mathcal{D}_F}
\newcommand{\supp}{\textnormal{supp}}
\newcommand{\calC}{\mathcal{C}}
\newcommand{\calE}{\mathcal{E}}
\newcommand{\calF}{\mathcal{F}}
\newcommand{\calM}{\mathcal{M}}
\newcommand{\calN}{\mathcal{N}}
\newcommand{\calO}{\mathcal{O}}
\newcommand{\calP}{\mathcal{P}}
\newcommand{\calU}{\mathcal{U}}
\newcommand{\calS}{\mathcal{S}}
\newcommand{\AC}{$\mathsf{AC}$}
\newcommand{\NC}{$\mathsf{NC}$}
\newcommand{\NCZ}{$\mathsf{NC}^0$}
\newcommand{\ACZ}{$\mathsf{AC}^0$}
\newcommand{\QAC}{$\mathsf{QAC}$}
\newcommand{\QNC}{$\mathsf{QNC}$}
\newcommand{\RLWE}{$\mathsf{RLWE}$}
\newcommand{\QACZ}{$\mathsf{QAC}^0$}
\newcommand{\QACZf}{$\mathsf{QAC}^0_f$}
\newcommand{\QNCZ}{$\mathsf{QNC}^0$}
\newcommand{\QNCO}{$\mathsf{QNC}^1$}
\newcommand{\wt}{\textbf{W}}
\newtheorem{theorem}{Theorem}
\newtheorem{conjecture}{Conjecture}
\newtheorem{proposition}{Proposition}
\newtheorem{lemma}{Lemma}
\newtheorem{corollary}{Corollary}
\newtheorem{definition}{Definition}
\newtheorem{fact}{Fact}
\newtheorem{openprob}{Open Problem}
\newcommand{\obsanc}{O_{P_i, n+a}} 
\newcommand{\obsared}{O_{P_i, n}} 
\newcommand{\support}{\ell}
\newcommand{\gaterem}{\kappa}
\newcommand{\ex}{\mathbb{E}}
\newcommand{\paulis}{\mathcal{P}^n}
\newcommand{\pauli}{\mathcal{P}}
\newcommand{\poly}{\textnormal{poly}}
\newcommand{\proj}{\textnormal{Proj}}
\title{Learning shallow quantum circuits with many-qubit gates}
\author[1]{Francisca Vasconcelos\footnote{\url{francisca@berkeley.edu}}}
\author[2, 3]{Hsin-Yuan Huang\footnote{ \url{hsinyuan@caltech.edu}}}
\affil[1]{University of California, Berkeley}
\affil[2]{Google Quantum AI}
\affil[3]{California Institute of Technology}
\date{}
\begin{document}

\maketitle
\begingroup
\renewcommand{\thefootnote}{}%
\footnotetext{\newline \textsf{Accepted for presentation at the Conference on Learning Theory (COLT) 2025.}}
\addtocounter{footnote}{-1}%
\endgroup

\begin{abstract}%
  The seminal work of [LMN93] established a cornerstone result for classical complexity, with profound implications for learning theory. By proving low-degree Fourier concentration of \ACZ, the work demonstrated that Boolean functions computed by constant-depth circuits can be efficiently PAC-learned via low-degree Fourier sampling. This breakthrough provided the first sample- and time-efficient (quasi-polynomial) algorithm for learning \ACZ. 
  
  Proposed by [Moore99] as a natural quantum analog of \ACZ, \QACZ~is the class of constant-depth quantum circuits composed of arbitrary single-qubit gates and polynomial $CZ$ gates of unbounded width. In this work, we present the first algorithm for efficient average-case learning of \QACZ~circuits with logarithmic ancilla. Namely, our algorithm achieves quasi-polynomial sample- and time-complexity for learning unknown \QACZ~unitaries to inverse-polynomially small error. We further show that these learned unitaries can be efficiently synthesized via poly-logarithmic depth circuits, making progress towards proper learning of \QACZ. Since in finite-dimensional circuit geometries \QACZ~circuits require polynomial depth to implement, this result significantly expands the family of efficiently learnable quantum circuits. 
\end{abstract}

\newpage
\tableofcontents

\newpage
\section{Introduction}

Motivated by both theory and practice, shallow quantum circuits have become central to quantum complexity. Theoretically, shallow quantum circuits have proven more powerful than their classical counterparts \citep{bravyi2018quantum,watts2019exponential, bravyi2020quantum, watts2023unconditional}, with output distributions that are expected to be classically hard to simulate \citep{terhal2002adaptive, gao2017quantum, bermejo2018architectures, haferkamp2020closing, hangleiter2023computational}.
Experimentally, current quantum hardware is noisy and only capable of simulating short time-scale quantum processes. In order to characterize and continue improving quantum hardware, it is crucial that we develop efficient learning procedures for these shallow quantum circuits.

Classically, efficient learning of shallow circuits was an immediate consequence of a seminal result in classical circuit complexity. Namely, \cite{linial1993constant} established low-degree Fourier concentration of Boolean functions implementable in \ACZ---the family of constant-depth, polynomial-sized circuits with unbounded fan-in and fan-out. This result had profound implications for areas such as learning theory, hardness of approximation, communication complexity, pseudorandomness, and circuit lower-bounds. For learning, specifically, \cite{linial1993constant} connected Fourier concentration to the learnability of circuit classes, resulting in the first quasi-polynomial sample- and time-complexity algorithm for PAC-learning any Boolean function implementable in \ACZ.

A natural quantum analog of \ACZ~is the \QACZ~circuit class, proposed by \cite{moore1999quantum}. Specifically, \QACZ~is the class of constant-depth (shallow) quantum circuits comprised of arbitrary single-qubit gates and polynomial $CZ_k$ gates of unbounded width $k$\footnote{The $CZ_k$ gate applies a $-1$ phase to a state if all $k$ inputs are in the $\ket{1}$ state and otherwise leaves the state unchanged. This is a natural quantum analog of the classical $AND$ operation.} (many-qubit gates). It is a long-standing open question as to whether the quantum fan-out operation can be implemented in \QACZ, i.e. whether \QACZ = \QACZf~(where the $f$ denotes ``with fan-out'')\footnote{Quantum fan-out is a many-qubit gate, with one control qubit that applies bit-flips to many target qubits. Since quantum fan-out is equivalent to parity via Hadamard conjugation, if fan-out $\in$ \QACZ,  this would imply parity $\in$ \QACZ. This is in contrast to \ACZ, which trivially contains fan-out, but does not contain parity \citep{linial1993constant}.}. If fan-out is in \QACZ, then powerful subroutines such as sorting, arithmetic operations, phase estimation, and the quantum Fourier transform could be approximately implemented in \QACZ~\citep{hoyer2005quantum}. Furthermore, \QACZ~contains \QNCZ~(the quantum analog of \NCZ), which is defined similarly to \QACZ, but constrained to constant-width $CZ_k$ gates, i.e. $k=O(1)$. Since constant-width gates are what we can feasibly implement on real quantum hardware and \QNCZ $\subset$ \QACZ $\subset$ \QNC$^1$, learning \QACZ~makes progress towards learning practical logarithmic-depth circuits.

Recent work has made exciting progress towards efficient learning of \QACZ. Notably, \cite{huang2024learning} established an efficient worst-case  algorithm for learning \QNCZ~unitaries. However, this algorithm relies crucially on the fact that \QNCZ~gates act only on a constant number of qubits, making it unclear how to generalize the approach to \QACZ~(with many-qubit gates). In another vein, \cite{nadimpalli2023pauli} proved a quantum analog of \cite{linial1993constant}'s celebrated Fourier-concentration result. In particular, they demonstrated that the Pauli spectrum\footnote{The Pauli spectrum is the quantum analog of the Fourier spectrum of classical Boolean functions. The notion of quantum Boolean functions was originally proposed by \cite{montanaro2010quantum}.} of single-output \QACZ~circuit channels (with limited ancillas) are low-degree concentrated. Interestingly, however, unlike the classical Fourier-concentration result of \cite{linial1993constant}, this does not straightforwardly imply an efficient learning algorithm for full characterization of \QACZ~circuits.

To see why, note that in the classical setting of \cite{linial1993constant}, although learned Fourier coefficients are not exact, there exists a simple projection from the learned monomial onto the space of valid Boolean functions.
This adds negligible time-complexity to the overall learning algorithm. However, in learning single-output channels of $n$-qubit \QACZ~circuits \citep{nadimpalli2023pauli}, the central object of study is an $O(4^{n})$-dimensional matrix (corresponding to the channel's Choi representation). Projecting the algorithm's learned matrix onto the space of valid Choi representations requires solving a semi-definite programming problem, which can require exponential time in the worst-case. Thus, the overall runtime of \cite{nadimpalli2023pauli}'s learning algorithm is far from efficient. Furthermore, even if the algorithm were time-efficient, it would only describe one of the \QACZ~circuit's $n$ outputs and is limited to \QACZ~circuits with at most logarithmic ancilla. 

\begin{table}[t!] 

\centering
\renewcommand{\arraystretch}{1.2} 
\setlength{\tabcolsep}{3pt}       
\footnotesize                            
\begin{tabular}{|c|c|c|c|c|c|c|c|c|}
\hline
\textbf{Circuit}  & \multirow{2}{*}{\textbf{Type}} & \multirow{2}{*}{\textbf{Fan-in}} & \multirow{2}{*}{\textbf{Paper}} &  \textbf{Learned} &\textbf{\#}  & \multicolumn{2}{c|}{\textbf{Complexity}} & \textbf{\#}\\ \cline{7-8}
\textbf{Class} & & & & \textbf{Object} & \textbf{Outputs}  & \textbf{Time} & \textbf{Sample}  & \textbf{Ancilla}\\ \hline
\textbf{AC$^0$} & Classical & $\infty$ & \cite{linial1993constant} & Monomial & 1  & $O(n^{\log^d n})$ & $O(n^{\log^d n})$  & -- \\ \hline
\textbf{QNC$^0$} & Quantum & $O(1)$ & \cite{huang2024learning} & Unitary & n  & $O(n)$ & $O(n)$ & $\infty$ \\ \hline
\multirow{2}{*}{\textbf{QAC$^0$}} & \multirow{2}{*}{Quantum} & \multirow{2}{*}{$\infty$} & \cite{nadimpalli2023pauli}  & Choi Rep & \cellcolor{red!20}  1  &\cellcolor{red!20} $O(2^n)$ &  \cellcolor{green!20}  & \cellcolor{red!20} \\ \cline{4-7} 
 & & & This work & Unitary & \cellcolor{green!20} n  & \cellcolor{green!20} $O(n^{\log^d n})$ & \multirow{-2}{*}{\cellcolor{green!20} $O(n^{\log^d n})$}  & \multirow{-2}{*}{\cellcolor{red!20} $O(\log n)$} \\ \hline
\end{tabular}
\caption{Comparison of learning algorithms for shallow (depth-$d$) classical and quantum circuits. For the quantum algorithms we also report the number of ancilla the guarantees hold for.}
\label{tab:circuit_classes}
\end{table}

In this paper, we address most limitations of the prior work, by proposing the first sample- and \emph{time-efficient} algorithm for learning the \emph{full unitary} implemented by a \QACZ~circuit. In particular, we show that $n$-qubit \QACZ~unitaries are learnable to an average-case distance, with high probability, in quasi-polynomial sample- and time-complexity. This matches both the sample- and time-complexity of \cite{linial1993constant}. Furthermore, we offer an efficient circuit synthesis procedure for our learned unitaries, making progress towards proper learning of \QACZ~circuits. The key remaining limitation, as in the prior work of \cite{nadimpalli2023pauli}, is that our learning algorithm's provable guarantees only extend to \QACZ~circuits with up to logarithmic ancilla. However, we offer a conjecture about the Pauli spectrum of \QACZ~circuits, which would extend our guarantees to \QACZ~circuits with polynomially many ancilla. \Cref{tab:circuit_classes} offers a summary of our algorithm's learning guarantees and comparison to prior work.

\section{Overview of Results and Contributions} \label{sec:results}
The primary contribution of this work is the first sample- and time-efficient algorithm for learning shallow quantum circuits with many-qubit gates. Namely, we prove the following main result:

\begin{theorem}[Efficient learning of $n$-output \QACZ~unitaries] \label{thm:main_result_intro}
    Consider an unknown $n$-qubit, depth-$d$ \QACZ~circuit, implementing unitary $C$, and failure probability $\delta \in (0,1)$. With high probability, $1-\delta$, we can learn a $2n$-qubit unitary $U$, such that the average gate fidelity\footnote{As discussed in \Cref{sec:dist_meas}, the average gate fidelity is a measure of how distinct two unitary channel outputs are, for pure input states averaged over the Haar (uniform) measure.} $\davg$ is bounded as

    \begin{align}
        \davg(U, C \otimes C^\dagger) \leq 1/\poly(n).
    \end{align}

    \noindent Moreover, $U$ is learnable with quasi-polynomial, $O(n^{\log^d n}\log(1/\delta))$, sample- and time-complexity.
\end{theorem}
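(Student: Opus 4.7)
My plan is to combine the quantum Pauli concentration result of Nadimpalli--Sinha (which is cited as \cite{nadimpalli2023pauli} above) with a two-copy ``doubling'' trick and a shallow synthesis procedure. The reason the target is $C \otimes C^\dagger$ rather than $C$ alone is that this object captures the Heisenberg-evolved Pauli observables on both sides at once: the $n$ observables $C^\dagger Z_i C$ and $C X_i C^\dagger$ together determine $C$ up to a global phase, and $C \otimes C^\dagger$ packages them into a single $2n$-qubit unitary that is itself shallow (depth $2d$) and therefore still enjoys low-degree Pauli concentration. The $\log n$ ancilla budget is precisely what is needed to apply the single-output concentration statement of \cite{nadimpalli2023pauli} to each of the $2n$ observables we care about.

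\textbf{Step 1 (Reduce learning the unitary to learning Heisenberg-evolved observables).} I would first reduce the task: producing a $2n$-qubit $U$ with $\davg(U, C \otimes C^\dagger) \leq 1/\poly(n)$ is equivalent, up to standard Fuchs--van de Graaf style conversions, to producing approximations $\widehat{O}_i \approx C^\dagger Z_i C$ and $\widehat{O}_i' \approx C X_i C^\dagger$ in normalized Frobenius distance $1/\poly(n)$ for all $i \in [n]$. This is the key structural observation that lets a multi-output learning task be attacked one output at a time.

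\textbf{Step 2 (Estimate low-weight Pauli coefficients).} For each $i$, \cite{nadimpalli2023pauli} guarantees that the Pauli spectrum of the single-output channel corresponding to $C^\dagger Z_i C$ is $(\epsilon, k)$-concentrated with $k = O(\log^d n)$ for any inverse-polynomial $\epsilon$, provided the circuit uses only $O(\log n)$ ancilla. I would enumerate the set $\mathcal{P}_{\leq k}$ of Paulis of weight at most $k$ (there are only $O(n^{\log^d n})$ of them) and estimate each coefficient $\alpha_P^{(i)} = 2^{-n} \Tr(P \cdot C^\dagger Z_i C)$ to accuracy $1/\poly(n)$ using a standard Pauli-expectation estimator. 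Each coefficient needs $O(\log(1/\delta'))$ copies via Hoeffding, with $\delta'$ set by a union bound over the $O(n^{\log^d n})$ Paulis and $n$ outputs, yielding the claimed $O(n^{\log^d n}\log(1/\delta))$ sample and time complexity. I would then define $\widehat{O}_i := \sum_{P \in \mathcal{P}_{\leq k}} \widehat{\alpha}_P^{(i)} P$; by Pauli concentration plus the estimation error, $\lVert \widehat{O}_i - C^\dagger Z_i C \rVert_F^2 / 2^n \leq 1/\poly(n)$. Repeat for the $X_i$ observables under the reversed circuit.

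\textbf{Step 3 (Synthesis into a $2n$-qubit unitary).} From the $\{\widehat{O}_i, \widehat{O}_i'\}$ I would assemble a single $2n$-qubit operator that is close in Frobenius norm to $C \otimes C^\dagger$, then synthesize a genuine unitary $U$ from it. To avoid the exponential-time SDP projection that obstructed \cite{nadimpalli2023pauli}, I would use the fact that each $\widehat{O}_i$ is a short sum of low-weight Paulis and therefore admits a shallow circuit implementation (for instance by Pauli rotations on its light-cone together with the $\log n$ ancilla), so that the target can be realized as a poly-logarithmic depth circuit directly. The final step converts the Frobenius-norm guarantees on the Heisenberg observables into an average-case fidelity bound by a standard identity relating $\davg(U,V)$ to $\sum_i \lVert U^\dagger P_i U - V^\dagger P_i V \rVert_F^2$ on a Pauli basis.

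\textbf{Main obstacle.} I expect Step 3 to be the hardest part, for two reasons. First, one must verify that concentration on individual outputs $Z_i$ aggregates correctly: errors in the $n$ separate estimates could in principle conspire so that no single unitary simultaneously reproduces all of them, and some care is needed (likely via the tensor product $C \otimes C^\dagger$ viewpoint, or via a joint Heisenberg analysis on random Pauli probe states) to bound the aggregated error by $1/\poly(n)$. Second, producing a bona fide unitary, rather than just an approximating operator, without paying the SDP cost requires a constructive synthesis that exploits the low-weight structure of each $\widehat{O}_i$ and the availability of the $n$ auxiliary qubits that house the $C^\dagger$ copy. The reduction to an actual shallow circuit—and the proof that this synthesis preserves the $1/\poly(n)$ error budget—is, I expect, where the bulk of the technical work lies.
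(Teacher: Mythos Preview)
Your high-level strategy---learn the Heisenberg-evolved Pauli observables $O_{P_i} = C^\dagger P_i C$ via their low-degree Pauli concentration, then assemble them into an approximation of $C \otimes C^\dagger$---matches the paper's. The genuine gap is in the transition from Step~2 to Step~3, and you have correctly flagged Step~3 as the hard part, but the fix you sketch there does not work.

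The issue is the distinction between \emph{low-degree} and \emph{low-support}. Your learned operator $\widehat{O}_i = \sum_{|P|\le k}\widehat{\alpha}_P^{(i)}P$ is degree-$k$, but it can be supported on all $n$ qubits (there are $\binom{n}{k}3^k$ Paulis of degree $\le k$, and nothing forces their supports to coincide). Your claim that ``each $\widehat{O}_i$ is a short sum of low-weight Paulis and therefore admits a shallow circuit implementation'' is where the argument breaks: $\widehat{O}_i$ is a Hermitian operator, not a unitary, so it is not implemented by a circuit at all, and writing it as a product of Pauli rotations $\prod_P e^{i\theta_P P}$ does not reproduce the sum. To get a genuine unitary one needs something like a Procrustes/SVD projection of the sewing block $\frac{1}{2}I\otimes I + \frac{1}{2}\sum_P \widehat{O}_{P_i}\otimes P_i$, and with $\widehat{O}_i$ supported on $\Theta(n)$ qubits that costs $2^{\Theta(n)}$ time---exactly the obstruction you were trying to avoid.

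The paper resolves this by proving a \emph{low-support} concentration statement: the true $O_{P_i}$ is close in normalized Frobenius norm to an operator supported on a fixed set $\mathcal{S}^*$ of only $O(\log^d n)$ qubits (this follows because removing all $CZ$ gates of width $>O(\log n)$ changes $O_{P_i}$ by only $1/\mathrm{poly}(n)$, and the resulting observable has $\mathrm{polylog}(n)$-sized light-cone). The learning algorithm then does one additional step after yours: among all $\ell$-qubit subsets of $[n]$, it picks the subset $T_\ell$ on which the learned coefficients carry maximal weight, and zeros out every coefficient outside $T_\ell$. The resulting $\widetilde{O}_{P_i}^{(\ell)}$ is genuinely supported on $\ell = O(\log^d n)$ qubits, so the sewing block acts nontrivially on only $\ell+1$ qubits and its SVD (hence the unitary projection $\mathrm{Proj}_U$) can be computed in $2^{O(\ell)} = n^{\mathrm{polylog}\,n}$ time. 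That support-restriction step, together with the low-support concentration lemma justifying it, is the missing ingredient in your plan.
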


\noindent Note that our procedure makes use of classical shadow tomography \citep{huang2020predicting} and, thus, falls within the standard quantum-PAC learning model. However, we use shadow tomography to learn the Heisenberg-evolved single-qubit Pauli observables of $U$, meaning we require query access to both $U$ and $U^\dagger$, with stabilizer states as input and Pauli-basis output measurements.

We also prove the following unitary synthesis result, which makes progress towards a quasi-polynomial time proper learning algorithm for \QACZ~circuits. Namely, we show that the learned unitary can be synthesized in \QAC, or the class of all polylogarithmic-depth quantum circuits consisting of arbitrary single-qubit gates and many-qubit $CZ$ gates.

\begin{proposition}[Learning \QAC~implementations of \QACZ] \label{thm:uni_synthesis_intro}
    Given the learned unitary $U$ which is $1/\poly(n)$-close to \QACZ~circuit $C$, there exists a quasi-polynomial time algorithm to learn a \QAC~circuit implementing unitary $U^*$ such that $\davg(U^*,C \otimes C^\dagger) \leq 1 /\poly(n)$.  
\end{proposition}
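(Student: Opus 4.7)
The plan is to convert the description of $U$ output by the main learning algorithm into an explicit polylogarithmic-depth \QAC~circuit. Recall from \Cref{thm:main_result_intro} that the algorithm specifies $U$ through estimates $\hat{O}_{P_i}$ of the Heisenberg-evolved single-qubit Paulis of $C \otimes C^\dagger$, and, by the low-degree Pauli concentration of \QACZ, each $\hat{O}_{P_i}$ is well-approximated by a Pauli operator of weight $k = O(\log^d n) = \poly\log(n)$ supported on a lightcone $L_i \subseteq [2n]$.

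The first step is to consolidate the $\Theta(n)$ lightcones $\{L_i\}$ into blocks $B_1,\dots,B_r$ so that each $L_i$ is contained in a single block, by clustering the connected components of the lightcone overlap graph. A careful merging rule, combined with the structural fact that only $\poly\log(n)$ heavy lightcones are incident to any given qubit, is needed to guarantee that every $B_j$ has size at most $s = \poly\log(n)$. For each block I would then solve for a unitary $V_j$ supported on $B_j$ that reproduces the learned Heisenberg evolution of every single-qubit Pauli inside $B_j$; since $|B_j| \leq \poly\log(n)$, this reduces to a linear-algebraic fit in the $4^{|B_j|} \leq n^{\poly\log n}$-dimensional Pauli basis of the block, solvable in quasi-polynomial time.

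Next, each block unitary $V_j$ is compiled into a \QAC~sub-circuit of depth $\poly\log(n)$ by decomposing it into Pauli rotations $e^{i\theta P}$, each of which reduces (via a single-qubit Clifford conjugation to the all-$Z$ basis) to one many-qubit $CZ_k$ gate flanked by a single-qubit phase rotation. Because the blocks are disjoint, all $V_j$ execute in parallel, yielding a final circuit $U^* \in \QAC$ of overall depth $\poly\log(n)$. Correctness then follows from the triangle inequality, $\davg(U^*, C \otimes C^\dagger) \leq \davg(U^*, U) + \davg(U, C \otimes C^\dagger) \leq 1/\poly(n)$, where the first term is controlled by the lightcone-truncation error inherited from the Pauli concentration bound and the second is the guarantee of \Cref{thm:main_result_intro}.

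The main obstacle is the block-consolidation step. A naive merging of overlapping lightcones could in principle chain them into a single giant component, destroying both the per-block size bound and the parallel-execution depth bound. The key structural claim--that the lightcone-incidence graph of a \QACZ~circuit has poly-logarithmically bounded local complexity--is the non-trivial content to be established, leveraging the Pauli concentration of \QACZ~beyond the way it was used in \Cref{thm:main_result_intro}. Once this structural step is in hand, the remaining linear-algebraic fitting and gate-level compilation are routine.
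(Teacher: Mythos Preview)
Your proposal has a genuine gap at exactly the point you flagged as the main obstacle: the block-consolidation step cannot be made to work, and the structural fact you invoke does not rescue it. Even granting that the lightcone overlap graph has $\poly\log(n)$-bounded degree (each qubit lies in only $\poly\log(n)$ lightcones), this says nothing about the size of the connected components. A chain of lightcones $L_1=\{1,2\}$, $L_2=\{2,3\}$, \dots, $L_{n-1}=\{n-1,n\}$ has degree~$2$ yet a single component of size~$n$. Thus ``bounded local complexity'' does not yield small blocks, and any scheme that insists on disjoint blocks containing whole lightcones is doomed.

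The paper sidesteps this entirely: rather than merging lightcones into disjoint blocks, it keeps the $n$ unitaries $W_i$ separate and uses greedy graph coloring on the overlap graph. Bounded degree \emph{does} give a $\poly\log(n)$ coloring, and within each color class the $W_i$ have pairwise disjoint supports and can be executed in parallel. This yields $\poly\log(n)$ layers rather than $\poly\log(n)$-sized blocks. Your compilation step also falls short: decomposing a generic unitary on a $\poly\log(n)$-qubit block into Pauli rotations produces $4^{\poly\log n}$ gates, which is quasi-polynomial depth, not $\poly\log$. The paper instead exploits that each $W_i$ is $1/\poly(n)$-close to an actual depth-$d$ \QACZ\ circuit $V_i^* = \widetilde{L}_i^\dagger S_i \widetilde{L}_i$, builds a quasi-polynomial-size $\varepsilon$-net over such circuits (enumerating $CZ$-gate configurations and netting the single-qubit rotations), and brute-force searches the net to find a \emph{constant-depth} circuit $C_i^*$ approximating $W_i$. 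Combining constant depth per $W_i$ with $\poly\log(n)$ color classes gives the $\poly\log(n)$ total depth.
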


Our work builds upon and extends the results of \cite{nadimpalli2023pauli} and \cite{huang2024learning}. While \cite{nadimpalli2023pauli} studied the Choi representations of single-output \QACZ~channels, \cite{huang2024learning} focused on learning and ``sewing'' Heisenberg-evolved single-qubit Pauli observables of \QNCZ~circuits. To combine these techniques, we establish a connection between the Choi representations of single-output \QACZ~channels and Heisenberg-evolved single-qubit Pauli observables of \QACZ~circuits. We re-prove several key results from \cite{nadimpalli2023pauli} for \QACZ~Heisenberg-evolved single-qubit Pauli observables, establishing their low-degree concentration. Furthermore, we make the key observation that \QACZ~observables are not just low-degree concentrated, but are in fact low-support concentrated---i.e. they are ($\poly\log n$)-juntas, with non-zero Pauli terms supported on only $\poly \log n$ of the $n$ qubits. This strengthened concentration result is critical in achieving the accuracy guarantees and efficient computational complexity of our learning algorithm. Overall, we believe that these concentration results for Heisenberg-evolved Pauli observables are of independent interest beyond the learning algorithm studied in this work.

Our algorithm improves upon both \cite{huang2024learning} and \cite{nadimpalli2023pauli} in several key ways. Foremost, we successfully generalize the learning procedure of \cite{huang2024learning} to circuits with \emph{many-qubit} gates, a setting to which their proof techniques do not directly apply. This generalization is made possible by the new observable concentration results, which show that learning observables with \emph{restricted support} suffices to approximate \QACZ~observables. The restricted support of these learned observables is crucial in obtaining the learning algorithm's efficient sample and time complexity. In contrast to \cite{nadimpalli2023pauli}, which focuses on learning the low-degree approximation of the \emph{Choi representation} of a \emph{single-output} \QACZ~channel (necessitating the tracing out of $n-1$ qubits), our algorithm accomplishes efficient learning of the \emph{unitary} of the \emph{entire $n$-output} \QACZ~circuit. Most importantly, our algorithm achieves quasi-polynomial sample and time complexity, whereas \cite{nadimpalli2023pauli} attains quasi-polynomial sample complexity, but suffers an exponential time complexity.

While our algorithm represents a significant advancement in learning \QACZ~circuits, it has two key limitations compared to previous work. A notable drawback of our learning algorithm compared to that of \cite{huang2024learning} is that ours only offers an average-case learning guarantee, not a worst-case guarantee\footnote{\cite{huang2024learning}'s worst-case guarantee implies accurate outputs for \emph{any} input. In contrast, average-case learning guarantees accurate outputs for \emph{most} inputs.}. However, we observe that the \cite[Proposition 3]{huang2024learning} lower-bound for hardness of learning logarithmic-depth quantum circuits to diamond-norm distance also applies to \QACZ~circuits. Thus, we justify our use of an average-case measure by showing that no algorithm can efficiently worst-case learn arbitrary \QACZ~circuits:

\begin{proposition}[Hardness of learning \QACZ] \label{thm:hardness_intro}
Learning an unknown $n$-qubit unitary $U$, generated by a \QACZ~circuit, to $\frac{1}{3}$ dimond distance with high probability requires $\exp(\Omega(n))$ queries.
\end{proposition}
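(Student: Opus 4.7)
The plan is to reduce the claim to the query lower bound of \cite[Proposition 3]{huang2024learning}, which shows that learning unknown polylog-depth quantum circuits to $\tfrac{1}{3}$ diamond distance requires $\exp(\Omega(n))$ queries. The key observation, as alluded to in the introductory discussion, is that the hard-instance ensemble used in that proof (or a minor variant of it) can be realized inside \QACZ, so the same lower bound applies verbatim.

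\textbf{Step 1: Exhibit a hard ensemble in \QACZ.} I would identify a family $\mathcal{F} \subseteq \QACZ$ of cardinality $\exp(\Omega(n))$ whose elements are pairwise at diamond distance at least $\tfrac{2}{3} + \Omega(1)$ and which remain indistinguishable to any query-bounded learner. A natural candidate is the family of Boolean function oracles $U_f\ket{x}\ket{y} = \ket{x}\ket{y \oplus f(x)}$, where $f$ ranges over a suitable code of \ACZ-computable functions $f: \{0,1\}^n \to \{0,1\}$. Each $U_f$ lies in \QACZ~via the standard compute-copy-uncompute construction, simulating the \ACZ~circuit for $f$ by a constant-depth layer of many-qubit $CZ$ gates and arbitrary single-qubit gates acting on ancillas. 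Alternatively, one may use phase oracles $V_f\ket{x} = (-1)^{f(x)}\ket{x}$ with the same \ACZ-computable phases, or a random ensemble of shallow Clifford circuits built from many-qubit $CZ$ gates. Since the number of \ACZ-computable functions is $2^{\poly(n)}$, there is ample room to carve out a sub-ensemble inheriting the size, packing, and indistinguishability properties of the Huang et al.\ construction.

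\textbf{Step 2: Apply the query lower bound.} Any algorithm that outputs a unitary $\widehat U$ with $\|\widehat U - U\|_\diamond \leq \tfrac{1}{3}$ and constant success probability can, by the triangle inequality, be post-processed into a distinguisher for a uniformly random element of $\mathcal{F}$. The hybrid/adversary argument used in \cite[Proposition 3]{huang2024learning} then implies that distinguishing random members of $\mathcal{F}$ requires $\exp(\Omega(n))$ queries, giving the stated bound. Because $\mathcal{F}\subseteq \QACZ$, this transfers to the \QACZ~learning problem.

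\textbf{Main obstacle.} The crux is Step 1: constructing an ensemble inside \QACZ~that simultaneously satisfies (i) pairwise diamond distance $>\tfrac{2}{3}$, (ii) exponential cardinality, and (iii) small per-query distinguishing information. The tension between (i) and (iii) is the essential subtlety---ensembles whose elements are pairwise well-separated in diamond distance are often easy to distinguish with few queries, while query-hard ensembles tend to cluster. The resolution, inherited from the Huang et al.\ construction, is to choose the ensemble so that the query complexity is driven by the \emph{global} combinatorial structure (the number of candidates and the spread of query responses) rather than by pairwise distinguishability. Once such an ensemble is exhibited in \QACZ, the remaining pieces---implementability verification via \ACZ-to-\QACZ~simulation with many-qubit $CZ$ gates, and the off-the-shelf adversary bound---are routine.
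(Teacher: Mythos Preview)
Your high-level strategy---exhibit a hard ensemble inside \QACZ\ and invoke the lower bound from \cite{huang2024learning}---is the same as the paper's, but you overcomplicate Step~1 and one of your candidate constructions is actually problematic.

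The paper's resolution of Step~1 is a single line: the hard ensemble already used in \cite[Proposition~3]{huang2024learning} is the family of Grover phase oracles $U_x = I - 2\ketbra{x}{x}$ for $x\in\{0,1\}^n$, and each $U_x$ is literally a \QACZ\ circuit,
\[
U_x = \Bigl(\prod_{i:\,x_i=0} X_i\Bigr)\, CZ_{[n]}\, \Bigl(\prod_{i:\,x_i=0} X_i\Bigr),
\]
i.e.\ a single $n$-qubit $CZ$ conjugated by a layer of $X$ gates. No \ACZ-to-\QACZ\ simulation, no compute-copy-uncompute, no ancillas. The Grover lower bound \citep{bennett1997strengths} then gives $\Omega(2^{n/2})$ queries directly, since any learner achieving $\tfrac{1}{3}$ diamond distance would distinguish $I$ from any $U_x$.

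Your proposed route through ``simulating the \ACZ\ circuit for $f$ by a constant-depth layer of many-qubit $CZ$ gates'' is not obviously sound: \ACZ\ has unbounded fan-out for free, but whether fan-out is in \QACZ\ is precisely the open \QACZ\ vs.\ \QACZf\ question the paper highlights in the introduction. So that candidate cannot be taken on faith. Your second candidate, phase oracles $V_f\ket{x}=(-1)^{f(x)}\ket{x}$, does work once you specialize to the point functions $f(y)=\mathbf{1}[y=x]$---but that is exactly the Grover oracle above, and you never make that specialization. The elaborate discussion of packing, cardinality, and the tension between (i) and (iii) is unnecessary here: the Grover family already has all the required properties, and the adversary argument is just the standard search lower bound.
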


\noindent Furthermore, it was not immediately obvious that the worst-case learning guarantees of \cite{huang2024learning} would translate to the average-case setting for \QACZ. The proof of these average-case guarantees was a substantial and notable contribution of our work, elaborated in \Cref{sec:learn}.

Another limitation of our learning algorithm, compared to \cite{huang2024learning}, is that it only works for \QACZ~circuits with a logarithmic ancilla qubits. Note, however, that this ancilla constraint matches that of the learning algorithm in \cite[Theorem 39]{nadimpalli2023pauli}. This constraint arises from the nature of many-qubit \QACZ~gates, which allow light-cones to rapidly encompass a large number of ancilla qubits, even in constant depth. In contrast, the constant-width gate setting of \cite{huang2024learning} ensures that at most a constant number of ancillas are in any output qubit's light-cone, implying that most ancilla qubits do not affect the computation.
Our algorithm's reliance on light-cone style arguments about the support of Heisenberg-evolved observables necessitates restricting the number of ancilla qubits to prevent error blow-up. Moreover, as noted in \cite{nadimpalli2023pauli}, the pre-specified input value of ancilla qubits poses a challenge for average-case arguments and removal of large-$CZ_k$ gates.
To make progress towards the many-ancilla case, we show that if the following conjecture, which is a strengthening of \cite[Conjecture 1]{nadimpalli2023pauli}, is proven true, then our learning procedure for pollynomially-many ancilla qubits:

\begin{conjecture}[Ancilla-independent low-support concentration] \label{conj:anc_conc}   Suppose $C$ is a depth-$d$, size-$s$ \QACZ~circuit performing clean computation $C (I \otimes \ket{0^a}) = A \otimes \ket{0^a}$ on $n+a$ qubits, for $\poly(n)$ ancilla $a$. Let $\obsared = A (P_i \otimes I^a) A^\dagger$ be a Heisenberg-evolved single-qubit Pauli observable with ancilla restriction.
For support $\calS$ such that $|\calS|=k^d$, the weight outside $\calS$ is bounded as
    \begin{align}
        \wt^{\notin\calS}[\obsared] \leq \poly(s) \cdot 2^{-\Omega(k^{1/d})}.
    \end{align}
\end{conjecture}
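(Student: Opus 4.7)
The strategy is to extend the low-support concentration result (which the main text presumably establishes for logarithmic ancilla) to the polynomial-ancilla setting by exploiting the clean-computation constraint $C(I \otimes \ket{0^a}) = A \otimes \ket{0^a}$ quantitatively. First, write the unrestricted Heisenberg-evolved observable in the Pauli basis, $C(P_i \otimes I^a) C^\dagger = \sum_{Q \in \mathcal{P}^{n+a}} \alpha_Q\, Q$, and express the ancilla-restricted observable as $\obsared = \bra{0^a} C(P_i \otimes I^a) C^\dagger \ket{0^a}$. Because $\bra{0}X\ket{0} = \bra{0}Y\ket{0} = 0$, only those $Q$ whose ancilla tensor-factor is a $Z$-type Pauli survive the projection, and the overall $L^2$ norm is preserved by unitary conjugation. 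So the task reduces to showing that the mass contributed by $Z$-restricted $Q$ with $|\supp(Q)| > k^d$ is at most $\poly(s) \cdot 2^{-\Omega(k^{1/d})}$.

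My plan would then be a layer-by-layer random-restriction argument in the spirit of \cite{nadimpalli2023pauli}. Decompose $C = U_d \cdots U_1$ by depth layers and track, after each layer, how much Pauli mass sits on supports of size exceeding $k^{j}$ after $j$ layers. Crucially, the clean-computation constraint should be used as a \emph{global} selection rule: any Pauli term $Q$ that survives the final ancilla projection must have its ancilla factor be $Z$-type, and since $C$ stabilizes $\ket{0^a}$ on the ancilla register, these surviving terms must arise from coherent cancellations among the many $Q$ produced by the spreading $CZ_k$ gates. If one can convert this global cancellation into a per-layer attenuation of $2^{-\Omega(k^{1/d})}$ for Pauli mass on large supports, the conjecture follows by composing $d$ layers and using a union bound over the polynomially many starting Paulis.

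The main obstacle, and the reason this is left open, is the unbounded fan-in of the $CZ_k$ gates combined with the global (rather than local) nature of the clean-computation constraint. A single $CZ_k$ gate touching many ancilla can spread a constant-support observable to support $\Omega(n)$ in one layer, and ordinary light-cone accounting gives no nontrivial bound once $a = \poly(n)$. The only damping mechanism available is algebraic: the ancilla projection must wipe out most of the spread. But making this quantitative requires a structural lemma of roughly the form ``in any \QACZ~circuit performing clean computation on the ancilla, every large-$CZ_k$ gate can be replaced, up to small error in $\davg$, by one whose ancilla support is $\poly \log n$.'' Such a lemma would essentially reduce the polynomial-ancilla case to the logarithmic-ancilla case already handled, but proving it seems to demand new tools beyond the currently known Pauli-spectrum and switching-lemma techniques for \QACZ.

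As a partial attack, I would first try the conjecture under the auxiliary hypothesis that each ancilla qubit touches only $\poly \log n$ large-$CZ_k$ gates, where a hybrid light-cone plus random-restriction argument should go through; and separately try a relaxed version in which the clean-computation identity is only required to hold on average over Haar-random inputs, which may admit a softer Fourier-analytic proof. Either partial result would sharpen the picture of where the genuine difficulty lies and either suggest a path to the full conjecture or reveal a concrete obstruction.
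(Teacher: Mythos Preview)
The statement you are trying to prove is a \emph{conjecture} in the paper, not a theorem. The paper offers no proof; it explicitly presents this as an open problem and explains why its own techniques fail once $a$ exceeds $\calO(\log n)$. So there is nothing to compare your attempt against, and your write-up is not a proof either---it is (as you yourself acknowledge in the third paragraph) a research plan that ends by identifying the same obstruction the paper does.

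That said, a concrete gap in your first paragraph is worth flagging. You write that after projecting onto $\ket{0^a}$ ``only those $Q$ whose ancilla tensor-factor is a $Z$-type Pauli survive,'' and then conclude that the task reduces to bounding the mass of such $Q$ with large support. This is not correct: the Pauli coefficient of $\obsared$ at an $n$-qubit Pauli $S$ is the \emph{sum} $\sum_{R\in\{I,Z\}^{\otimes a}}\widehat{\obsanc}(S\otimes R)$, not a single surviving term. Cauchy--Schwarz on that sum is exactly what produces the $2^a$ blow-up the paper records; the unrestricted $L^2$ mass being bounded by $1$ does not by itself control $\wt^{\notin\calS}[\obsared]$. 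So your reduction in paragraph one already loses the very factor the conjecture is asking you to avoid. Any genuine attack must exploit cancellation in those sums---which is essentially what you say later when you invoke the clean-computation constraint---but the opening reduction as stated is false.

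Your diagnosis of the difficulty (unbounded fan-in plus a global rather than local cleanness constraint) matches the paper's, and your proposed structural lemma (replace large-$CZ_k$ gates by ones with polylog ancilla support, up to small $\davg$ error) would indeed suffice. But neither you nor the paper has a route to proving it, and the partial attacks you list remain speculative. In short: reasonable discussion of an open problem, but not a proof, and the first-paragraph ``reduction'' is where a real attempt would have to do something nontrivial rather than something false.
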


\begin{corollary}[Learning \QACZ~with polynomial ancilla] \label{thm:poly_anc}
    Assume Conjecture~\ref{conj:anc_conc} holds.
    With quasi-polynomial samples and time, we can learn $2n$-qubit unitary $A_\text{sew}$, with high probability, such that
    \begin{align}
        \davg(A_\text{sew},A \otimes A^\dagger) \leq 1 / \poly (n).
    \end{align}
\end{corollary}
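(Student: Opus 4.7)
The plan is to re-run the argument for \Cref{thm:main_result_intro} essentially verbatim, substituting our Conjecture~\ref{conj:anc_conc} wherever the proof for the logarithmic-ancilla case invoked the (provable) low-support concentration of Heisenberg-evolved Pauli observables. Concretely, because $C$ performs clean computation $C(I \otimes \ket{0^a}) = A \otimes \ket{0^a}$, the only object our learner needs to approximate is the ancilla-free unitary $A$; all queries to $C$ can be interpreted as queries to $A$ once the ancilla register is initialized to $\ket{0^a}$ and post-selection on $\ket{0^a}$ is absorbed into the accounting (which we may do since $C$ is clean). The learned single-qubit Heisenberg-evolved observables are precisely $\obsared = A(P_i \otimes I^a) A^\dagger$, which is the object controlled by \Cref{conj:anc_conc}.

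First, I would fix the truncation parameter by choosing $k = (\log n)^{c}$ for a sufficiently large constant $c = c(d)$, so that \Cref{conj:anc_conc} gives $\wt^{\notin\calS}[\obsared] \leq \poly(s) \cdot 2^{-\Omega(k^{1/d})} \leq 1/\poly(n)$ for the chosen target accuracy; note that $\poly(s) = \poly(n)$ since \QACZ~circuits are polynomial-sized. This $k$ is the same polylogarithmic support bound used in the proof of \Cref{thm:main_result_intro}, so the sample- and time-complexity bookkeeping is identical: the per-observable shadow-tomography cost is $n^{O(k)} = n^{\log^{O(d)} n}$, there are $3n$ Pauli observables to learn (one per qubit, per Pauli axis), and by a union bound the overall complexity remains $O(n^{\log^d n} \log(1/\delta))$ after adjusting constants.

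Next I would identify, for each $i$, a candidate support region $\calS_i$ of size $k^d$ on which $\obsared$ is concentrated (this can be done by first learning a coarse approximation of $\obsared$ on a grown light-cone, then pruning to its significant support as in the logarithmic-ancilla proof), and use classical shadows on stabilizer inputs with Pauli-basis outputs to produce an estimator $\widehat{O}_{P_i}$ supported on $\calS_i$ that is $1/\poly(n)$-close to $\obsared$ in Frobenius norm. The sewing step of \cite{huang2024learning}, as adapted in the main learning theorem of this paper, then stitches the $3n$ estimators $\{\widehat{O}_{P_i}\}$ into a $2n$-qubit unitary $A_\text{sew}$ acting on the system register and a reverse-copy register, realizing an approximation to $A \otimes A^\dagger$. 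The average-case error is then controlled term-by-term via the standard inequality relating $\davg$ to the sum of Frobenius-norm errors on the Heisenberg-evolved single-qubit Paulis, where each term contributes both a truncation error (bounded by \Cref{conj:anc_conc}) and a shadow-estimation error (bounded by the sample complexity).

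The main obstacle, and the reason the corollary is stated conditionally, is exactly the support-concentration bound supplied by \Cref{conj:anc_conc}: the light-cone-based arguments used to prove low-support concentration in the logarithmic-ancilla regime do not survive when $a = \poly(n)$, because many-qubit $CZ_k$ gates can rapidly mix a polynomial number of ancillas into the light-cone of a single output qubit, and the pre-specified $\ket{0^a}$ initialization obstructs the usual average-case reductions for removing large-width gates. Once the conjecture is granted, however, every other ingredient---the reduction from $C$ to the ancilla-free $A$, classical-shadow estimation of low-support observables, the sewing procedure, and the conversion of per-observable Frobenius error into an $\davg$ bound---transfers without modification from the proof of \Cref{thm:main_result_intro}, and the stated quasi-polynomial sample and time bounds follow.
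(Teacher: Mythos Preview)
Your proposal is correct and follows essentially the same approach as the paper: plug the conjectured ancilla-independent low-support concentration bound into the learning-error decomposition of \Cref{thm:learn_dist_intro} in place of the provable bound \Cref{thm:weight_ub_support_intro}, then observe that the remaining steps (shadow-tomography on polylogarithmically many qubits, the max-weight support selection of \Cref{alg:learning_alg}, and the sewing bound of \Cref{thm:csew_bound_intro}) carry over verbatim with the same quasi-polynomial complexity. The paper's own proof is slightly terser but makes exactly this substitution, choosing $\gaterem = c\log(n+a)$ with $a=\poly(n)$ so that the conjectured bound $\poly(s)\cdot 2^{-\Omega(k^{1/d})}$ becomes $1/\poly(n)$.
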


\noindent This corollary suggests a potential pathway for handling a broader class of \QACZ~circuits.

\section{Prior Work}
Sample-efficient learning algorithms have been developed for a wide range of quantum processes. For example, studies of generalization in quantum machine learning models have yielded sample-efficient algorithms for average-case learning of polynomial-size quantum circuit unitaries \citep{caro2022generalization, caro2023out, zhao2023learning}. Additionally, \cite{huang2022quantum} established sample-efficient average-case learning algorithms for polynomial-complexity channels. These results leverage techniques such as shadow tomography \citep{aaronson2018shadow, buadescu2021improved, huang2020predicting} to achieve sample efficiency. However, while the algorithms are sample-efficient, they often require predicting exponentially many observables and, thus, are not computationally efficient for the full learning task. To understand whether this is a limitation of all algorithms, \cite{zhao2023learning} established quantum computational hardness for average-case learning of polynomial-sized quantum circuits, assuming quantum hardness of learning with errors \citep{regev2009lattices}.

Computational efficiency necessitates focusing on specific prediction tasks or more structured families of unitaries and channels. For instance, \cite{huang2023learning} and \cite{chen2024predicting} provided a quasi-polynomial time average-case learning algorithm for accurate prediction of arbitrary $n$-qubit channels. These algorithms are efficient when predicting a given observable on the output of the unknown quantum channel, for most input states.
Restricting to certain families of unitaries and channels is another common strategy to attain computational efficiency. Computationally-efficient learning algorithms have been developed for various families such as unitaries generated by short-time Hamiltonian dynamics \citep{yu2023robust, huang2023learning, bakshi2024structure, haah2024learning, stilck2024efficient}, Pauli channels under local or sparsity constraints \citep{flammia2020efficient, flammia2021pauli, chen2022quantum, chen2023learnability, caro2024learning, chen2024tight}, Clifford circuits with limited non-Clifford gates \citep{lai2022learning, grewal2023efficient, leone2024learning, grewal2024improved, du2024efficient}, and quantum juntas \citep{chen2023testing, bao2023nearly}. Especially relevant to our work are \citep{nadimpalli2023pauli}'s algorithm for learning single-output \QACZ~channels and \cite{huang2024learning}'s algorithm for learning shallow quantum circuits comprised of constant-width gates (\QNCZ), which we now describe in more detail.

\paragraph{Learning shallow quantum circuits.}
\cite{huang2024learning} gave the first polynomial sample- and time-complexity algorithm for learning unitaries of shallow quantum circuits with constant-width gates (\QNCZ). The key insight was that, although these circuits can generate highly non-local and classically-hard output distributions, the associated unitaries can be efficiently reconstructed (to high-accuracy) via the local light-cones of each output qubit. Namely, for an $n$-qubit shallow circuit $U$, they developed an efficient algorithm for learning the $3n$ Heisenberg-evolved single-qubit Pauli observables $O_{P_i} = U P_i U^\dagger$. They proved that the learned $\widetilde{O}_{P_i}$ observables are close to the true $O_{P_i}$ observables, with respect to the operator norm, and can be efficiently ``sewed'' into unitary
\begin{align}
    U_\text{sew} = \textnormal{SWAP}^{\otimes n} \prod_{i=1}^n \left[\proj_\infty\left(\frac{1}{2} I\otimes I+\frac{1}{2}\sum_{P\in\{X,Y,Z\}} \widetilde{O}_{P_i} \otimes P_i\right) \right] ,
\end{align}
where $\proj_\infty$ is the projection onto the unitary minimizing the operator norm distance with respect to the input matrix and $\textnormal{SWAP}^{\otimes n}$ is the SWAP operation between the top and bottom $n$ qubits. Finally, they proved that sewn unitary $ U_\text{sew}$ is close to the true unitary $U \otimes U^\dagger$ under diamond distance. Note that the algorithm's sample and time efficiency rely heavily on the constant-sized support of observables $O_{P_i}$ and  $\widetilde{O}_{P_i}$. However, \QACZ~observables can have polynomial-sized support.

\paragraph{On the Pauli Spectrum of \QACZ.}
\cite{nadimpalli2023pauli} achieved a quantum analog of \cite{linial1993constant}'s Fourier concentration bound for \ACZ. For an $n$-qubit \QACZ~circuit implementing unitary $C$, they defined single-qubit output channel $\calE_C (\rho) = \Tr_{n-1} \left(C \rho C^\dagger\right)$,
where $\rho$ is an $n$-qubit density matrix and $\Tr_{n-1}$ denotes the partial trace over $n-1$ of the qubits, leaving a singular designated output qubit. Via the standard Choi-Jamiołkowski isomorphism, they mapped this channel into a Choi representation, decomposable into the Pauli basis as 
\begin{align}
    \Phi_{\calE_C} = \left(I^{\otimes n} \otimes \calE_C \right) \ketbra{\text{EPR}_n}{\text{EPR}_n}= \sum_{P \in \{I,X,Y,Z\}^{\otimes n}} \widehat{\Phi}_{\calE_C}(P) \cdot P,
\end{align}
where $\ket{\text{EPR}_n} = \sum_{x \in [n]} \ket{x} \otimes \ket{x}$ denotes the un-normalized Bell state and  $\widehat{\Phi}_{\calE_C}(P)$ are the Pauli (Fourier) coefficients\footnote{For more background on Pauli analysis, we refer the reader to \Cref{sec:prelims}.}. They proved that $\Phi_{\calE_C}$ is low-degree concentrated by showing that the Pauli weight for all Paulis $P$ of degree $>k$, denoted $\wt^{>k}\left[\Phi_{\calE_C}\right]$, decays exponentially in $k$. Formally, if size $s$ denotes the total number of $CZ_k$ gates in the circuit, $|P|$ denotes the degree\footnote{The degree of a Pauli string is the number of qubits upon which the Pauli operator $P$ acts non-trivially.}, and $d=\mathcal{O}(1)$ denotes the circuit depth,
\begin{align}
    \wt^{>k}\left[\Phi_{\calE_C}\right]=\sum_{|P| > k} \left|\widehat{\Phi}_{\calE_C}(P)\right|^2 \leq \calO \left(\frac{s^2}{2^{k^{1/d}}}\right).
\end{align}

Since  $\Phi_{\calE_C}$ is low-degree concentrated, \cite{nadimpalli2023pauli} learned $\Phi_{\calE_C}$'s low-degree Pauli coefficients, i.e. $\widetilde{\Phi}_{\calE_C}(P)$ for all $P \in \calF = \{Q \in \pauli^n: |Q| < \poly\log(n)\}$, to construct approximate Choi representation $\widetilde{\Phi}_{\calE_C} = \sum_{P \in \calF} \widetilde{\Phi}_{\calE_C}(P) \cdot P$ and proved that learned matrix $\widetilde{\Phi}_{\calE_C}$ is close to the true Choi representation $\Phi_{\calE_C}$ (according to the normalized Frobenius distance). This resulted in the first quasi-polynomial sample-complexity algorithm for learning single-output \QACZ channels. However, since projecting the learned matrix onto a valid Choi representation requires solving a exponential-dimensional semi-definite program, the runtime is exponential.

\section{Proof Overview}

At a high level, our algorithm for learning $n$-qubit \QACZ~circuits is a nontrivial synthesis of \cite{nadimpalli2023pauli}'s \QACZ~concetration results and \cite{huang2024learning}'s sewing technique. Our main technical contribution is bridging the high-level concepts introduced in the two works and developing an efficient end-to-end learning procedure with provable guarantees. We will now provide a high-level overview of and intuition for the key results and techniques introduced in this work. 

\paragraph{Choi Representations and Heisenberg-Evolved Observables.}
While \cite{nadimpalli2023pauli} proved low-degree concentration of the \emph{Choi representation} of single-output channels, the \cite{huang2024learning} learned low-degree approximations of single-qubit Heisenberg-evolved Pauli \emph{observables}. In \Cref{sec:choi_obs_relate}, we build a translator between these two quantum primitives for \QACZ. To summarize, while the Choi representation encodes information about the \QACZ~channel with respect to any input and measurement observable, the Heisenberg-evolved observable restricts the channel with respect to a single measurement observable. Mathematically, for a \QACZ~circuit $C$, with single-output channel $\calE_C$, Choi representation $\Phi_{\calE_C}$, and single-qubit measurement observable $O$, the two representations are related via the dual-channel as
\begin{align} 
    \calE_C^* (O)= C^\dag(I_{n-1}\otimes O) C=\Tr_\text{out}\left((O_\text{out}\otimes I_\text{in}) \Phi_{\calE_C}\right)^\top.
\end{align}

\paragraph{Concentration of \QACZ~Heisenberg-Evolved Observables.}
Either the Choi representation or Heisenberg-evolved observable perspective could have been suitable for our work, but we chose observables due to their simpler mathematical nature and because learning requires circuit measurements with respect to select observables. This choice enables straightforward use of \cite{huang2024learning}'s Heisenberg-evolved observable sewing procedure, but means some translational work is necessary to leverage \cite{nadimpalli2023pauli}'s Choi representation concentration result. Thus, the first contribution of our work is a proof similar to \cite[Theorem 21]{nadimpalli2023pauli}, establishing low-degree concentration of \QACZ~Heisenberg-evolved single-qubit Pauli observables. 
\begin{proposition}[Low-degree concentration] \label{thm:low_deg_conc_intro}
    For depth-$d$, size-$s$ \QACZ~circuit $C$ acting on $n$ qubits and Heisenberg-evolved single-qubit Pauli observable $O_{P_i} = C^\dagger P_i C$, for every degree $k \in [n]$:
    \begin{align}
        \wt^{>k}[O_{P_i}] = \sum_{|Q|>k} |\widehat{O}_{P_i}(Q)|^2 \leq \calO \left(s^2 2^{-k^{1/d}}\right).
    \end{align}
\end{proposition}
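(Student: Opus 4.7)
The plan is to reduce the claim to the analogous concentration bound on the Choi representation $\Phi_{\calE_C}$ of the single-output channel $\calE_C(\rho)=\Tr_{n-1}(C\rho C^\dagger)$, which is already established as \cite[Theorem 21]{nadimpalli2023pauli}. The translator between Heisenberg-evolved observables and Choi representations developed in \Cref{sec:choi_obs_relate} supplies the required bridge, so the work is essentially to verify that the correspondence preserves Pauli degree.

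First I would relate $\hat{O}_{P_i}(Q)$ to the Pauli coefficients $\hat{\Phi}_{\calE_C}(R)$ on the $(n{+}1)$-qubit Choi register. Starting from the dual-channel identity $O_{P_i}=\calE_C^*(P_i)=\Tr_\text{out}\left((P_i\otimes I_\text{in})\Phi_{\calE_C}\right)^\top$, expanding $\Phi_{\calE_C}$ in the Pauli basis, and carrying out the partial trace against $P_i$ on the output register, only the Paulis whose output component equals $P_i$ contribute. Combining $\Tr(M^\top Q)=\Tr(MQ^\top)$ with the transpose identity $Q^\top=(-1)^{\#Y(Q)} Q$ and the standard normalization of Pauli coefficients yields
\[
\hat{O}_{P_i}(Q) = 2(-1)^{\#Y(Q)} \, \hat{\Phi}_{\calE_C}(P_i\otimes Q),
\]
so that $|\hat{O}_{P_i}(Q)|^2 = 4 \, |\hat{\Phi}_{\calE_C}(P_i\otimes Q)|^2$.

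Next I would exploit a trivial degree shift: since $P_i$ is a nontrivial single-qubit Pauli, $|P_i\otimes Q| = 1 + |Q|$, so every $Q$ with $|Q|>k$ pairs with a Pauli of Choi-degree strictly above $k$. Summing the identity above and then enlarging the range of summation gives
\[
\wt^{>k}[O_{P_i}] = 4\sum_{|Q|>k}|\hat{\Phi}_{\calE_C}(P_i\otimes Q)|^2 \;\leq\; 4\,\wt^{>k}[\Phi_{\calE_C}] \;\leq\; \calO\!\left(s^2\cdot 2^{-k^{1/d}}\right),
\]
where the last step is the Nadimpalli-Parham bound. Absorbing the constant gives the claimed conclusion.

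The main obstacle I anticipate is bookkeeping rather than anything conceptual: the Choi state lives on $n+1$ qubits via the un-normalized EPR state, the observable lives on $n$ qubits, and the transpose in the dual-channel formula introduces the sign $(-1)^{\#Y(Q)}$. Pinning down the exact proportionality constant (not merely equality up to normalization) is what lets the concentration bound transfer with the correct scaling in $s$. Once this is verified, the concentration structure of $O_{P_i}$ is inherited wholesale from that of $\Phi_{\calE_C}$, with no further combinatorial or analytic work required.
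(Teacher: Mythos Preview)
Your approach is correct and more economical than the paper's: you simply invoke the translator from \Cref{sec:choi_obs_relate} (specifically \Cref{thm:relate_coeffs}) to pull back Nadimpalli--Parham--Thaler--Vasudevan's Choi-representation bound to the observable side, using the trivial degree shift $|P_i\otimes Q|=1+|Q|>k$. The paper instead re-derives the concentration from scratch in observable language, following the same two-step template as \cite[Theorem~21]{nadimpalli2023pauli}: first show that removing all $CZ$ gates of width $>k^{1/d}$ kills the degree-$>k$ weight exactly (\Cref{thm:weight_zero}), then bound the Frobenius perturbation incurred by that removal (\Cref{thm:avg_dist}), and combine via a triangle inequality (\Cref{thm:weight_ub}). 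Your route is shorter and perfectly valid for proving \emph{this} proposition; the paper's longer route is chosen because the intermediate lemmas---especially \Cref{thm:avg_dist} controlling $\frobd(O_{P_i},O_{P_i}^*)$---are indispensable later for the low-\emph{support} concentration (\Cref{thm:pauli_lb}) and the learning error analysis (\Cref{thm:learn_dist}), neither of which can be read off from the Choi-side bound alone.
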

Conceptually, the proof of this result consists of two key steps. First, we show that if the \QACZ~circuit has no $CZ$ gates of width greater than $k^{1/d}$, then the Heisenberg-evolved observable's weight for degree $>k$ is zero. Then, we show that removing these ``large'' $CZ$ gates does not significantly change the Heisenberg-evolved observable under the normalized Frobenius distance measure, i.e.
\begin{lemma}[Large $CZ_k$ removal error] \label{thm:avg_dist_intro}
    For \QACZ~circuit $C$ and \QACZ~circuit $\widetilde{C}$, which is simply $C$ with all $m$ $CZ_k$'s of size $k> \gaterem$ removed, let $O_{P_i}=C P_i C^\dagger$ and $O^*_{P_i} = \widetilde{C} P_i \widetilde{C}^\dagger$ be the Heisenberg-evolved single-qubit Pauli observables. The average-case distance between these observables is
    \begin{align} 
        \frac{1}{2^n} \left\| O_{P_i}-O_{P_i}^*\right\|_F^2 \leq \epsilon^* = \frac{9m^2}{2^{\gaterem}}.
    \end{align}
\end{lemma}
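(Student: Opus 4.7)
The plan is a hybrid argument: remove the $m$ oversized $CZ_k$'s one at a time and bound the cumulative error via the Frobenius triangle inequality. Fix any ordering of the removed gates, let $C_0 := C$, and let $C_j$ be obtained from $C_{j-1}$ by replacing the $j$-th oversized gate by identity (so $C_m = \widetilde C$). Writing $O^{(j)} := C_j P_i C_j^\dagger$, the triangle inequality gives
\begin{align*}
    \|O_{P_i}-O_{P_i}^*\|_F \;\leq\; \sum_{j=1}^m \|O^{(j-1)}-O^{(j)}\|_F,
\end{align*}
so the claim reduces to a per-step bound of order $O\!\left(2^{(n-\gaterem)/2}\right)$ on each summand.

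For a single step, decompose $C_{j-1}=AGB$ and $C_j = AB$, where $G$ is the removed $CZ_k$ (padded with identity on the remaining qubits) and $A,B$ are unitary. Setting $P' := B P_i B^\dagger$, unitary invariance of the Frobenius norm gives $\|O^{(j-1)}-O^{(j)}\|_F = \|GP'G-P'\|_F$, and the crucial algebraic fact is that $G = I - 2\Pi$ with $\Pi := \ket{1^k}\!\bra{1^k}\otimes I_{n-k}$ a projection of rank $2^{n-k}$; that is, $CZ_k$ differs from the identity only on the single basis vector $\ket{1^k}$. Using Hermiticity of $G$, one has $GP'G - P' = [G,P']\,G = -2\,[\Pi,P']\,G$, so by unitary invariance $\|GP'G-P'\|_F^2 = 4\,\|[\Pi,P']\|_F^2$.

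Expanding the commutator with $\Pi^2 = \Pi$, and using that $P'$ is a Hermitian unitary (so $(P')^2 = I$), one obtains
\begin{align*}
    \|[\Pi,P']\|_F^2 \;=\; 2\,\mathrm{Tr}(\Pi) \;-\; 2\,\mathrm{Tr}(\Pi P' \Pi P').
\end{align*}
Writing $P'$ in block form with respect to the range of $\Pi$ identifies the second trace as the squared Frobenius norm of the top-left block of $P'$, which is nonnegative. Hence $\|[\Pi,P']\|_F^2 \leq 2\,\mathrm{Tr}(\Pi) = 2 \cdot 2^{n-k}$, giving $\|GP'G - P'\|_F^2 \leq 8 \cdot 2^{n-k} = O(2^{n-\gaterem})$ since $k > \gaterem$. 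Feeding this into the telescoping triangle inequality and dividing by $2^n$ yields $\frac{1}{2^n}\|O_{P_i}-O_{P_i}^*\|_F^2 \leq O(m^2/2^{\gaterem})$, with careful tracking of the numerical factors recovering the stated constant $9$.

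The only non-routine step is the single-gate Frobenius bound $\|GP'G-P'\|_F^2 = O(2^{n-k})$; everything else is bookkeeping. The key conceptual point is that although $CZ_k$ has operator norm $1$, its deviation from identity is a rank-$2^{n-k}$ (after padding) perturbation, so its conjugation action on any Hermitian unitary is exponentially small in $k$ under the $2^{-n}$-normalized Frobenius metric. Consequently, the cost of removing $m$ such gates grows only as $m^2/2^{\gaterem}$ by triangle inequality, which is precisely what is needed to combine with the no-large-gate result to deduce low-degree concentration in Proposition~\ref{thm:low_deg_conc_intro}.
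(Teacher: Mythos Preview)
Your proof is correct and follows essentially the same approach as the paper: a hybrid/telescoping argument removing one large $CZ_k$ at a time, combined with the key computation that conjugating a Hermitian unitary by $G = I - 2\Pi$ changes it by at most $\sqrt{8\cdot 2^{n-k}}$ in Frobenius norm because $\Pi$ has rank $2^{n-k}$. Your commutator identity $GP'G - P' = [G,P']G$ is a slightly slicker way to reach the same expansion the paper writes out directly, and in fact your constant is $8$ rather than the paper's $9$ (they bound $\sqrt{8} \leq 3$), so ``recovering the stated constant $9$'' is automatic.
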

\noindent Note that this distance bound is critical in proving the algorithm's learning guarantees. Furthermore, as will be discussed shortly, the size $\gaterem$ (of the smallest $CZ_k$ removed) must be carefully selected in order to learn the observable to high-precision, yet efficiently. Beyond showing that $O_{P_i}$ is low-degree concentrated, we also show that it is low-support, by adapting the proof of \Cref{thm:low_deg_conc_intro}.

\begin{lemma}[Low-support concentration] \label{thm:weight_ub_support_intro}
    For $\epsilon^*$ the same as in \Cref{thm:avg_dist_intro} and $\calS^* = \supp(O^*_{P_i})$, the weight of $O_{P_i}$ outside the support of $O_{P_i}^*$ is upper-bounded as
    \begin{align}
        \wt^{\notin\calS^*}[O_{P_i}] = \sum_{Q \in \calS^*} |\widehat{O}_{P_i}(Q)|^2
        \leq \frac{1}{2^n} \left\| O_{P_i}-O^*_{P_i}\right\|_F^2 = \epsilon^*.
    \end{align}
\end{lemma}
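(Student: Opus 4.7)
The approach is a direct application of Parseval's identity for the Pauli basis, combined with the structural fact that $O_{P_i}^*$ has restricted support by construction. The plan is to expand the normalized Frobenius distance on the right-hand side of the claimed inequality into a sum over squared Pauli coefficients, then drop the Paulis lying within $\calS^*$ to get a one-sided bound that exactly matches the left-hand side.

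Concretely, I would first invoke the orthonormality of $\{P/\sqrt{2^n}\}_{P \in \paulis}$ under the Hilbert–Schmidt inner product to write
\begin{align}
    \frac{1}{2^n}\left\|O_{P_i}-O_{P_i}^*\right\|_F^2 \;=\; \sum_{Q \in \paulis} \left|\widehat{O}_{P_i}(Q) - \widehat{O}_{P_i}^*(Q)\right|^2.
\end{align}
Then I would make the key observation that, because $\calS^* = \supp(O_{P_i}^*)$ is exactly the set of qubits on which $O_{P_i}^*$ acts non-trivially, the Pauli expansion of $O_{P_i}^*$ can only contain Paulis $Q$ whose support lies inside $\calS^*$. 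Consequently $\widehat{O}_{P_i}^*(Q) = 0$ for every $Q$ with $\supp(Q) \not\subseteq \calS^*$, so for such $Q$ the summand above collapses to $|\widehat{O}_{P_i}(Q)|^2$.

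Next, I would restrict the Parseval sum to the set of Paulis with $\supp(Q) \not\subseteq \calS^*$ and invoke non-negativity of the discarded terms to obtain
\begin{align}
    \wt^{\notin\calS^*}[O_{P_i}] \;=\; \sum_{\supp(Q) \not\subseteq \calS^*} \left|\widehat{O}_{P_i}(Q)\right|^2 \;\leq\; \frac{1}{2^n}\left\|O_{P_i}-O_{P_i}^*\right\|_F^2,
\end{align}
and finally apply \Cref{thm:avg_dist_intro} to upper-bound the right-hand side by $\epsilon^* = 9m^2/2^{\gaterem}$, which closes the chain.

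There is no real obstacle: the Pauli coefficients of the reference object $O_{P_i}^*$ vanish \emph{exactly} on the set we sum over, so Parseval gives a clean inequality with no error terms to control. All of the nontrivial analytic content—namely, that removing the wide $CZ_k$ gates does not move the observable far in normalized Frobenius distance—was already discharged in \Cref{thm:avg_dist_intro}; the present lemma is essentially a repackaging of that bound from a Frobenius statement into the Pauli-weight language required downstream by the learning algorithm.
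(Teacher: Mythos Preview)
Your proof is correct and follows essentially the same approach as the paper: both rely on the key fact that the Pauli coefficients of $O_{P_i}^*$ vanish outside $\calS^*$, then bound the weight by the normalized Frobenius distance via Parseval. The paper phrases this via a triangle inequality on $\wt^{\notin\calS^*}[\cdot]^{1/2}$ before invoking the vanishing, whereas you expand the Frobenius distance directly; your version is slightly more streamlined but the substance is identical.
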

\noindent For more details about these concentration results and proofs, we refer the reader to \Cref{sec:heis_evolve_obs}.

\paragraph{Efficient Learning of \QACZ~Heisenberg-Evolved Observables.} Although efficient, \cite{huang2024learning}'s  algorithm for learning \QNCZ~Heisenberg-evolved observables relies crucially on the fact that these observables consist only of constant-width gates and, thus, have constant support. In the \QACZ~setting, however, we only have query access to $O_{P_i}$, which can have $\calO(n)$ support. Thus, we cannot directly use \cite{huang2024learning}'s algorithm to learn \QACZ~Heisenberg-evolved observables.

Meanwhile, \cite{nadimpalli2023pauli} proposed an algorithm with efficient sample complexity for learning approximate Choi representations of single-output \QACZ~channels. The algorithm leverages low-degree concentration of the \QACZ~Choi representation to efficiently approximate it, by only learning low-degree Pauli coefficients. Given our low-degree concentration result for \QACZ~Heisenberg-evolved observables (\Cref{thm:low_deg_conc}), one might naturally assume we could simply learn the approximate low-degree observable $\widetilde{O}_{P_i} = \sum_{|Q|\leq \ell} \widetilde{O}_{P_i}(Q) \cdot Q$, where $\widetilde{O}_{P_i}(Q)$ are the learned Pauli coefficients. However, as we will elaborate shortly, it is \emph{crucial} that the learned observables have \emph{low-support}, so as to ensure that the unitary projection step in the eventual observable sewing procedure can be computed efficiently. Therefore, we need a more sophisticated learning procedure that learns a low-support observable.

\begin{algorithm}[t]
\small
\caption{Heisenberg-Evolved Observable Learning Procedure} \label{alg:learning_alg}
\begin{enumerate}
    \item Learn all the Pauli coefficients of $O_{P_i}$ of degree $\leq \support$ to precision $\eta$, as in \Cref{eqn:d_bound}.
    \item Find the subset of Paulis supported on $\support$ qubits, $\calF_{\{s\}} \in \mathfrak{F}_\support$, with max weight amongst the learned coefficients, 
    \begin{align} \label{eqn:max_supp}
        T_\support = \underset{  \calF_{\{s\}} \in \mathfrak{F}_\support}{\arg\max} \sum_{Q \in \calF_{\{s\}}} \left|\widetilde{O}_{P_i}(Q)\right|^2.
    \end{align}
    \item Set all coefficients outside of this maximal-weight support to zero,
    \begin{align}
        \widetilde{O}^{(\support)}_{P_i}(Q) = \begin{cases}
            \widetilde{O}_{P_i}(Q), & \text{ if } Q \in T_\support, \\
            0, & \text{ otherwise.}
        \end{cases}
    \end{align}
    \item Output the learned observable, which is fully supported on only $\support$ qubits,
    \begin{align} \label{eqn:final_learned}
        \widetilde{O}^{(\support)}_{P_i} = \sum_{Q \in T_\support} \widetilde{O}^{(\support)}_{P_i}(Q) \cdot Q.
    \end{align}
\end{enumerate}

\end{algorithm}

In light of these difficulties, we propose a new learning algorithm (\Cref{alg:learning_alg}) that efficiently learns approximate \QACZ~Heisenberg-evolved Pauli observables $\widetilde{O}^{(\support)}_{P_i}$ with guaranteed low-support~$\support$. Similar to the algorithm of \cite{nadimpalli2023pauli}, this algorithm uses classical shadow tomography \cite{huang2020predicting} to approximately learn all Pauli coefficients $\widetilde{O}_{P_i} (Q)$ for all Paulis $|Q|\leq \support$. However, after all these coefficients are learned, the Pauli coeffcients are grouped into sets of $\support$-qubit support, $\calF_{\{s\}} \in \mathfrak{F}_\support$. All coefficients which lie outside the set $T_\support$ of maximal weight, as expressed in \Cref{eqn:max_supp}, are set to zero. This enforces that the final learned observable $\widetilde{O}_{P_i}^{(\support)}$, with decomposition given in \Cref{eqn:final_learned}, is supported on only $\support$ qubits.

Leveraging our low-support concentration result (\Cref{thm:weight_ub_support_intro}),  we bound the Frobenius distance between $\widetilde{O}_{P_i}^{(\support)}$ and $O_{P_i}$ according to the learning accuracy and distance between $O_{P_i}$ and $O_{P_i}^*$.
\begin{lemma}[\Cref{alg:learning_alg} error bound] \label{thm:learn_dist_intro}
    Let $O_{P_i}$ be a \QACZ~Heisenberg-evolved Pauli observable, which is $\epsilon^*$-close to the observable $O_{P_i}^*$, with all gates of width $\geq \gaterem$ removed. Furthermore, suppose that we can learn all the  degree-$\support$ Pauli coefficients of $O_{P_i}$ to precision $\eta$, i.e.
    \begin{align} \label{eqn:d_bound}
        \left| \widehat{O}_{P_i}(Q)-\widetilde{O}_{P_i}(Q)\right| \leq \eta, \quad \forall Q \in \{P \in \pauli^n: |P|\leq \support\}.
    \end{align}
    Leveraging these learned coefficients, \Cref{alg:learning_alg} will produce learned observable $\widetilde{O}^{(\support)}_{P_i}$, such that
    \begin{align} \label{eqn:e_bound}
        \frac{1}{2^n} \left\| \widetilde{O}^{(\support)}_{P_i} - O_{P_i} \right\|_F^2=\epsilon_{P_i} \leq 2 \cdot  4^\support \cdot \eta^2 + \epsilon^*
    \end{align}
\end{lemma}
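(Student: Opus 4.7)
The plan is to apply Pauli orthogonality (Parseval) to split $\bigl\|\widetilde{O}^{(\support)}_{P_i}-O_{P_i}\bigr\|_F^2$ into an ``in-support learning error'' controlled by \Cref{eqn:d_bound} and a ``missing mass'' controlled by \Cref{thm:weight_ub_support_intro} together with the argmax property of $T_\support$ from step~2 of \Cref{alg:learning_alg}.

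Writing $e_Q := \widetilde{O}_{P_i}(Q)-\widehat{O}_{P_i}(Q)$ and using that distinct Pauli strings are orthonormal under the normalized Hilbert--Schmidt inner product $\langle A,B\rangle := 2^{-n}\Tr(A^\dag B)$, orthogonality gives
\begin{align}
\frac{1}{2^n}\bigl\|\widetilde{O}^{(\support)}_{P_i}-O_{P_i}\bigr\|_F^2
= \sum_{Q\in T_\support} e_Q^2 + \sum_{Q\notin T_\support}\widehat{O}_{P_i}(Q)^2.
\end{align}
The first sum is at most $4^\support\eta^2$, since any fixed $\support$-qubit support carries at most $4^\support$ Paulis, each with $|e_Q|\le\eta$ by \Cref{eqn:d_bound}. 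For the second sum I would use $\calS^* := \supp(O^*_{P_i})$ as a comparator: deleting every $CZ_k$ of width $\ge\gaterem$ fits the Heisenberg light-cone of $O^*_{P_i}$ into $|\calS^*|\le \support$ qubits (in the algorithm's parameter regime), so $\calF_{\calS^*}$ is admissible for the argmax of \Cref{eqn:max_supp}, and the split
\begin{align}
\sum_{Q\notin T_\support}\widehat{O}_{P_i}(Q)^2
\le \underbrace{\sum_{Q\notin\calF_{\calS^*}}\widehat{O}_{P_i}(Q)^2}_{\le\,\epsilon^*\,\text{by \Cref{thm:weight_ub_support_intro}}}
\;+\; \sum_{Q\in\calF_{\calS^*}\setminus T_\support}\widehat{O}_{P_i}(Q)^2
\end{align}
reduces the task to bounding the ``inside-$\calS^*$-but-missed-by-$T_\support$'' tail by at most $4^\support\eta^2$.

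The hard part, and the place I expect the main technical care, is transferring the algorithm's maximality $\sum_{T_\support}\widetilde{O}_{P_i}(Q)^2 \ge \sum_{\calF_{\calS^*}}\widetilde{O}_{P_i}(Q)^2$ (over \emph{learned} weights) to a corresponding statement on the \emph{true} weights. My plan is a two-step conversion: first pass the true-weight sum on $\calF_{\calS^*}\setminus T_\support$ to a learned-weight sum via the elementary bound $\hat v^2 \le 2\tilde v^2 + 2e^2$; then, by maximality, upper-bound that learned-weight sum by the learned-weight sum on $T_\support\setminus\calF_{\calS^*}$ (since the symmetric cancellation of $T_\support\cap\calF_{\calS^*}$ makes the two excess pieces comparable); and finally convert back via $\tilde v^2 \le 2\hat v^2 + 2e^2$, observing that every Pauli in $T_\support\setminus\calF_{\calS^*}$ lies outside $\calF_{\calS^*}$, so \Cref{thm:weight_ub_support_intro} collapses its true-weight sum into the already-accounted-for $\epsilon^*$ contribution. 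Collecting the three pieces -- $4^\support\eta^2$ in-support learning error, $\epsilon^*$ from low-support concentration, and $4^\support\eta^2$ from the maximality transfer -- yields the claimed $2\cdot 4^\support\eta^2 + \epsilon^*$, the only delicate bookkeeping being that the linear-in-$\eta$ cross terms produced by the double conversion must be absorbed into either the $\epsilon^*$ scale (via the Parseval bound $\|\widehat{O}_{P_i}\|_2=1$, which follows from $\|O_{P_i}\|_F^2=2^n$) or the quadratic $4^\support\eta^2$ scale rather than left as $O(\eta)$ slack.
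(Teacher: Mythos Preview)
Your decomposition and overall strategy mirror the paper's exactly: split via Pauli orthogonality into the in-$T_\support$ learning error (bounded directly by $4^\support\eta^2$) and the out-of-$T_\support$ true-weight mass, then control the latter by playing off the $\arg\max$ definition of $T_\support$ against the low-support concentration on $\calS^*$ from \Cref{thm:weight_ub_support_intro}. The one soft spot is the constants: your double conversion via $\hat v^2\le 2\tilde v^2+2e^2$ and back is sound, but each pass picks up a factor of~$2$, so carried through literally it yields something like $7\cdot 4^\support\eta^2 + 5\epsilon^*$ rather than the stated $2\cdot 4^\support\eta^2+\epsilon^*$; your closing remark about absorbing cross terms correctly flags this but does not actually close it. The paper handles the transfer slightly differently, using Parseval on the unit-norm vector $(\widehat{O}_{P_i}(Q))_Q$ together with a direct ``triangle inequality'' comparison of squared magnitudes on $T_\support$ versus $S^*_\support$, which lands on the stated constants without the doubling. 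Since only an $O(4^\support\eta^2+\epsilon^*)$ bound is needed for the downstream $1/\poly(n)$ guarantee, the discrepancy is cosmetic rather than conceptual.
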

\noindent Therefore, leveraging \Cref{thm:avg_dist_intro} and setting $\gaterem = \calO(\log n)$ (which implies 
 $\support = \calO(\log^d n)$), we establish that, in quasi-polynomial sample- and time-complexity, we can learn an approximate observable $\widetilde{O}_{P_i}^{(\support)}$, supported on $\calO(\log^d n)$ qubits, which is $1/\poly(n)$-close to $O_{P_i}$. \Cref{fig:obs_relate} offers a conceptual illustration of the key observables involved in our algorithm and their relations. For a detailed description of these learning results and proofs, we refer the reader to \Cref{sec:learn_obs}.

\begin{figure}[t]

    \centering
    \includegraphics[width=0.7\linewidth]{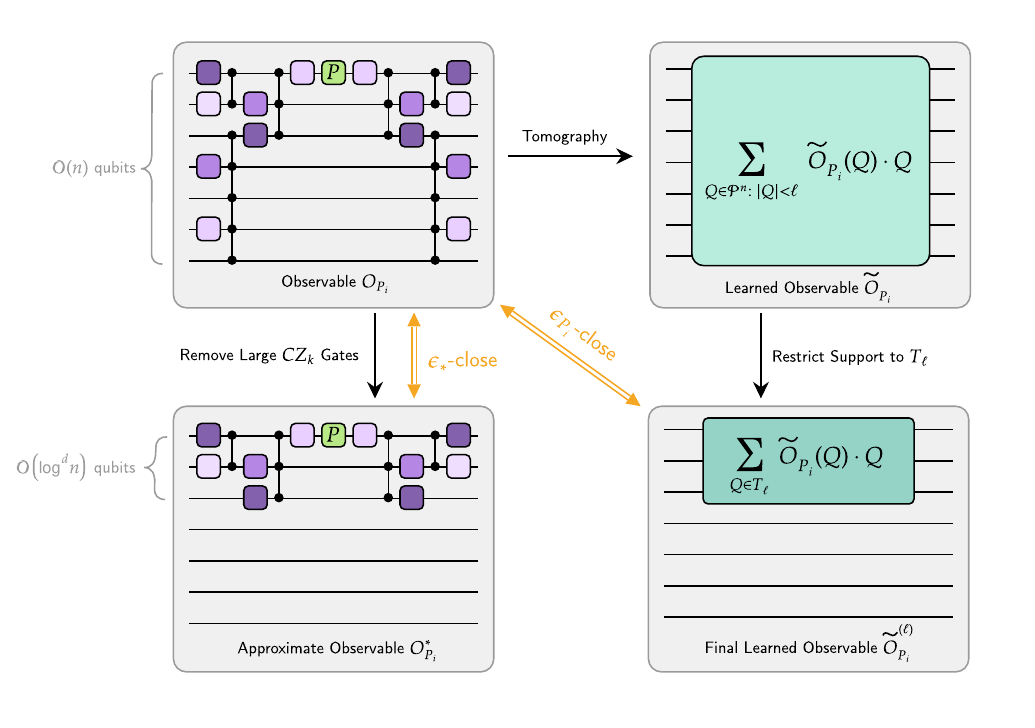}
    \caption{A conceptual illustration of all the observables involved in the \QACZ~Heisenberg-Evolved Learning procedure and their relations.}
    \label{fig:obs_relate}

\end{figure}

\paragraph{Efficient Sewing of \QACZ~Heisenberg-Evolved Observables.} This procedure is used to efficiently learn all $3n$ of the \QACZ~Heinsenberg-evolved single-qubit Pauli observables. \cite{huang2024learning}'s light-cone sewing procedure (with a modified projection operator) is then used to sew all $3n$ learned observables $\widetilde{O}^{(\support)}_{P_i}$ into approximate unitary $C_\text{sew}$, close to $C \otimes C^\dagger$ in average-case distance. 
\begin{lemma}[Sewing error bound] \label{thm:csew_bound_intro}
    Suppose $C$ is an $n$-qubit \QACZ~circuit, which has a set of Heisenberg-evolved observables $\left\{O_{P_i}\right\}_{i,P}$, corresponding to each of the $3n$ single-qubit Paulis $P_i$. Let $\left\{\widetilde{O}^{(\support)}_{P_i}\right\}_{i,P}$ denote the set of learned observables, which are at most $\epsilon_{P_i}$-far from the true observables. By ``sewing'' the learned observables, construct the unitary 
    \begin{align} \label{eqn:csew_unitary}
        C_\text{sew} \left(\{\widetilde{O}_{P_i}^{(\support)}\}_{i,P}\right) := \textnormal{SWAP}^{\otimes n} \prod_{i=1}^n \left[\proj_U\left(\frac{1}{2} I\otimes I+\frac{1}{2}\sum_{P\in\{X,Y,Z\}} \widetilde{O}_{P_i}^{(\support)} \otimes P_i\right) \right],
    \end{align}
    where $\proj_U$ is the projection onto the unitary minimizing the Frobenius norm distance and $\textnormal{SWAP}^{\otimes n}$ swaps the first and last $n$ qubits. The average-case distance between $C_\text{sew}$ and $C \otimes C^\dagger$ is at most
    \begin{align}
        \davg(C_\text{sew},C \otimes C^\dagger) \leq \frac{1}{2}~\sum_{i=1}^n \sum_{P\in\{X,Y,Z\}} \epsilon_{P_i}.
    \end{align}
\end{lemma}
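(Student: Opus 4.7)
My plan is to first identify the ideal target of the sewing procedure by defining the \emph{ideal building block}
\begin{align}
M_i^{\star} \;:=\; \tfrac{1}{2}\!\left(I\otimes I + \sum_{P\in\{X,Y,Z\}} O_{P_i}\otimes P_i\right).
\end{align}
Using the well-known two-qubit Pauli identity $\tfrac{1}{2}(I\otimes I+X\otimes X+Y\otimes Y+Z\otimes Z)=\mathrm{SWAP}$, together with $O_{P_i}=CP_iC^\dagger$, I would factor $M_i^{\star} = (C\otimes I)\,\mathrm{SWAP}_i\,(C^\dagger\otimes I)$, where $\mathrm{SWAP}_i$ swaps qubit $i$ between the two $n$-qubit registers. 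This makes $M_i^{\star}$ a unitary involution, and, after telescoping the $C^\dagger C = I$ cancellations between adjacent factors, shows that $\textnormal{SWAP}^{\otimes n}\prod_{i=1}^n M_i^{\star}$ equals $C\otimes C^\dagger$ (up to the convention for which register is swapped). This establishes the exact target we are approximating.

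Next, I would quantify how far the projected noisy blocks $\widehat{M}_i := \proj_U(\widetilde{M}_i)$ are from the ideal blocks. Writing $\widetilde{M}_i := \tfrac{1}{2}(I\otimes I+\sum_P \widetilde{O}^{(\support)}_{P_i}\otimes P_i)$, Pauli orthogonality on $2n$ qubits gives
\begin{align}
\tfrac{1}{2^{2n}}\bigl\|\widetilde{M}_i - M_i^{\star}\bigr\|_F^2 \;=\; \tfrac{1}{4}\sum_{P\in\{X,Y,Z\}} \tfrac{1}{2^n}\bigl\|\widetilde{O}^{(\support)}_{P_i} - O_{P_i}\bigr\|_F^2 \;=\; \tfrac{1}{4}\sum_{P}\epsilon_{P_i}.
\end{align}
Since $M_i^{\star}$ is unitary and $\proj_U$ returns the Frobenius-closest unitary, the triangle inequality yields $\|\widehat{M}_i - M_i^{\star}\|_F \le 2\|\widetilde{M}_i - M_i^{\star}\|_F$.

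The third step is a telescoping argument for the product. Using the unitary invariance of the Frobenius norm and the identity $AB-A'B' = (A-A')B + A'(B-B')$, applied inductively to unitaries $\widehat{M}_i$ and $M_i^{\star}$, I would obtain
\begin{align}
\bigl\|\textstyle\prod_i \widehat{M}_i - \prod_i M_i^{\star}\bigr\|_F \;\le\; \sum_{i=1}^n \bigl\|\widehat{M}_i - M_i^{\star}\bigr\|_F,
\end{align}
and hence a Frobenius bound for $C_\text{sew}$ versus $C\otimes C^\dagger$ (the leading $\textnormal{SWAP}^{\otimes n}$ is unitary and drops out). Finally, I would convert this to $\davg$ via the standard relation $\davg(U,V)\le \|U-V\|_F^2/(\dim+1)$ for unitaries $U,V$.

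The main obstacle is getting the claimed \emph{linear} (rather than quadratic-with-an-extra-$n$) dependence on $\sum_{i,P}\epsilon_{P_i}$: naively squaring the telescoping sum and applying Cauchy--Schwarz would introduce an extra factor of $n$ from $(\sum_i a_i)^2 \le n\sum_i a_i^2$. To obtain the factor of $\tfrac{1}{2}$ stated in the lemma, I expect to bypass the squaring step by working with the process-fidelity/trace form of $\davg$ directly: expand $\mathrm{Tr}\bigl((C\otimes C^\dagger)^\dagger C_\text{sew}\bigr)$ in the perturbations $\Delta_i := \widehat{M}_i - M_i^{\star}$, observe that first-order terms translate (after normalization by $2^{2n}$ and using unitarity of the $M_j^{\star}$) into the normalized inner products $\tfrac{1}{2^{2n}}\mathrm{Tr}(M_i^{\star\,\dagger}\Delta_i)$, and bound each of these by $\tfrac{1}{2}\sum_P \epsilon_{P_i}$ using Pauli orthogonality. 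Second-order cross terms between distinct indices $i\neq j$ should only contribute lower-order corrections that can be absorbed, yielding the desired additive bound $\davg(C_\text{sew},C\otimes C^\dagger)\le \tfrac{1}{2}\sum_{i,P}\epsilon_{P_i}$.
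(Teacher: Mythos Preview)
Your first three steps---defining the ideal blocks $M_i^\star$, verifying via the $\mathrm{SWAP}$ identity that $\textnormal{SWAP}^{\otimes n}\prod_i M_i^\star = C\otimes C^\dagger$, computing the per-block error $\frac{1}{2^{2n}}\|\widetilde M_i - M_i^\star\|_F^2 = \tfrac14\sum_P\epsilon_{P_i}$ via Pauli orthogonality, and bounding the effect of $\proj_U$---match the paper's proof exactly (in the paper's notation the blocks are $V_i,\widetilde W_i,W_i$).

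The divergence is precisely at the obstacle you flag. The paper does \emph{not} telescope on the unsquared Frobenius norm and then square; it telescopes directly on the squared distance $\frobd(\cdot,\cdot)=\tfrac{1}{2^{2n}}\|\cdot-\cdot\|_F^2$, which the preliminaries declare to satisfy the triangle inequality. Concretely, the paper chains $\davg\le\frobd$, applies a hybrid argument \emph{at the level of $\frobd$} to get $\frobd\bigl(\prod_i W_i,\prod_i V_i\bigr)\le\sum_i\frobd(W_i,V_i)$, and then again uses the triangle inequality for $\frobd$ in the step $\frobd(W_i,V_i)\le\frobd(W_i,\widetilde W_i)+\frobd(\widetilde W_i,V_i)\le 2\,\frobd(\widetilde W_i,V_i)=\tfrac12\sum_P\epsilon_{P_i}$. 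This is exactly how the linear bound with the factor $\tfrac12$ drops out with no extra $n$.

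Your proposed workaround---expanding $\mathrm{Tr}\bigl((C\otimes C^\dagger)^\dagger C_{\mathrm{sew}}\bigr)$ in the perturbations $\Delta_i$ and arguing the cross terms are lower order---is not the paper's route, and as written it has a gap: the second-order cross terms $\Delta_i\Delta_j$ for $i\neq j$ are not automatically negligible (they are the same order as the diagonal ones in general), so ``absorbing'' them would need a real argument. Whether or not the paper's use of a triangle inequality for the \emph{squared} Frobenius distance is fully rigorous, that is the mechanism it uses; your plan replaces a one-line step in the paper with an incomplete perturbative expansion.
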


Since our result leverages average-case distance measures, we had to re-prove most of the learning guarantees of \cite{huang2024learning}, which were only proven for worst-case distance measures. Furthermore, to ensure that $C_\text{sew}$ is unitary, the average-case version of the sewing procedure leverages the $\proj_U$ projection onto the unitary minimizing the Frobenius norm distance. Solving for this unitary is the solution to the well-established Orthogonal Procrustes Problem, which requires computing a singular value decomposition (SVD) and setting all singular values to one. In general, computing the SVD of a $2n$-qubit unitary requires exponential time. However, we imposed that the learned observables $\widetilde{O}_{P_i}^{(\support)}$ are supported on only $\calO(\log^d n)$ qubits. This implies that each matrix, $\frac{1}{2} I\otimes I+\frac{1}{2}\sum_{P\in\{X,Y,Z\}} \widetilde{O}_{P_i}^{(\support)} \otimes P_i$, in the projection also has $\calO(\log^d n)$ support. Since we only need to compute the SVD of the non-trivial subsystem within the support, the computational complexity of the sewing procedure is only quasi-polynomial, i.e. $\calO(2^{\poly\log n})$. This is the central reason that we impose that the learned observables are not only low-degree, but also low-support.

By plugging our $1/\poly(n)$ error bound for $\epsilon_{P_i}$ into \Cref{thm:csew_bound_intro}, as well as summing the sample and time complexities of the learning and sewing procedures, we can guarantee that our algorithm efficiently learns $n$-output \QACZ~unitaries, thereby proving \Cref{thm:main_result_intro}. For a more detailed description of these sewing results and proofs, we refer the readers to \Cref{sec:sew}.

\paragraph{Learning \QAC~Circuits with Improved Depth.} With our efficient procedure for learning the unitary corresponding to an $n$-output \QACZ~circuit, one could ask whether we could learn a \QACZ~circuit synthesizing this unitary, also known as ``proper learning'' of \QACZ. In this work, while we do not achieve a proper learning algorithm for \QACZ, we do make progress towards one. A naive attempt at implementing the learned \QACZ~unitary would require a \QAC~circuit of quasi-polynomial depth. However, in \Cref{thm:uni_synthesis_intro} we establish that the circuit depth can be reduced to poly-logarithmic.

Begin by noticing that the learned \QACZ~unitary $C_\text{sew}$, as given in \Cref{eqn:csew_unitary}, sews the learned unitaries $W_i = \proj_U\left(\frac{1}{2} I\otimes I+\frac{1}{2}\sum_{P\in\{X,Y,Z\}} \widetilde{O}_{P_i}^{(\support)} \otimes P_i\right)$
via SWAP gate operations, which are implementable in \QACZ. To achieve a poly-logarithmic depth \QAC~implementation of $C_\text{sew}$, we prove the following two key results: 1) each unitary $W_i$ can be implemented to high accuracy via a \QACZ~circuit and 2) there exists a sewing order with parallelization such that all $n$ of the $W_i$ unitaries can be implemented and sewed in worst-case poly-logarithmic depth.

To prove the first result, we decompose the \QACZ~circuit structure into two key parts: the $CZ_k$ gates and the arbitrary single-qubit rotation gates. Since $W_i$ is supported on only $\calO(\log^d n)$ qubits, a \QACZ~circuit operating on $\calO(\log^d n)$ qubits has at most quasi-polynomial configurations of $CZ_k$ gates with $2\leq k \leq \calO(\log n)$. For each such configuration, we show that it is possible to construct a polynomial sized  $1/\poly(n)$-net over the polynomial number of arbitrary single-qubit rotation gates. Therefore the $1/\poly(n)$-net over all such \QACZ~circuit architectures has quasi-polynomial size and a simple brute-force search can be leveraged to find the circuit closest to $W_i$.

To prove the second result, demonstrating that the learned \QACZ~circuits can be sewn into a poly-logarithmic depth \QAC~circuit, we leverage a graph-coloring argument similar to \cite[Lemma 13]{huang2024learning}. Note that the poly-logarithmic depth arises due to the poly-logarithmic support of the $W_i$ unitaries, resulting in a poly-logarithmic degree and, thus, worst-case coloring of the graph. For more details about this unitary synthesis procedure, we refer the reader to \Cref{sec:improve_depth}.

\paragraph{Concentration and Learning of \QACZ~with Ancilla.} In \Cref{sec:ancillas}, we discuss the applicability of our learning algorithm to \QACZ~circuits with ancilla. Similar to \cite{huang2024learning}, we only consider circuits where the ancillas are initialized to the $\ket{0^a}$ state and the computation is clean (meaning ancillas are reverted to $\ket{0^a}$ by the end of the computation). This implies that the action of $C$ on the $(n+a)$-qubit system is equivalent to the action of another unitary $A$ on just the $n$-qubit system without ancillas, i.e. $C (I \otimes \ket{0^a}) = A \otimes \ket{0^a}$. Thus, we define the Heisenberg-evolved Pauli observables of this system ``without ancilla restriction'' as $\obsanc = C (P_i \otimes I^a) C^\dagger $
and ``with ancilla restriction'' as $\obsared = (I \otimes \bra{0^a}) \cdot \obsanc \cdot (I \otimes \ket{0^a}) = A P_i A^\dagger$. Via Pauli analysis, we prove that the Pauli weight spectrums of $\obsanc$ and $\obsared$ are related by a $2^a$ multiplicative factor or, in other words, an exponential blow-up factor in the number of ancilla. 
\begin{lemma}[Effect of ancilla]  Let $\calS \subseteq \pauli^n$ be a subset of the set of $n$-qubit Paulis, then 
    \begin{align}
        \wt^{\in \calS}[\obsared] \leq 2^a \cdot \wt^{\in\calS}[\obsanc].
    \end{align}
\end{lemma}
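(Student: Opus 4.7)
The plan is to relate the Pauli expansion of $\obsared$ on $n$ qubits to the Pauli expansion of $\obsanc$ on $n+a$ qubits by a direct computation, using the fact that restricting to ancilla state $\ket{0^a}$ ``diagonalizes'' the ancilla slot onto the $\{I,Z\}^a$ subalgebra. The $2^a$ blow-up will then come from a single application of Cauchy--Schwarz over the ancilla Pauli index. I will assume the standard normalization $M = \sum_R \widehat{M}(R)\, R$ with $\widehat{M}(R) = 2^{-m}\Tr(R M)$ for an $m$-qubit operator.

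\paragraph{Step 1: Expand $\obsanc$ in the $(n+a)$-qubit Pauli basis.}
Write
\begin{align}
    \obsanc = \sum_{P \in \pauli^n} \sum_{Q \in \pauli^a} \alpha_{P,Q}\, P \otimes Q,
    \qquad \alpha_{P,Q} = \frac{1}{2^{n+a}}\Tr\bigl((P\otimes Q)\,\obsanc\bigr).
\end{align}
Then by definition of $\obsared = (I\otimes\bra{0^a})\,\obsanc\,(I\otimes\ket{0^a})$ and linearity,
\begin{align}
    \obsared = \sum_{P,Q} \alpha_{P,Q}\, \bra{0^a} Q \ket{0^a}\, P.
\end{align}

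\paragraph{Step 2: Identify the surviving ancilla Paulis.}
Since $\bra{0} X \ket{0} = \bra{0} Y \ket{0} = 0$ and $\bra{0} I \ket{0} = \bra{0} Z \ket{0} = 1$, the expectation $\bra{0^a} Q \ket{0^a}$ is $1$ when $Q \in \{I,Z\}^a$ and $0$ otherwise. Denote this subset by $\mathcal{Z}^a$. Hence the $n$-qubit Pauli coefficient of $\obsared$ is
\begin{align}
    \widehat{\obsared}(P) \;=\; \sum_{Q \in \mathcal{Z}^a} \alpha_{P,Q}.
\end{align}

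\paragraph{Step 3: Cauchy--Schwarz over the ancilla slot.}
For any $P \in \calS$,
\begin{align}
    \bigl|\widehat{\obsared}(P)\bigr|^2
    = \Bigl|\sum_{Q\in\mathcal{Z}^a} \alpha_{P,Q}\Bigr|^2
    \leq |\mathcal{Z}^a| \sum_{Q \in \mathcal{Z}^a} |\alpha_{P,Q}|^2
    = 2^a \sum_{Q \in \mathcal{Z}^a} |\alpha_{P,Q}|^2.
\end{align}
Summing over $P \in \calS$ and then extending the inner sum from $\mathcal{Z}^a$ to all $Q \in \pauli^a$ (adding nonnegative terms) gives
\begin{align}
    \wt^{\in\calS}[\obsared]
    = \sum_{P\in\calS} \bigl|\widehat{\obsared}(P)\bigr|^2
    \leq 2^a \sum_{P\in\calS} \sum_{Q\in\pauli^a} |\alpha_{P,Q}|^2
    = 2^a \cdot \wt^{\in\calS}[\obsanc],
\end{align}
where in the last equality I use the convention that $\wt^{\in\calS}$ for an $(n+a)$-qubit operator collects all squared Pauli coefficients whose first $n$-qubit factor lies in $\calS$ (with the ancilla factor unrestricted), which matches the way $\calS \subseteq \pauli^n$ is lifted in the statement.

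\paragraph{Main obstacle.}
The proof is essentially a one-line Cauchy--Schwarz once the right normalization is fixed, so there is no deep obstacle. The only subtlety is keeping the Pauli-coefficient normalizations straight across the $n$-qubit and $(n+a)$-qubit decompositions, and making sure that when lifting $\calS \subseteq \pauli^n$ to a subset of $\pauli^{n+a}$ one sums over all ancilla Paulis $Q$ (not just $\mathcal{Z}^a$); the latter is exactly what yields the tight $2^a$ factor rather than something smaller or larger.
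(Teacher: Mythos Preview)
Your proof is correct and essentially identical to the paper's: both compute $\widehat{\obsared}(P) = \sum_{Q \in \{I,Z\}^a} \widehat{\obsanc}(P \otimes Q)$ via the ancilla restriction, then apply Cauchy--Schwarz over the $2^a$ ancilla Paulis in $\{I,Z\}^a$ and extend the inner sum to all of $\pauli^a$. Your explicit remark about the convention for lifting $\calS \subseteq \pauli^n$ to $\pauli^{n+a}$ is a nice clarification that the paper leaves implicit.
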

\noindent We also show that the distance between $\obsared$ and $\obsared^*$ (the circuit with large $CZ$ gates removed) can be bounded by the distance between $\obsanc$ and $\obsanc^*$, but that a $2^a$ multiplicative blow-up factor also arises relative to the distance in the ancilla-free case.
\begin{lemma}[Large $CZ_k$ removal error with ancillas] 
    The average-case distance between observables of \QACZ~circuit $C$ and \QACZ~circuit $\widetilde{C}$ ($C$ with all $m$ $CZ_k$'s of size $k> \gaterem$ removed) satisfies
    \begin{align}
        \frac{1}{2^n} \left\| \obsared-\obsared^*\right\|_F^2\leq \frac{1}{2^n} \| \obsanc - \obsanc^* \|^2_F \leq 2^a \cdot \frac{9m^2}{2^{\gaterem}}.
    \end{align}
\end{lemma}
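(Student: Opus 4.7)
My plan is to establish the two inequalities separately, as they correspond to conceptually distinct steps: a sub-matrix restriction and a direct application of the ancilla-free lemma to the full $(n+a)$-qubit system.

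For the first inequality, I would start from the definition $\obsared - \obsared^* = (I \otimes \bra{0^a})(\obsanc - \obsanc^*)(I \otimes \ket{0^a})$, which expresses the ancilla-restricted observable difference as the compression of the full observable difference against $\ket{0^a}$ in the ancilla register. In the computational basis, this means that for all $x, x' \in \{0,1\}^n$,
\begin{equation*}
    [\obsared - \obsared^*]_{x,x'} = [\obsanc - \obsanc^*]_{(x,0^a),(x',0^a)}.
\end{equation*}
Hence $\|\obsared - \obsared^*\|_F^2$ equals a sum of squared magnitudes over a subset of the entries of $\obsanc - \obsanc^*$ (namely those with ancilla registers equal to $\ket{0^a}$), and is bounded above by the full Frobenius norm $\|\obsanc - \obsanc^*\|_F^2$. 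Dividing both sides by $2^n$ yields the first inequality. This step is essentially a one-line observation and should present no difficulty.

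For the second inequality, the approach is to invoke \Cref{thm:avg_dist_intro} (the ancilla-free large $CZ_k$ removal bound) directly on the $(n+a)$-qubit circuit $C$, viewing $P_i \otimes I^a$ as a single-qubit Pauli on the $(n+a)$-qubit system (it acts non-trivially on exactly one qubit, so the hypothesis of the lemma is satisfied). Since $\widetilde{C}$ is obtained from $C$ by deleting the same $m$ gates of width greater than $\gaterem$, the lemma yields
\begin{equation*}
    \frac{1}{2^{n+a}} \|\obsanc - \obsanc^*\|_F^2 \leq \frac{9m^2}{2^\gaterem},
\end{equation*}
and multiplying through by $2^a$ converts the normalization from $1/2^{n+a}$ to $1/2^n$, producing the factor of $2^a$ in the stated bound.

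The main obstacle is not a deep technical one but a bookkeeping check: I must confirm that the proof of \Cref{thm:avg_dist_intro} genuinely goes through when applied at qubit count $n+a$ rather than $n$, without any implicit assumption that the identity tensor factor on ancillas is absent. Inspection should show this, since the argument proceeds by Pauli-weight accounting on the removed $CZ_k$ gates and is insensitive to padding the system with an identity register. Once that is in hand, chaining the two inequalities completes the proof.
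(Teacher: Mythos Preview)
Your proposal is correct and matches the paper's own proof in structure and substance. The only cosmetic differences are that the paper phrases the first inequality via the bound $\|AB\|_F \leq \|A\|\,\|B\|_F$ with $\|I\otimes\ket{0^a}\|=1$ (rather than your explicit sub-matrix argument), and for the second inequality it re-derives the bound from the intermediate step \eqref{eqn:obs_ub} of \Cref{thm:avg_dist} at system size $n+a$ (yielding terms $3\cdot 2^{(n+a-k_j)/2}$) rather than invoking the lemma as a black box and renormalizing---but these are equivalent.
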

Therefore, by using \Cref{alg:learning_alg} to learn the $n$-qubit observables with ancilla restriction,  
\begin{align} \label{eqn:dist_anc_bound}
    \frac{1}{2^n} \left\| \widetilde{O}^{(\support)}_{P_i,n} - \obsared \right\|_F^2 \leq 2 \cdot  4^\support \cdot \eta^2 + \wt^{\notin\calS^*}[\obsared] \leq 2 \cdot 4^\support \cdot \eta^2 + 2^a \cdot \frac{9m^2}{2^{\gaterem}}
\end{align}
where $\eta$ is the learning accuracy. To retain our algorithm's efficient sample and time complexity, we maintain that $\gaterem\leq \calO(\log(n+a)) \leq \calO(\log n)$. However, due to the $2^a$ factor in the numerator of the $\wt^{\notin \calS^*}[O_{P_i}]$ term, the distance is guaranteed to be $\leq1/\poly(n)$ only if the number of ancillas is logarithmic, i.e. $a = \calO(\log n)$. Thus, we prove that our learning procedure holds for \QACZ~circuits with a logarithmic number of ancilla.

However, if the $2^a$ term was eliminated from the upper-bound on $\wt^{\notin\calS^*}[\obsared]$ in \Cref{eqn:dist_anc_bound}, then the distance would be  $\leq1/\poly(n)$ and $\gaterem\leq \calO(\log(n+a)) \leq \calO(\log n)$ would be satisfied for a number of ancillas $a$ polynomial in $n$. This observation motivates \Cref{conj:anc_conc}, a strengthening of \cite[Conjecture 1]{nadimpalli2023pauli}. If proven true, \Cref{thm:poly_anc} implies a quasi-polynomial algorithm for learning unitaries of \QACZ~circuits with polynomially many ancillas.

\paragraph{Hardness of Learning~\QACZ.} We conclude the work by proving \Cref{thm:hardness_intro}, as discussed in \Cref{sec:results} and \Cref{sec:hardness_proof}, which demonstrates that \QACZ~cannot be efficiently learned according to the diamond-norm distance measure -- motivating our use of an average-case measure.

\section{Discussion and Future Work}
There are several interesting directions for future work that would strengthen our understanding of \QACZ~and improve the applicability of this learning procedure.

\paragraph{Ancilla.} Although our learning procedure handles  logarithmic ancilla, many quantum algorithms require far more ancilla to, e.g., perform block-encodings and error correction. We proved that our algorithm works for polynomial ancilla under \Cref{conj:anc_conc}, but proving this conjecture remains an important and challenging open problem. Concurrent work by \cite{anshu2024computational} proved that \QACZ~projectors are well-approximated by low-degree projectors according to the spectral norm, with guarantees for slightly superlinear ancilla, but it is not clear if this result extends to our setting. 
Furthermore, in this work, we prove algorithmic guarantees for \QACZ~circuits that perform clean computation on their ancillas. However, \cite{nadimpalli2023pauli} also demonstrated that their low-degree \QACZ~concentration results hold for circuits which do not perform clean computation and even for circuits with dirty ancilla qubits. Therefore, it could be interesting to explore the feasbility of our learning algorithm in these different ancilla settings. 

\paragraph{Proper Learning.} While our unitary synthesis procedure guarantees a poly-logarithmic depth circuit implementation, we believe that proper learning of the \QACZ~circuit should be possible. Our current circuit synthesis approach is via brute-force search and an $\varepsilon$-net, which we believe leaves room for improvement:
\begin{openprob}[Quasi-polynomial proper learning \QACZ]
    Given the learned unitary $C_\text{sew}$, which is $1/\poly(n)$-close to \QACZ~circuit $C$, does there exist a quasi-polynomial time algorithm to learn a \QACZ~circuit implementing unitary $C_\text{sew}^*$ such that
    $\davg(C_\text{sew}^*, C \otimes C^\dagger) \leq 1/\poly(n)$?
\end{openprob}
Although it is widely believed that efficient \ACZ~proper learners do not exist, \ACZ~and \QACZ~with sublinear ancilla are fundamentally incomparable\footnote{To see why: \cite{anshu2024computational} showed that Parity$\notin$\QACZ[sublinear]. Since Parity$=$Fanout quantumly, so Fanout $\notin$ \QACZ[sublinear]. However, Fanout$\in$\ACZ~by definition.}.
Thus, efficient proper learning of \QACZ~with sublinear ancilla would not contradict the classical belief.

\paragraph{Time Complexity.}
Classically, \cite{kharitonov1993cryptographic} showed that \cite{linial1993constant}'s quasi-polynomial complexity for learning \ACZ~is optimal, under the standard classical cryptographic hardness-of-factoring assumption. Since factoring is not quantumly hard, \cite{arunachalam2021quantum} established a quasi-polynomial time lower-bound for \emph{quantum} learning of \ACZ~via a reduction to Ring Learning with Errors (\RLWE). Although there are fundamental differences between \ACZ~and \QACZ, we believe that it should be possible (potentially through similar crypotgraphic reductions) to prove a quasi-polynomial lower-bound for learning \QACZ. This would imply optimality of our algorithm.

\paragraph{Sample Complexity.} Our learning algorithm's core subroutine can be interpreted as learning a junta approximation of \QACZ~Heisenberg-evolved Pauli observables. To this end, there is a growing literature of sophisticated algorithms for quantum junta learning \citep{chen2023testing, bao2023nearly, gutierrez2024}. 
While these results do not directly apply to learning the entire $n$-qubit unitary of a \QACZ~circuit, they open up intriguing possibilities for improving our algorithm's sample complexity with respect to parameters such as circuit size.

\paragraph{Beyond \QACZ.} Finally, our work contributes to the burgeoning field of research seeking to identify provably trainable and learnable quantum circuit families.
Our results extend \cite{huang2024learning}'s efficient learnability of \QNCZ~to the broader class of \QACZ. This advance raises intriguing questions for future exploration.
For instance, since \QACZ~requires linear depth to implement in 1D geometry, can we efficiently learn other polynomial-depth circuit families in the 1D geometry? Furthermore, since \QACZ $\subseteq$ \QNCO $\subseteq \mathsf{QNC}$, a natural progression would be to investigate efficient learnability of polylog-depth $\mathsf{QNC}$ circuits  or other superclasses of \QACZ. By uncovering rigorous algorithms for training and learning quantum circuits, we may enable automated design of quantum circuits, protocols, and algorithms, in addition to characterization of experimental quantum devices.

\section{Acknowledgements}
The authors thank Zeph Landau, Fermi Ma, Jarrod McClean, and Ewin Tang for helpful discussions. The authors would also like to thank an anonymous reviewer for suggesting we add a hardness of learning \QACZ~argument. FV is supported by the Paul and Daisy Soros Fellowship for New Americans as well as the National Science Foundation Graduate Research Fellowship under Grant No. DGE 2146752.

\bibliographystyle{alpha}
\bibliography{biblio}

\appendix

\newpage

\section{Preliminaries} \label{sec:prelims}

In this section, we will briefly establish notation and important concepts to be used throughout the paper.  This manuscript will assume familiarity with the basics of quantum computational and information theory. For more background, we refer the interested reader to \cite{nielsen2010quantum, wilde2013quantum}. In particular, the preliminary sections of \cite{nadimpalli2023pauli,huang2024learning} are especially relevant to this work. For more information on tensor networks, as used in \Cref{fig:tens_choi_heis}, we refer the reader to \cite{bridgeman2017hand}.

In terms of notation, we will generally use $n$ to refer to the number of qubits in the computational register of a quantum system and $a$ to refer to the number of ancilla qubits.  Let $[n]=\{1,2,\dots, n \}$ denote the set of qubit indices. We will use the notation $I_{k}$ to refer to an identity matrix of dimension $2^k \times 2^k$ and, when the dimension is implied by the context, may drop the $k$ subscript altogether.

Given an $n$-qubit unitary $U$ corresponding to a quantum circuit, we will define the circuit's corresponding unitary channel as 
\begin{align}
    \calU(\rho) = U \rho U^\dag.
\end{align}
Implicitly, we will assume that unitary channels are perfectly implemented, without any noise. Furthermore, in this work, ancilla qubits are assumed to be restricted to input $\ketbra{0}{0}$. For the same quantum circuit, implementing unitary $U$, and a given Hermitian measurement observable $M$, the circuit's Heisenberg-evolved observable (corresponding to the dual channel) is defined as
\begin{align}
    O_M = U^\dag M U.
\end{align}
Note that if the measurement observable $M$ is unitary, then $O_M$ is also unitary.

\subsection{Circuit Classes} 
This work will refer to several classical and quantum circuit classes. We briefly review them here for ease of reference. 

Classically, \NC$^k$ is the class of circuits of depth $\calO(\log^k n)$ and size $\poly (n)$, comprised of {\tt AND}, {\tt OR}, and {\tt NOT} gates of fan-in $\leq 2$. The circuit class \NC$=\bigcup_{k\geq 1} \mathsf{NC}^k$ refers to the union of all these circuit classes. The circuit classes \AC$^k$~and \AC~are defined analogously, but allow for {\tt AND} and {\tt OR} gates of unbounded fan-in.

Quantumly, \QNC$^k$ is the class of quantum circuits of depth $\calO(\log^k n)$, comprised of unlimited arbitrary single-qubit gates and polynomially many $CZ$ gates (that act on 2-qubits). The circuit class \QNC$=\bigcup_{k\geq 1} \mathsf{QNC}^k$ refers to the union of all these circuit classes. The circuit classes \QAC$^k$~and \QAC~are defined analogously, but allow for $CZ$ gates that simultaneously operate on an unbounded number number of qubits.

\subsection{The Average-Case Distance Measure} \label{sec:dist_meas}

This work will leverage average-case distance measures, so we will now review their important properties. The presentation and intuition are largely based on prior works \cite{huang2024learning, Nielsen_2002}. To begin, a standard notion of distance for quantum states is the fidelity measure. 
\begin{definition}[Fidelity] 
    Given two quantum states $\rho, \sigma$ the fidelity of the states is defined as
    \begin{align}
        \calF(\rho, \sigma) = \Tr(\sqrt{\sqrt{\rho}\sigma \sqrt{\rho}})^2.
    \end{align}
    If one of the states is a pure state, e.g. $\sigma=\ketbra{\psi}{\psi}$, then the fidelity expression reduces to 
    \begin{align}
        \calF(\rho, \sigma) = \bra{\psi} \rho \ket{\psi}.
    \end{align}
\end{definition}
\noindent Note that for any input $(\rho, \sigma)$, the fidelity is bounded as $\calF(\rho, \sigma)\in[0,1]$. Furthermore, the maximal value of 1 is obtained if and only if the states are identical, i.e. $\calF(\rho,\rho)=1$.

For quantum channels, we consider the standard Haar distance measure, which is the same notion of average-case distance for comparing quantum channels as utilized in \cite[Definition 3]{huang2024learning}.
\begin{definition}[Average-Case Distance] \label{def:avg_case_dist}
The average-case distance between two $n$-qubit CPTP maps $\calE_1$ and $\calE_2$ is defined as
\begin{align}
    \davg(\calE_1, \calE_2)=\underset{\ket{\psi}\sim \textnormal{Haar}}{\ex}\left[1-\calF\left(\calE_1(\ketbra{\psi}{\psi}),\calE_2(\ketbra{\psi}{\psi})\right)\right],
\end{align}
where $\ket{\psi}$ is sampled from the Haar (uniform) measure and  $\calF$ is the fidelity. 
\end{definition}
\noindent Intuitively, this notion of average-case distance measures how distinct the two channel outputs are, for pure input states averaged over the Haar (uniform) measure. In the case of unitary channels the Haar distance measure simplifies to the average gate fidelity measure. As such, we will often abuse notation for unitary channels and write $\davg(U_1, U_2)$ to mean $\davg(\calU_1, \calU_2)$.
\begin{fact}[Average Gate Fidelity - \cite{Nielsen_2002}] \label{fact:uni_chans_dist}
    For unitaries $U_1$ and $U_2$, with corresponding unitary channels $\calU_1$ and $\calU_2$, the average-case distance satisfies
    \begin{align}
        \davg(\calU_1, \calU_2) = \frac{2^n}{2^n+1} \left(1-\frac{1}{4^n}\left|\Tr(U_1^\dag U_2)\right|^2\right).
    \end{align}
\end{fact}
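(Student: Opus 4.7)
The plan is to directly unfold the definition of $\davg$ in the unitary-channel case and reduce the problem to evaluating a single Haar second-moment integral. First, I would observe that for any pure input $\ket{\psi}$, the channels $\calU_1$ and $\calU_2$ send $\ketbra{\psi}{\psi}$ to the pure states $U_1\ketbra{\psi}{\psi}U_1^\dagger$ and $U_2\ketbra{\psi}{\psi}U_2^\dagger$, respectively, so the fidelity collapses to the rank-one expression $\calF = |\bra{\psi}U_1^\dagger U_2\ket{\psi}|^2$. Writing $V := U_1^\dagger U_2$ (which is unitary), this reduces the problem to computing
\begin{align}
\davg(\calU_1,\calU_2) = 1 - \underset{\ket{\psi}\sim\text{Haar}}{\ex}\!\left[|\bra{\psi}V\ket{\psi}|^2\right].
\end{align}

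Next, I would evaluate the Haar second moment. The standard identity, derivable from Schur--Weyl duality applied to the symmetric subspace of two copies (or equivalently from the swap-operator expression for the second moment of the Haar measure), gives for any operator $V$ on $d=2^n$ dimensions
\begin{align}
\underset{\ket{\psi}\sim\text{Haar}}{\ex}\!\left[|\bra{\psi}V\ket{\psi}|^2\right] = \frac{\Tr(V V^\dagger) + |\Tr(V)|^2}{d(d+1)}.
\end{align}
Specializing to our unitary $V$, unitarity yields $\Tr(VV^\dagger) = \Tr(I) = 2^n$, so the integral evaluates to $(2^n + |\Tr(U_1^\dagger U_2)|^2)/(2^n(2^n+1))$.

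Finally, I would perform the algebraic simplification: subtract from $1$, combine over the common denominator $2^n(2^n+1)$, and factor out $2^n/(2^n+1)$ to obtain the claimed expression $\tfrac{2^n}{2^n+1}\bigl(1-\tfrac{1}{4^n}|\Tr(U_1^\dagger U_2)|^2\bigr)$.

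The only nontrivial step is the Haar second-moment identity, which is the main ingredient and is a standard consequence of the fact that $\ex_{\text{Haar}}[\ket{\psi}\bra{\psi}^{\otimes 2}]$ is proportional to the projector onto the symmetric subspace; I would cite this lemma rather than rederive it. Everything else is bookkeeping. The result is often stated with a slightly different normalization in the literature, so I would be careful to track the $d$ vs.\ $d+1$ factors consistently and double-check by testing the two sanity checks $U_1=U_2$ (giving $\davg=0$) and $\Tr(U_1^\dagger U_2)=0$ (giving $\davg = 2^n/(2^n+1)$, the expected value for ``maximally distinguishable'' unitary channels on average).
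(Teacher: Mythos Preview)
Your proposal is correct and is the standard derivation of this formula via the Haar second moment. The paper does not actually prove this statement; it is recorded as a \textbf{Fact} with a citation to \cite{Nielsen_2002} and used without proof, so there is no in-paper argument to compare against.
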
 


In this work we will also leverage properties of the well-established normalized Frobenius and global phase-invariant distance measures for unitaries.
\begin{definition}[Normalized Frobenius Distance] 
The normalized Frobenius distance between two $n$-qubit unitaries $U$ and $V$ is defined as
\begin{align}
    \frobd(U, V)= \frac{1}{2^n}\|U-V\|_F^2
\end{align}
\end{definition}
\begin{definition}[Global Phase-Invariant Distance]
The global phase-invariant distance between two $n$-qubit unitaries $U$ and $V$ is defined as
\begin{align}
    \gpid(U, V)=\min_{\phi \in \mathbb{R}}~\frac{1}{2^n}\|e^{i\phi}U-V\|_F^2
\end{align}
\end{definition}

\noindent As distance measures, both satisfy the triangle inequality. Furthermore, from these definitions, it trivially follows that the normalized Frobenius distance is lower-bounded by the global phase-invariant distance.
\begin{fact} \label{fact:gpid_ub}
    For unitaries $U_1$ and $U_2$, the global phase-invariant distance is upper-bounded by the normalized Frobenius distance,
    \begin{align}
        \gpid(U_1,U_2) \leq  \frobd(U_1,U_2).
    \end{align}
\end{fact}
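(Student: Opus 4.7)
The plan is essentially immediate from the definitions. The global phase-invariant distance $\gpid(U_1,U_2)$ is defined as an infimum over $\phi \in \mathbb{R}$ of the quantity $\frac{1}{2^n}\|e^{i\phi}U_1 - U_2\|_F^2$, while the normalized Frobenius distance $\frobd(U_1,U_2)$ is exactly this same quantity evaluated at the specific choice $\phi = 0$ (so that $e^{i\phi} = 1$). Since an infimum over a set of values is always bounded above by any particular element of that set, the inequality follows directly.

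Concretely, I would write the one-line argument as follows. Let $f(\phi) := \frac{1}{2^n}\|e^{i\phi}U_1 - U_2\|_F^2$. Then by definition $\gpid(U_1,U_2) = \min_{\phi \in \mathbb{R}} f(\phi)$, and taking $\phi = 0$ in particular gives $f(0) = \frac{1}{2^n}\|U_1 - U_2\|_F^2 = \frobd(U_1,U_2)$. Hence $\gpid(U_1,U_2) \leq f(0) = \frobd(U_1,U_2)$.

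Since $f(\phi) \geq 0$ for all $\phi$, the minimum is attained (or approached) on a compact set after reducing $\phi$ modulo $2\pi$, so there is no subtlety about the minimum existing — the stated $\min$ in the definition is well defined. There is really no hard step here; the only thing to verify is that the definition of $\gpid$ is indeed an infimum (or minimum) that ranges over all phases including $\phi = 0$, which is apparent from \textbf{Definition} of the global phase-invariant distance given just above in the excerpt. Thus the proof is a single sentence observing that the left-hand side is a minimum over a family of non-negative quantities that includes the right-hand side as one of its members.
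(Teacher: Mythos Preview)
Your proposal is correct and matches the paper's own treatment: the paper simply states that the inequality ``trivially follows'' from the definitions, and your argument spells out exactly that triviality by observing that $\gpid$ is a minimum over $\phi$ of a family that includes $\frobd$ at $\phi = 0$.
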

\noindent \cite{huang2024learning} also showed that this average-case distance can be upper-bounded by the Frobenius norm distance between the unitaries.
\begin{fact} \label{fact:davg_ub}
    For unitaries $U_1$ and $U_2$, with corresponding unitary channels $\calU_1$ and $\calU_2$, the average-case distance is upper-bounded by the global phase-invariant distance,
    \begin{align}
        \davg(\calU_1,\calU_2) \leq  \gpid(U_1,U_2).
    \end{align}
\end{fact}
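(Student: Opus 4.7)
The plan is to reduce both sides of the claimed inequality to explicit scalar expressions in the single quantity $t := \frac{1}{2^n}|\Tr(U_1^\dagger U_2)|$ and then verify a one-variable inequality. First I would invoke Fact~\ref{fact:uni_chans_dist} directly to rewrite the left-hand side as
\begin{align}
    \davg(\calU_1,\calU_2) = \frac{2^n}{2^n+1}\bigl(1 - t^2\bigr).
\end{align}
Second, I would evaluate the minimization in the definition of $\gpid$. Expanding the Frobenius norm and using $U_1^\dagger U_1 = U_2^\dagger U_2 = I$ gives
\begin{align}
    \|e^{i\phi}U_1 - U_2\|_F^2 = 2\cdot 2^n - 2\,\Re\bigl(e^{-i\phi}\Tr(U_1^\dagger U_2)\bigr),
\end{align}
which is minimized by choosing $\phi$ so that $e^{-i\phi}\Tr(U_1^\dagger U_2) = |\Tr(U_1^\dagger U_2)|$, yielding the closed form $\gpid(U_1,U_2) = 2 - 2t$.

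With these two expressions in hand, the claim reduces to the scalar inequality $\tfrac{2^n}{2^n+1}(1 - t^2) \leq 2(1-t)$ for $t \in [0,1]$. The bound $t \leq 1$ itself follows from Cauchy--Schwarz, since $|\Tr(U_1^\dagger U_2)| \leq \|U_1\|_F \|U_2\|_F = 2^n$. I would prove the scalar inequality in two short steps: first, because $\tfrac{2^n}{2^n+1} \leq 1$, the left-hand side is at most $1 - t^2 = (1-t)(1+t)$; second, because $1+t \leq 2$, this is in turn at most $2(1-t)$. Combining the two steps gives exactly $\davg(\calU_1,\calU_2) \leq \gpid(U_1,U_2)$.

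There is essentially no obstacle here; the proof is a direct algebraic manipulation once the closed-form expressions for both distances are in hand. The only mildly subtle point is confirming that the minimization over the global phase $\phi$ collapses cleanly onto the modulus of the trace, which is the standard one-line computation used above.
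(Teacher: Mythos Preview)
Your proof is correct. The paper does not actually prove this statement; it records it as a fact attributed to \cite{huang2024learning} without giving an argument, so there is no proof in the paper to compare against. Your approach---reducing both $\davg$ and $\gpid$ to closed-form expressions in the single scalar $t = \tfrac{1}{2^n}|\Tr(U_1^\dagger U_2)|$ via Fact~\ref{fact:uni_chans_dist} and the explicit phase minimization, then checking the elementary inequality $\tfrac{2^n}{2^n+1}(1-t^2) \leq (1-t)(1+t) \leq 2(1-t)$---is exactly the natural direct verification and fills in what the paper leaves to the reference.
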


\subsection{Pauli Analysis} 

In this work, we will make frequent use of concepts from so-called ``Pauli analysis'' or ``quantum Boolean functions'', as originally defined by \cite{montanaro2010quantum}, which is a quantum analog of classical analysis of Boolean functions \cite{o2014analysis}.

Central to Pauli analysis are the four standard Pauli operators:
\begin{align}
    I := \begin{pmatrix}
        1 & 0 \\
        0 & 1
    \end{pmatrix}, \quad X := \begin{pmatrix}
        0 & 1 \\
        1 & 0
    \end{pmatrix}, \quad Y := \begin{pmatrix}
        0 & -i \\
        i & 0
    \end{pmatrix}, \quad Z := \begin{pmatrix}
        1 & 0 \\
        0 & -1
    \end{pmatrix}
\end{align}
Note that the identity matrix is often thought of as the ``trivial'' Pauli operator, as it has no effect on a quantum system. We will denote the set of single-qubit Pauli operators as $\pauli = \{I,X,Y,Z\}$. Let $\pauli^n = \{I,X,Y,Z\}^{\otimes n}$ denote the set of all $n$-qubit Pauli strings. 

In general, given an $n$-qubit Pauli string $Q\in \pauli^n$, we use $Q_i$ to refer to the $i^\text{th}$ Pauli of Pauli string $Q$. Similarly, given a set of indices $\calS \subseteq [n]$, $Q_\calS$ refers to the Pauli sub-string
\begin{align}
    Q_\calS = \bigotimes_{i \in [n]} Q_i^{\delta\{i\in \calS\}},
\end{align}
with the convention that $Q_i^0 = I$. Note, however, that we will abuse notation and, in the case of single-qubit Pauli $P \in \pauli$ use $P_i$ to refer to the ``single-qubit Pauli''
\begin{align}
    P_i := I^{\otimes (i-1)} \otimes P \otimes I^{\otimes (n-i)},
\end{align}
which applies $P$ to the $i^\text{th}$ qubit in an $n$-qubit system. We define the support of a Pauli string $Q \in \pauli^n$ to be the set of qubit indices upon which the Pauli acts non-trivially, that is
\begin{align}
    \supp(Q) = \{i \in [n]: Q_i \neq I\}.
\end{align}
The degree of a Pauli string $Q$ is the size of its support, 
\begin{align}
    |Q|=|\supp(Q)|.
\end{align}

Central to Pauli analysis is the observation that the set of Pauli strings $\pauli^n$ forms an orthonormal basis for any $n$-qubit quantum unitary $U \in U(2^n)$,
\begin{align} \label{eqn:unitary_decomp}
    U = \sum_{P \in \pauli^n} \widehat{U}(P) \cdot P, \quad \text{ where } \widehat{U}(P)=\frac{1}{2^n} \Tr (U^\dagger P).
\end{align}
Note that $\widehat{U}(P)$ are typically referred to as the Pauli coefficients of $U$. 

For a subset of Pauli strings, $\calS \subseteq \pauli^n$, we define the unitary's Pauli weight on that subset as
\begin{align}
    \wt^{\in \calS} [U] = \sum_{Q \in \calS} |\widehat{U}(Q)|^2.
\end{align} 
For notational convenience, we will use the notation $\wt^{\notin \calS}$ to refer to the weight in subset $\bar{\calS} = \calP^n \backslash \calS$ and the notation $\wt^{>k}$ to refer to the weight of Paulis with degree greater than $k$, i.e. $\calS = \{P \in \pauli^n: |P| > k\}$. Finally, note that Parseval's formula holds for the Pauli decomposition.
\begin{fact}[Parseval's Formula] \label{fact:parseval}
    For unitary $U \in U(2^n)$, 
    \begin{align}
        \frac{1}{2^n} \| U \|_F^2 = \sum_{P \in \pauli^n} |\widehat{U}(P)|^2 = \wt^{\in \pauli^n}[U] = 1
    \end{align}
\end{fact}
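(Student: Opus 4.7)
The plan is to prove Parseval's formula in two stages: first establish the middle equality $\frac{1}{2^n}\|U\|_F^2 = \sum_{P\in\pauli^n}|\widehat{U}(P)|^2$, which holds for any matrix (not just unitaries) and is a direct consequence of the orthogonality of the Pauli basis; second, use the unitarity hypothesis $U^\dagger U = I$ to conclude that both expressions equal $1$.

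The first step relies on the fact that the Pauli strings form an orthogonal basis under the Hilbert-Schmidt inner product, with $\Tr(P^\dagger Q) = 2^n \delta_{P,Q}$ for $P, Q \in \pauli^n$. I would substitute the decomposition from Equation~(\ref{eqn:unitary_decomp}) into $\|U\|_F^2 = \Tr(U^\dagger U)$, obtaining
\begin{align}
    \|U\|_F^2 = \Tr\left(\sum_{P\in\pauli^n} \overline{\widehat{U}(P)}\, P \cdot \sum_{Q\in\pauli^n} \widehat{U}(Q)\, Q\right) = \sum_{P,Q \in \pauli^n} \overline{\widehat{U}(P)}\,\widehat{U}(Q)\, \Tr(PQ).
\end{align}
Since each Pauli is Hermitian, $P^\dagger = P$, and the orthogonality relation kills all off-diagonal terms, leaving $\|U\|_F^2 = 2^n \sum_{P}|\widehat{U}(P)|^2$. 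Dividing by $2^n$ yields the middle equality, which by definition equals $\wt^{\in\pauli^n}[U]$.

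For the second step, since $U$ is unitary, $U^\dagger U = I_{2^n}$, so $\|U\|_F^2 = \Tr(I_{2^n}) = 2^n$, hence $\frac{1}{2^n}\|U\|_F^2 = 1$. Chaining the two equalities completes the proof.

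I do not expect any real obstacle here: both the orthogonality of Pauli strings and the unitarity identity are standard, and the computation is essentially a one-line exercise. The only mild subtlety is remembering the normalization $2^n$ in $\Tr(PQ) = 2^n \delta_{P,Q}$, which is exactly what cancels against the $\frac{1}{2^n}$ factor in the definition of $\widehat{U}(P)$ to make the Pauli coefficients the ``right'' ones for Parseval.
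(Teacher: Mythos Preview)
Your proof is correct and is the standard argument for Parseval's formula in the Pauli basis. The paper itself does not prove this statement; it is stated as a \textbf{Fact} without proof, as is customary for well-known identities, so there is no paper proof to compare against.
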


Finally, important to this work, we will distinguish between the support and degree of a unitary. The support of a unitary is the set of all qubits upon which the unitary acts non-trivially. Formally, for the $n$-qubit unitary $U$ defined in \Cref{eqn:unitary_decomp}, its support is the union of the supports of all the Pauli strings $P$ for which $\widehat{U}(P) \neq 0$, i.e.
\begin{align}
    \supp(U) = \bigcup_{P\in\paulis: \hat{U}(P) \neq 0} \{i \in [n]: P_i \neq I \}.
\end{align}
If $n$-qubit unitary has support $k$, where $k$ is small relative to $n$, we refer to $U$ as low-support. In analogy to the classical setting, this is similar to a $k$-junta. Meanwhile, the degree of a unitary is equal to the maximum degree of any Pauli string $P$ for which $\widehat{U}(P) \neq 0$. Formally, for the $n$-qubit unitary $U$ defined in \Cref{eqn:unitary_decomp}, its degree is
\begin{align}
    |U| = \max_{P\in\paulis: \hat{U}(P) \neq 0} |P|.
\end{align}
If $n$-qubit unitary has degree $k$, where $k$ is small relative to $n$, we refer to $U$ as low-degree. It is clear that any unitary with support $k$ has degree $k$. However, a unitary with degree $k$ could have, for example, non-zero Pauli coefficients on all Pauli strings of degree $k$ and, thus, support $n$. Therefore, a unitary of degree $k$ does not necessarily have support $k$. This implies that any low-support unitary is neccesarily a low-degree unitary, but any low-degree unitary is not neccesarily a low-support unitary. Note that this is analogous to how, in the classical setting, all $k$-juntas necessarily have degree-$k$, but a degree-$k$ Boolean function is not necessarily a $k$-junta.

\subsection{Classical Shadow Tomography} 
Central to our learning results will be the classical shadow tomography procedure of \cite{huang2020predicting}, which enables efficient learning of $\Tr(O_i \rho)$ for an arbitrary quantum state $\rho$ and a set of observables $\{O_i\}_i$. In particular, we will leverage the improved sample-complexity achieved by the shadow-norm result of \cite{huang2023learning} for Pauli observables.

\begin{lemma}[Classical Shadows for Low-Degree Pauli Observables - \cite{huang2020predicting,huang2023learning}] \label{thm:shadow_tomog}
    Assume we are given error parameter $\epsilon >0$, failure probability $\delta \in [0,1]$, degree $\ell \geq 1$, and
    \begin{align}
        N = \calO \left( \frac{3^\support}{\epsilon^2} \log \left(\frac{n^\support}{\delta}\right)\right)
    \end{align}
    copies of unknown $n$-qubit state $\rho$ that we can make random Pauli measurements on. Let $\calM$ be a set of $n$-qubit Pauli matrices of degree $\leq \ell$, i.e.
    \begin{align}
        \calM \subseteq \{P \in \pauli^n: |P| \leq \ell\}.
    \end{align}
    With probability at least $1-\delta$, we can output an estimate $\widetilde{s}(P)$ for each $P \in \calM$ such that 
    \begin{align}
        |\widetilde{s}(P)-\Tr(P\rho)| \leq \epsilon.
    \end{align}
    The computational complexity of this procedure scales as $\calO(|\calM| \cdot N)$.
\end{lemma}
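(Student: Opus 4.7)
The plan is to adapt the standard classical shadow tomography framework of \cite{huang2020predicting} to the setting of low-degree Pauli targets, exploiting the improved shadow-norm bound available in this special case. First I would specify the protocol: for each of the $N$ copies of $\rho$, independently and uniformly sample a single-qubit Pauli basis $U = \bigotimes_{j=1}^n U_j$ with $U_j \in \{X,Y,Z\}$, measure to obtain outcomes $b \in \{\pm 1\}^n$, and form the classical snapshot
\begin{align}
    \hat\rho \;=\; \bigotimes_{j=1}^n \left( 3\, U_j^\dagger \ket{b_j}\bra{b_j} U_j \;-\; I \right).
\end{align}
Each single-qubit factor is the inverse of the per-qubit measurement channel $\calM_1(\sigma) = \tfrac{1}{3}\sigma + \tfrac{2}{3}\tfrac{I}{2}$ (whose inverse sends $X,Y,Z \mapsto 3X,3Y,3Z$ and $I \mapsto I$), so that $\ex[\hat\rho] = \rho$ by tensor factorization.

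Next, for any $P \in \paulis$ with $|P| \le \support$, define the single-shot estimator $\hat s(P) = \Tr(P \hat\rho)$. Factorization across qubits yields $\ex[\hat s(P)] = \Tr(P\rho)$. The key technical step is the variance bound: a direct second-moment calculation gives $\ex[(3 U_j^\dagger \ket{b_j}\bra{b_j} U_j - I)^2] = 3 I$ on every qubit where $P$ acts non-trivially, while the corresponding factor equals $I$ on qubits outside $\supp(P)$. Combining these per-qubit contributions and using $|\Tr(P\rho)|^2 \leq 1$ yields $\mathrm{Var}[\hat s(P)] \le 3^{|P|} \le 3^\support$.

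To lift this variance bound to a high-probability guarantee, I would apply median-of-means: partition the $N$ snapshots into $K$ equal groups, compute the empirical mean of $\hat s(P)$ inside each group, and output the median. Chebyshev's inequality shows each group mean lies within $\epsilon$ of $\Tr(P\rho)$ with probability at least $3/4$ provided the group size is at least $4 \cdot 3^\support/\epsilon^2$; a standard Chernoff bound on the sign of the group-wise deviations then gives failure probability $\exp(-\Omega(K))$ for a single fixed $P$. Taking a union bound over $|\calM| \leq \binom{n}{\support} 3^\support \le n^\support$ (after absorbing the $3^\support$ factor into constants) and setting $K = \Theta(\log(n^\support/\delta))$ yields the stated total sample complexity $N = \calO(3^\support \epsilon^{-2} \log(n^\support/\delta))$. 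The runtime bound $\calO(|\calM| \cdot N)$ follows because each $\Tr(P\hat\rho)$ reduces to a product over the $\leq \support$ support qubits of $P$.

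The main obstacle is the tight $3^{|P|}$ variance bound (rather than the naive $4^{|P|}$ one gets from $\ket{O}_\infty^2 \cdot 4^{|P|}$): it requires carefully tracking the per-qubit second moment of the inversion formula against each non-identity slot of $P$, and it is precisely this gain that lets the overall \QACZ~learning algorithm scale quasi-polynomially rather than exponentially in the low-support parameter $\support = \calO(\log^d n)$. Given that the snapshot factorizes across qubits, the computation is straightforward in principle, but one must be careful to verify that qubits outside $\supp(P)$ really do contribute a factor of exactly $1$ (not merely $\leq 3$) to the variance — otherwise the $3^{|P|}$ bound degrades to $3^n$, destroying the efficiency.
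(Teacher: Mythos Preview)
The paper does not actually prove this lemma: it is stated in the preliminaries as a cited result from \cite{huang2020predicting,huang2023learning}, with no proof given. Your sketch correctly reconstructs the standard argument from those references --- random single-qubit Pauli measurements, the tensored inverse-channel snapshot, the per-qubit second-moment computation yielding the $3^{|P|}$ variance bound, median-of-means amplification, and a union bound over $|\calM| \le \mathrm{poly}(n^\ell)$ targets --- so there is nothing to compare beyond noting that your write-up matches the source.
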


\subsection{Unitary Projection}
Our sewing procedure will require an efficient procedure for projecting non-unitary matrices onto unitary matrices. Specifically, we require a projection which minimizes the Frobenius norm distance. As such, for an arbitrary matrix $A$, we will define $\proj_U (A)$ to be the projection of $A$ onto the unitary matrix that minimizes Frobenius norm distance, 
\begin{align} \label{eqn:proj_u}
    \proj_U (A) = \underset{B\in U(2^n)}{\min} \|A - B\|^2_F.
\end{align}
Note that this minimization task is the well-established ``Orthogonal Procrustes Problem" and has an efficiently computable, simple solution.
\begin{fact}[Orthogonal Procrustes Problem] \label{fact:project}
    For $m\times m$ matrix $A$ with singular value decomposition $A=U \Sigma V^\dagger$, 
    \begin{align}
        \proj_U (A) = U V^\dagger,
    \end{align}
    which can be computed in $\calO(m^3)$ time.
\end{fact}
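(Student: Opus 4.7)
The plan is to reduce the Procrustes minimization to maximizing a simple linear functional over unitaries, which becomes trivial after substituting the SVD.

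First I would expand the objective using the Hilbert--Schmidt inner product. For any matrix $A$ and any unitary $B \in U(m)$,
\begin{align}
    \|A - B\|_F^2 = \|A\|_F^2 + \|B\|_F^2 - 2\,\mathrm{Re}\,\Tr(A^\dagger B) = \|A\|_F^2 + m - 2\,\mathrm{Re}\,\Tr(A^\dagger B),
\end{align}
since $\Tr(B^\dagger B) = \Tr(I_m) = m$. Because $\|A\|_F^2$ and $m$ are fixed, minimizing $\|A-B\|_F^2$ over unitaries is equivalent to maximizing $\mathrm{Re}\,\Tr(A^\dagger B)$ over $B \in U(m)$.

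Next I would substitute the SVD $A = U \Sigma V^\dagger$, so $A^\dagger = V \Sigma U^\dagger$, and use cyclicity of the trace to write
\begin{align}
    \mathrm{Re}\,\Tr(A^\dagger B) = \mathrm{Re}\,\Tr(\Sigma\, U^\dagger B V) = \mathrm{Re}\,\Tr(\Sigma W),
\end{align}
where $W := U^\dagger B V$ is again unitary (it is a product of unitaries). Since $\Sigma = \mathrm{diag}(\sigma_1,\dots,\sigma_m)$ with $\sigma_i \geq 0$, this expands to $\sum_i \sigma_i\, \mathrm{Re}(W_{ii})$. The diagonal entries of any unitary satisfy $|W_{ii}| \leq 1$, so each term is at most $\sigma_i$, with equality when $W_{ii} = 1$. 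The upper bound $\sum_i \sigma_i$ is simultaneously achievable by $W = I_m$, which corresponds to $B = U V^\dagger$. Hence $\proj_U(A) = U V^\dagger$ as claimed.

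For the runtime, computing the SVD of an $m \times m$ matrix takes $O(m^3)$ time by standard numerical linear algebra (e.g., Golub--Reinsch bidiagonalization followed by iterative diagonalization), and the final product $U V^\dagger$ is another $O(m^3)$ matrix multiplication, giving $O(m^3)$ overall. The only mildly subtle step is verifying that the simultaneous maximization over the diagonal entries $W_{ii}$ is actually realizable by a single global choice of unitary $W$; this is where I expect to spend a sentence or two, but it is immediate since $W = I$ makes every inequality tight at once.
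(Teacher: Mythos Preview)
Your proof is correct and is the standard argument for the Orthogonal Procrustes problem. The paper itself does not prove this statement: it is recorded as a \emph{Fact} and treated as a well-known result from numerical linear algebra, so there is no paper proof to compare against. Your reduction to maximizing $\mathrm{Re}\,\Tr(\Sigma W)$ over unitary $W$ via the SVD, followed by the entrywise bound $|W_{ii}|\le 1$ saturated at $W=I$, is exactly the textbook derivation, and the $\calO(m^3)$ runtime claim is justified as you describe.
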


\subsection{The $CZ_k$ Gate}
In this work, we will use $CZ_k$ to refer to a control-$Z$ gate acting on $k$-qubits. Note that, unlike other quantum ``controlled'' operations, the $CZ_k$ does not need to distinguish between target and control qubits, since it only applies a phase when activated by the all ones state. Mathematically, the $CZ_k$ gate can be described by its action on the computational basis states $\ket{i}\in\{\ket{0},\ket{1}\}^{\otimes k}$, as
\begin{align}
    CZ_k \ket{i} = \begin{cases}
        - \ket{i}, &\textnormal{ if } \ket{i} = \ket{1_k} \\
        \ket{i}, &\textnormal{ otherwise}
    \end{cases},
\end{align}
where we use $\ket{1_k}=\ket{111...1}$ to denote the all ones state. Thus, the  $CZ_k$ unitary can be decomposed in terms of projectors as
\begin{align} \label{eqn:projector}
    CZ_k = -\ketbra{1_k}{1_k}+\sum_{\substack{j\in\{0,1\}^k\\j\neq 1_k}} \ketbra{j}{j}.
\end{align}
Alternatively, the $CZ_k$ unitary can be expressed in terms of its Pauli decomposition.
\begin{lemma}[$CZ_k$ Pauli Decomposition] \label{thm:czk_decomp}
The Pauli decomposition of the $CZ_k$ gate acting on $k$ qubits is given by 
\begin{align}
    CZ_k = \sum_{P_Z\in\{I, Z\}^{\otimes k}} \widehat{\alpha}_{CZ_k}(P_Z) \cdot P_Z,
\end{align}
with Pauli coefficients
\begin{align}
    \widehat{\alpha}_{CZ_k}(P_Z) = 
    \begin{cases}
        1-2^{-k+1}, & \textnormal{ if } \deg(P_Z)=0 \\
        2^{-k+1}, & \textnormal{ if } \deg(P_Z)=\textnormal{odd}, \\
        -2^{-k+1}, & \textnormal{ if } \deg(P_Z)\neq 0 \textnormal{ and } \deg(P_Z)=\textnormal{even}. 
    \end{cases}
\end{align}
Note that we use notation $P_Z$ to emphasize that Pauli strings consisting of any $X$ or $Y$ Paulis have zero Fourier mass.
\end{lemma}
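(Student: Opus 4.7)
The plan is to compute each Pauli coefficient $\widehat{\alpha}_{CZ_k}(P)$ directly from the formula $\widehat{\alpha}_{CZ_k}(P) = 2^{-k}\,\Tr(CZ_k^\dagger P)$ of \Cref{eqn:unitary_decomp}, using the compact projector form $CZ_k = I^{\otimes k} - 2\ketbra{1_k}{1_k}$ (which follows immediately from \Cref{eqn:projector} by grouping the identity terms) together with Hermiticity, $CZ_k^\dagger = CZ_k$.

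The first step is to show that the Pauli support of $CZ_k$ lies entirely in $\{I,Z\}^{\otimes k}$. The key observation is that $I$ and $Z$ are diagonal in the computational basis while $X$ and $Y$ have vanishing diagonal entries, so any Pauli string $P \in \pauli^k$ containing at least one $X$ or $Y$ factor has zero diagonal (diagonals of tensor products multiply entrywise). Since $CZ_k$ itself is diagonal, $\Tr(CZ_k \cdot P) = \sum_i (CZ_k)_{ii} P_{ii} = 0$, and hence $\widehat{\alpha}_{CZ_k}(P) = 0$.

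The second step is a direct computation on $P_Z \in \{I,Z\}^{\otimes k}$. Substituting the projector form yields
\begin{align*}
    \widehat{\alpha}_{CZ_k}(P_Z) \;=\; \frac{1}{2^k}\Tr(P_Z) \;-\; \frac{2}{2^k}\bra{1_k} P_Z \ket{1_k}.
\end{align*}
By trace-orthogonality of Paulis the first term equals $1$ when $P_Z = I^{\otimes k}$ and $0$ otherwise. The second term factorizes qubit-wise using $\bra{1} I \ket{1} = 1$ and $\bra{1} Z \ket{1} = -1$, giving $\bra{1_k} P_Z \ket{1_k} = (-1)^{|P_Z|}$. Splitting on whether $|P_Z|$ is zero, odd, or positive-even recovers the three advertised cases $1 - 2^{-k+1}$, $2^{-k+1}$, and $-2^{-k+1}$.

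This is a short direct calculation with no real obstacle; the only care needed is to invoke Hermiticity so the dagger inside the trace can be dropped. As an optional sanity check one can verify Parseval (\Cref{fact:parseval}): the identity term contributes $(1-2^{-k+1})^2$ and each of the $2^k - 1$ remaining $\{I,Z\}$-strings contributes $(2^{-k+1})^2$, and these sum to $1$ as required by $\|CZ_k\|_F^2 / 2^k = 1$.
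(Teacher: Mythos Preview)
Your proposal is correct and follows essentially the same route as the paper: both compute $\widehat{\alpha}_{CZ_k}(P)=2^{-k}\Tr(CZ_k\,P)$ using the identity $CZ_k=I-2\ketbra{1_k}{1_k}$, reduce to $2^{-k}\Tr(P_Z)-2^{-k+1}\bra{1_k}P_Z\ket{1_k}$, and evaluate the two pieces as $\delta(P_Z=I)$ and $(-1)^{|P_Z|}$ respectively. Your version is actually a touch more complete, since you explicitly argue that any Pauli containing an $X$ or $Y$ has zero coefficient (via the diagonal-matrix observation), whereas the paper simply restricts to $\{I,Z\}^{\otimes k}$ without justifying that step.
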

\begin{proof}[Proof of \Cref{thm:czk_decomp}]
    Using the $CZ_k$ projector decomposition given in \Cref{eqn:projector}, we can calculate the $CZ_k$ Pauli coefficient of for any Pauli string $P \in \{I,Z\}^{\otimes k}$ as
    \begin{align}
        \widehat{\alpha}_{CZ_k}(P) &= \frac{1}{2^k} \Tr \left(P^\dag CZ_k\right)\\
        &= -\frac{1}{2^k} \Tr \left(P \ketbra{1_k}{1_k}\right)+\frac{1}{2^k}\sum_{\substack{j\in\{0,1\}^k\\j\neq 1_k}}\Tr \left( P\ketbra{j}{j}\right) \\
        &= -\frac{1}{2^k} \bra{1_k}P \ket{1_k}+\frac{1}{2^k}\sum_{\substack{j\in\{0,1\}^k\\j\neq 1_k}}\bra{j}P\ket{j} \\
        &=-\frac{1}{2^k} \bra{1_k}P \ket{1_k}+\frac{1}{2^k}\left(\Tr(P)-\bra{1_k}P \ket{1_k}\right) \\
        &=\frac{1}{2^k}\Tr(P)-\frac{2}{2^k}\bra{1_k}P \ket{1_k} \\
        &=\frac{1}{2^k}\cdot 2^k \cdot \delta(P=I_k)-\frac{1}{2^{k-1}}\cdot (-1)^{\delta(\textnormal{deg}(P)=\textnormal{odd})} \\
        &=\delta(P=I_k)+\frac{1}{2^{k-1}}\cdot (-1)^{\delta(\textnormal{deg}(P)=\textnormal{even})}.
    \end{align}
\end{proof}
\noindent Note that this $k$-qubit decomposition can straightforwardly be extended to a $CZ_k$ gate acting on a $k$-qubit subset of an $n$-qubit system, as
\begin{align}
    CZ_k\otimes I_{n-k}  = \sum_{P_Z\in\{I, Z\}^{\otimes k}} \widehat{\alpha}_{CZ_k}(P_Z) \cdot P_Z\otimes I_{n-k},
\end{align}
where it is assumed without loss of generality that the $CZ_k$ gate acts on the first $k$ qubits of the system. From hereon, when we write $CZ_k$, the identity on the remaining $n-k$ qubits, i.e. $I_{n-k}$, will be assumed.

\section{Choi Representations and Heisenberg-Evolved Observables} \label{sec:choi_obs_relate}
The work of \cite{nadimpalli2023pauli} studied the Choi representation of quantum channels, while the work of \cite{huang2024learning} considered the Heisenberg-evolved observables of quantum circuits. Similar to \cite{huang2024learning}, in this work, we consider \QACZ~Heisenberg-evolved single-qubit Pauli observables. However, we will now establish that they are in fact closely related to the single-output \QACZ~channel Choi representations of \cite{nadimpalli2023pauli}.

In particular, the work of \cite{nadimpalli2023pauli} studied channels of the form
\begin{align}
    \calE_C(\rho) = \Tr_{n-1}(C\rho C^\dag),
\end{align}
where $C$ is the unitary implemented by a \QACZ~circuit and $\rho$ is a density matrix. Their results were presented in terms of the channel's Choi representation, given by
\begin{align}
    \Phi_{\calE_C} = \sum_{x,y \in \{0,1\}^n} \ketbra{x}{y} \otimes \calE_C \left(\ketbra{x}{y}\right). 
\end{align}
Note that $\Phi_{\calE_C}$ operates on $n+1$-qubits, where we denote the $n$-qubit register corresponding to the channel input as the ``in" register and the single qubit corresponding to the channel output as the ``out" register. From the definition of the Choi representation, it follows that
\begin{align} \label{eqn:choi_rep_chan}
    \calE_C(\rho)=\Tr_\text{in}\left(\Phi_{\calE_C} \left(I_\text{out} \otimes \rho^\top_\text{in}\right)\right).
\end{align}

Measuring the output of this channel with respect to a single-qubit observable $O$ results in the expectation value $\Tr(O \calE_C(\rho))$. Via algebraic manipulation of this expectation, we can solve for the dual channel $\calE^\dag_C (O)$, 
\begin{align}
    \Tr(O \calE_C(\rho)) &= \Tr\left(O \cdot \Tr_{n-1}(C\rho C^\dag)\right) = \Tr\left((I_{n-1}\otimes O) C\rho C^\dag\right)=\Tr(\underbrace{(C^\dag(I_{n-1}\otimes O) C)}_{\calE_C^\dag(O)}\rho ).
\end{align}
Therefore, the Heisenberg-evolved single-qubit observable of this single-output \QACZ~circuit is 
\begin{align} \label{eqn:heis_obs_def}
    \calE_C^\dag (O)= C^\dag(I_{n-1}\otimes O) C.
\end{align}
Alternatively, leveraging \Cref{eqn:choi_rep_chan},
\begin{align}
    \Tr\left(O\calE_C(\rho)\right) &= \Tr \left(O \Tr_\text{in}\left(\Phi_{\calE_C} (I_\text{out} \otimes \rho^\top_\text{in})\right)\right) \\
    &= \Tr \left((O_\text{out}\otimes I_\text{in}) \Phi_{\calE_C} (I_\text{out} \otimes \rho^\top_\text{in})\right) \\
    &= \Tr \left(\Tr_\text{out}\left((O_\text{out}\otimes I_\text{in}) \Phi_{\calE_C}\right)  \rho^\top_\text{in}\right) \\
    &= \Tr \bigg(\underbrace{\Tr_\text{out}\left((O_\text{out}\otimes I_\text{in}) \Phi_{\calE_C}\right)^\top}_{\calE_C^\dag(O)}  \rho_\text{in}\bigg),
\end{align}
we can also express the Heisenberg-evolved observable in terms of the Choi representation, as
\begin{align} \label{eqn:heis_choi}
   \calE_C^\dag (O)= \Tr_\text{out}\left((O_\text{out}\otimes I_\text{in}) \Phi_{\calE_C}\right)^\top.
\end{align}
Thus, \Cref{eqn:heis_obs_def} and \Cref{eqn:heis_choi} directly establish the link between \QACZ~Heisenberg-evolved single-qubit observables and single-qubit output \QACZ~Choi representations. 

Intuitively, the Choi representation contains a full description of the channel and can be used to calculate the channel expectation for any input state and observable pair $(\rho, O)$. Meanwhile, the Heisenberg-evolved observable restricts the channel output to a single measurement observable $O$, but can be used to compute the expectation for any input state $\rho$. This relationship is illustrated by the tensor network diagrams in \Cref{fig:tens_choi_heis}.

\begin{figure}[t]
    \centering
    \includegraphics[width=0.75\linewidth]{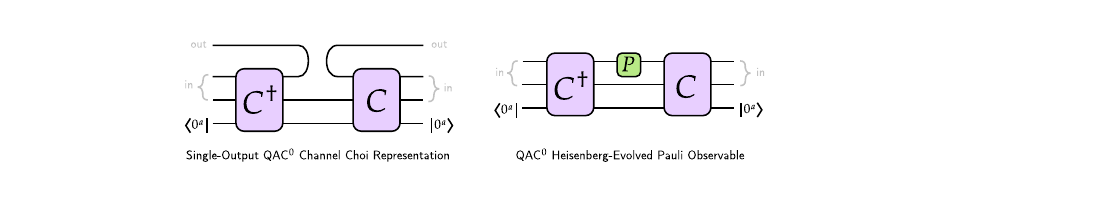}
    \caption{Tensor network diagrams illustrating the similar structures of the single-output \QACZ~channel Choi representations studied in the work of \cite{nadimpalli2023pauli} and the \QACZ~Heisenberg-evolved Pauli observables considered in \cite{huang2024learning} and in this work. Here, $C$ is the \QACZ~circuit's unitary and $P$ is a single-qubit observable.} 
    \label{fig:tens_choi_heis}
\end{figure}

Finally, we can explicitly relate the Pauli coefficients of each \QACZ~Heisenberg-evolved single-qubit Pauli observable to a Fourier coefficient of the single-output \QACZ~channel Choi representation. 
\begin{proposition} \label{thm:relate_coeffs}
    For an $n$-qubit Pauli $Q\in\pauli^n$, the Pauli coefficient of the Heisenberg-evolved observable $O_{P_\text{out}}$ is related to that of the single-output Choi representation $\Phi_{\calE_C}$ as
    \begin{align}
        \widehat{O}_{P_\text{out}} (Q) = 2 \cdot \widehat{\Phi}_{\calE_C}(P_\text{out} \otimes Q) \cdot (-1)^{\delta \{Q\text{ has an odd \# of Pauli } Ys\}}.
    \end{align}
\end{proposition}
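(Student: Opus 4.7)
The plan is to unfold the definition of $\widehat{O}_{P_\text{out}}(Q)$ as a normalized trace, substitute the Choi-representation formula for $\calE_C^\dag(P_\text{out})$ given in Equation~\eqref{eqn:heis_choi}, and then manipulate the expression using standard trace identities (cyclicity, partial-trace adjointness, and transpose inside the trace) until it reduces to a Fourier coefficient of $\Phi_{\calE_C}$. The factor of $2$ will emerge from the difference between the $n$-qubit normalization of $\widehat{O}_{P_\text{out}}$ and the $(n{+}1)$-qubit normalization of $\widehat{\Phi}_{\calE_C}$, while the sign $(-1)^{\delta\{Q \text{ has odd \# of }Y\text{s}\}}$ will come from the behavior of Paulis under transpose.

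Concretely, I would proceed as follows. First, write
\[
\widehat{O}_{P_\text{out}}(Q) \;=\; \frac{1}{2^n}\,\Tr\!\left(Q\,\calE_C^\dag(P_\text{out})\right)
\;=\; \frac{1}{2^n}\,\Tr\!\left(Q \cdot \Tr_\text{out}\!\left((P_\text{out}\otimes I_\text{in})\,\Phi_{\calE_C}\right)^{\!\top}\right),
\]
using Hermiticity of $Q$ and the identity from Equation~\eqref{eqn:heis_choi}. Second, use $\Tr(A^\top B)=\Tr(A B^\top)$ on the inner partial trace expression, together with the standard adjointness identity
\[
\Tr\!\left(\Tr_\text{out}(X)\cdot Y_\text{in}\right) \;=\; \Tr\!\left(X\cdot (I_\text{out}\otimes Y_\text{in})\right),
\]
to rewrite the above as
\[
\widehat{O}_{P_\text{out}}(Q) \;=\; \frac{1}{2^n}\,\Tr\!\left((P_\text{out}\otimes Q^{\top}_\text{in})\,\Phi_{\calE_C}\right).
\]
Third, observe that among the single-qubit Paulis only $Y$ is antisymmetric: $I^\top=I$, $X^\top=X$, $Z^\top=Z$, but $Y^\top=-Y$. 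Hence $Q^\top = (-1)^{\#_Y(Q)}\,Q$, where $\#_Y(Q)$ denotes the number of $Y$ factors in $Q$. Finally, recognize that on the $(n{+}1)$-qubit system one has
\[
\widehat{\Phi}_{\calE_C}(P_\text{out}\otimes Q) \;=\; \frac{1}{2^{n+1}}\,\Tr\!\left((P_\text{out}\otimes Q)\,\Phi_{\calE_C}\right),
\]
so that the preceding line becomes $2 \cdot (-1)^{\#_Y(Q)} \cdot \widehat{\Phi}_{\calE_C}(P_\text{out}\otimes Q)$, matching the claim.

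There is no serious obstacle here; the identity is essentially bookkeeping. The only subtle points to get right are the transpose coming from the Choi-to-dual-channel conversion (which requires the $Y^\top=-Y$ accounting and produces the sign factor) and the mismatch between the $1/2^n$ and $1/2^{n+1}$ normalization prefactors (which produces the factor of $2$). Once these are handled carefully, the identification with $\widehat{\Phi}_{\calE_C}(P_\text{out}\otimes Q)$ is immediate.
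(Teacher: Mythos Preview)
Your proposal is correct and follows essentially the same approach as the paper's own proof: both start from the normalized-trace definition of $\widehat{O}_{P_\text{out}}(Q)$, substitute the Choi-representation identity \eqref{eqn:heis_choi}, push the transpose onto $Q$ via trace manipulations to obtain $\frac{1}{2^n}\Tr\!\left((P_\text{out}\otimes Q^\top)\,\Phi_{\calE_C}\right)$, and then extract the factor of $2$ from the normalization mismatch and the sign from $Q^\top = (-1)^{\#_Y(Q)}Q$.
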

\begin{proof}[Proof of \Cref{thm:relate_coeffs}]
     For an $n$-qubit Pauli $Q\in\pauli^n$, leveraging \Cref{eqn:heis_choi}, 
    \begin{align}
        \widehat{O}_{P_\text{out}} (Q) &= \frac{1}{2^n} \Tr(O_{P_\text{out}} \cdot Q) \\
        &= \frac{1}{2^n} \Tr\left(\Tr_\text{out}((P_\text{out} \otimes I_\text{in}) \Phi_{\calE_C} )^\top Q\right) \\
        &= \frac{1}{2^n} \Tr\left(\Tr_\text{out}\left((P_\text{out} \otimes I_\text{in}) \Phi_{\calE_C} (I_\text{out} \otimes Q^\top)\right)\right) \\
         &= \frac{1}{2^n} \Tr\left(\Tr_\text{out}\left((P_\text{out} \otimes Q^\top) \cdot \Phi_{\calE_C} \right)\right) \\
         &= 2 \cdot \frac{1}{2^{n+1}} \Tr\left((P_\text{out} \otimes Q^\top) \cdot \Phi_{\calE_C} \right)\\
         &= 2 \cdot \widehat{\Phi}_{\calE_C}\left(P_\text{out} \otimes Q^\top\right) \\
         &= 2 \cdot \widehat{\Phi}_{\calE_C}\left(P_\text{out} \otimes Q\right) \cdot (-1)^{\delta \{Q\text{ has an odd \# of Pauli } Ys\}}.
    \end{align}
    This completes the proof.
\end{proof}

\section{Concentration of \texorpdfstring{\QACZ}~ Heisenberg-Evolved Observables} \label{sec:heis_evolve_obs}

Assume we are given an $n$-qubit \QACZ~circuit governed by unitary, 
\begin{align} \label{eqn:qac0_circ}
    C = U_1 \cdot CZ_{k_1} \cdot U_2  \cdot CZ_{k_2} \cdots  CZ_{k_{m-1}} \cdot U_{m} \cdot CZ_{k_m} \cdot U_{m+1} = \prod_{i=1}^m \left(U_{i} \cdot CZ_{k_i}\right)\cdot U_{m+1}.
\end{align}
where $\{k_1, ..., k_m\}\in [\gaterem,n]$ are the sizes of the large $CZ_{k_i}$ gates in the circuit, such that $\gaterem\geq 2$ is the size of the smallest. For reasons that will become apparent later, we will assume that $\gaterem= \calO (\log(n))$. The unitaries $U_i$ correspond to circuits consisting of arbitrary single-qubit gates and $CZ_k$ gates of width $k \leq \gaterem$.
The single qubit Heisenberg-evolved observables of circuit $C$ are denoted 
\begin{align}
    O_{P_i} = C^\dagger P_i C,
\end{align}
where $P_i\in\{X,Y,Z\}_i \otimes I_{[n]\backslash i}$ is a non-trivial Pauli on the $i$-th qubit. Since the \QACZ~circuit can have $CZ_k$ gates of unbounded width, in the worst case it is supported on all qubits, i.e.
\begin{align}
    |\supp(O_{P_i})| \leq \calO(n).
\end{align} 
Finally, we denote the Pauli decomposition of this observable as
\begin{align}
    O_{P_i} = \sum_{Q\in\calP^n} \widehat{O}_{P_i} (Q) \cdot Q.
\end{align}

In this section, we will show that circuit $C$ is in fact well approximated by the same circuit with all the $CZ_{k_i}$ gates of size $k_i \geq \gaterem= \calO (\log(n))$ removed. We will denote this approximate circuit as 
\begin{align}
    \widetilde{C} = U_1 \cdot U_2  \cdots   \cdot U_{m} \cdot U_{m+1} = \prod_{i=1}^{m+1} U_i
\end{align}
and its single-qubit Heisenberg-evolved observables as 
\begin{align}
    O^*_{P_i} = \widetilde{C}^\dagger P_i \widetilde{C}.
\end{align}
Since the circuit only has gates of width $\leq \gaterem = \calO (\log(n))$,  note that it has a much smaller support than $C$, i.e. 
\begin{align}
    \support=|\supp(O^*_{P_i})|\leq\calO\left(\log^dn\right),
\end{align}
where $d$ is the circuit depth (which is constant for \QACZ~circuits). The Pauli decomposition of these approximate observables will be expressed as
\begin{align}
    O^*_{P_i} = \sum_{Q\in\calP^n:|Q|<\support} \widehat{O}^*_{P_i} (Q) \cdot Q.
\end{align}



With these definitions, we will now leverage the proof techniques of \cite{nadimpalli2023pauli} to establish low-degree and low-support concentration of Heisenberg-evolved observables, so as to obtain an upperbound for $\wt^{>k}[O_{P_i}]$ by relating it to $O^*_{P_i}$.

\subsection{Low-Degree Concentration} \label{sec:ub_weight}
  
We begin by demonstrating that the low-degree spectral concentration result of \cite[Theorem 21]{nadimpalli2023pauli} for single-output \QACZ~channel Choi representations also holds for \QACZ~Heisenberg-evolved single-qubit Pauli observables. Note, that this proof will only be for \QACZ~circuits without ancilla qubits. For more discussion on the effect of ancillas we refer the reader to \Cref{sec:ancillas}.
\begin{proposition}[Low-Degree Concentration] \label{thm:low_deg_conc}
    Suppose $C$ is a depth-$d$, size-$s$ \QACZ~circuit acting on $n$ qubits. Let $O_{P_i} = C^\dagger P_i C$ be a Heisenberg-evolved single-qubit Pauli observable. Then for every degree $k \in [n]$,
    \begin{align}
        \wt^{>k}[O_{P_i}] \leq \calO \left(s^2 2^{-k^{1/d}}\right).
    \end{align}
\end{proposition}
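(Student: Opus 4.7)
The plan is to combine the two lemmas stated earlier in the excerpt (Large $CZ_k$ removal error and Low-support concentration) with a simple light-cone argument. The idea, following the template of \cite{nadimpalli2023pauli}, is to truncate the circuit by removing all ``large'' $CZ$ gates, bound the degree of the Heisenberg-evolved observable of the truncated circuit by a light-cone argument, and then use the truncation error to bound the high-degree Pauli weight of the original observable.

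Concretely, first I would choose the truncation threshold $\gaterem = \lceil k^{1/d}\rceil$, and apply the Large $CZ_k$ removal error lemma to obtain a circuit $\widetilde{C}$ in which every $CZ$ gate has width at most $k^{1/d}$. Since $\widetilde{C}$ has depth $d$ and every gate acts on at most $k^{1/d}$ qubits, a standard backward light-cone argument gives that the Heisenberg-evolved observable $O^*_{P_i} = \widetilde{C}^\dagger P_i \widetilde{C}$ is supported on at most $(k^{1/d})^d = k$ qubits. Consequently, the Pauli decomposition $O^*_{P_i} = \sum_Q \widehat{O}^*_{P_i}(Q)\cdot Q$ has $\widehat{O}^*_{P_i}(Q) = 0$ whenever $|Q| > k$.

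Next, I would use the fact that the Pauli basis is orthonormal with respect to the normalized Frobenius inner product, so that for any two operators $A, B$ on $n$ qubits one has $\frac{1}{2^n}\|A-B\|_F^2 = \sum_{Q \in \pauli^n} |\widehat{A}(Q) - \widehat{B}(Q)|^2$. Applying this to $A = O_{P_i}$ and $B = O^*_{P_i}$, and restricting the sum to $|Q| > k$, I obtain
\begin{align}
    \wt^{>k}[O_{P_i}] = \sum_{|Q|>k}|\widehat{O}_{P_i}(Q)|^2 = \sum_{|Q|>k}|\widehat{O}_{P_i}(Q) - \widehat{O}^*_{P_i}(Q)|^2 \leq \frac{1}{2^n}\|O_{P_i} - O^*_{P_i}\|_F^2.
\end{align}
Bounding the number of removed gates by the circuit size, $m \leq s$, and invoking the Large $CZ_k$ removal error lemma with $\gaterem = \lceil k^{1/d}\rceil$ then yields $\wt^{>k}[O_{P_i}] \leq 9 s^2 / 2^{k^{1/d}}$, which matches the claimed $\calO(s^2 \cdot 2^{-k^{1/d}})$ up to constants.

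The main conceptual work lies in the removal-error lemma (already proven in the excerpt), so the remaining obstacle here is essentially bookkeeping: carefully verifying the backward light-cone calculation in depth $d$ when the circuit contains a mixture of arbitrary single-qubit rotations (which do not grow support) and $CZ_{k'}$ gates of width $k' \leq k^{1/d}$ (which can fuse the supports of their inputs). Since single-qubit gates leave support unchanged and each multi-qubit layer multiplies the support by at most a factor $k^{1/d}$, the $d$-fold iterate gives the required bound of $k$. Once this is in place, plugging in $m \leq s$ and the choice of $\gaterem$ completes the proof.
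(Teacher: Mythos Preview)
Your proposal is correct and follows essentially the same approach as the paper: set $\gaterem = k^{1/d}$, use the light-cone argument to get $\wt^{>k}[O^*_{P_i}]=0$, and bound the remaining high-degree weight by the Frobenius removal error $9m^2/2^{\gaterem}$ with $m\le s$. Your Parseval step is in fact slightly more direct than the paper's, which routes through a triangle-inequality lemma of the form $\wt^{>k}[O_{P_i}]\le(\wt^{>k}[O^*_{P_i}]^{1/2}+\frac{1}{\sqrt{2^n}}\|O_{P_i}-O^*_{P_i}\|_F)^2$ before specializing to $\wt^{>k}[O^*_{P_i}]=0$.
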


At a high-level, this proof will follow that of \cite[Theorem 21]{nadimpalli2023pauli} and consist of two key steps. First, we will establish that if the \QACZ~circuit has no $CZ$ gates of width greater than $k^{1/d}$, then $\wt^{>k}[O_{P_i}]=0$. Second, we will show that removing these ``large'' $CZ$ gates does not significantly change the Heisenberg-evolved observable under the average-case measure. 

Analogous to \cite[Lemma 20]{nadimpalli2023pauli}, we begin by showing that the weight spectrum of the \QACZ~Heisenberg-evolved observable is zero for any degree greater than the size of the observable's support.
\begin{lemma} \label{thm:weight_zero}
    Let $O_{P_i}$ be an observable corresponding to a circuit $C$ measured with respect to $P_i$. If $O_{P_i}$ is supported on $\support$ qubits, i.e. $|\supp(O_{P_i})|=\support$, then the observable's weight is zero for any degree $>\support$, i.e. 
    \begin{align}
        \wt^{>\support}\left[O_{P_i}\right]=0.
    \end{align}
\end{lemma}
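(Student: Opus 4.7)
The plan is to prove this by a direct calculation of the Pauli coefficients $\widehat{O}_{P_i}(Q)$ for Pauli strings $Q$ with $|Q| > \support$, showing they must all vanish, from which the desired weight bound follows by Parseval's formula (\Cref{fact:parseval}).

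First I would use the assumption $|\supp(O_{P_i})| = \support$ to factor the observable. Let $\calS \subseteq [n]$ denote the support, with $|\calS| = \support$. By definition of support, $O_{P_i}$ acts as the identity on every qubit outside of $\calS$, so we can write
\begin{align}
    O_{P_i} = O_{P_i}^{(\calS)} \otimes I_{[n]\setminus \calS},
\end{align}
where $O_{P_i}^{(\calS)}$ is an operator acting only on the $\support$ qubits in $\calS$.

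Next, for any Pauli string $Q \in \pauli^n$ with $|Q| > \support$, I would observe that since $|\calS| = \support < |Q|$, there must exist at least one qubit $j \in [n] \setminus \calS$ such that $Q_j \in \{X, Y, Z\}$. Splitting $Q = Q_{\calS} \otimes Q_{[n]\setminus \calS}$ and using the tensor-product factorization of the trace gives
\begin{align}
    \widehat{O}_{P_i}(Q) = \frac{1}{2^n} \Tr\!\left((O_{P_i}^{(\calS)})^\dagger Q_{\calS}\right) \cdot \Tr\!\left(I_{[n]\setminus \calS}\, Q_{[n]\setminus \calS}\right).
\end{align}
The second factor is a product of traces of single-qubit Paulis on the qubits outside $\calS$, and because at least one of those Paulis is non-identity (and $\Tr(X) = \Tr(Y) = \Tr(Z) = 0$), this factor is zero. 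Hence $\widehat{O}_{P_i}(Q) = 0$ for every $Q$ with $|Q| > \support$.

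Finally, summing $|\widehat{O}_{P_i}(Q)|^2$ over all such $Q$ gives $\wt^{>\support}[O_{P_i}] = 0$. The argument is entirely routine, and there isn't really a hard step: the only thing worth being careful about is the definition of support used in the excerpt—namely that $\supp(O_{P_i})$ is the union of supports of Paulis with nonzero coefficient, which is precisely the condition needed to justify the tensor-product factorization $O_{P_i} = O_{P_i}^{(\calS)} \otimes I_{[n]\setminus \calS}$ used in the first step.
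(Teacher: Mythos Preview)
Your proof is correct and rests on the same core idea as the paper's: any Pauli $Q$ with $|Q| > \support$ must act nontrivially on some qubit outside $\supp(O_{P_i})$, forcing $\widehat{O}_{P_i}(Q) = 0$. The only difference is presentational. You go straight to the tensor factorization $O_{P_i} = O_{P_i}^{(\calS)} \otimes I_{[n]\setminus\calS}$ from the definition of support, whereas the paper first decomposes the circuit as $C = D_{P_i} L_{P_i}$ (gates outside versus inside the backward light-cone of $P_i$) and observes $O_{P_i} = L_{P_i}^\dagger P_i L_{P_i}$ before making the identical support argument. Since the hypothesis already assumes $|\supp(O_{P_i})| = \support$, the light-cone step is not logically needed here, so your route is a bit more streamlined; the paper's framing mainly serves to foreshadow why the support is bounded in the circuit setting.
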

\begin{proof}[Proof of \Cref{thm:weight_zero}]
    Given the circuit $C$ and an observable $P_i$, we will decompose the circuit as
    \begin{align}
        C = D_{P_i} L_{P_i},
    \end{align}
    where $L_{P_i}$ is the unitary corresponding to all gates in the circuit that are in the backwards light-cone of $P_i$ and $D_{P_i}$ is the unitary corresponding to the gates which are not. Thus, by the standard light-cone argument,
    \begin{align}
        C^\dagger P_i C = D_{P_i}^\dagger L_{P_i}^\dagger P_i  L_{P_i} D_{P_i} = L_{P_i}^\dagger P_i  L_{P_i}.
    \end{align}
    Plugging this into our expression for the weight of observable $O_{P_i}$ at degree $>k$,
    \begin{align}
        \wt^{>k}[O_{P_i}] &= \sum_{|Q|>k} \widehat{O}_{P_i}(Q)^2 = \sum_{|Q|>k} \frac{1}{4^n} \Tr(O_{P_i} Q)^2 = \sum_{|Q|>k} \frac{1}{4^n} \Tr(C^\dagger P_i C Q)^2 \\
        &= \sum_{|Q|>k} \frac{1}{4^n} \Tr(L_{P_i}^\dagger P_i L_{P_i}  Q)^2,
    \end{align}
    we see that the weight only depends on gates in the backwards light-cone of $P_i$. If $Q$ has degree greater than $|\supp(O_{P_i})|=\support$, this implies that there must exist at least one non-identity term in $Q$ which corresponds to an identity term in $O_{P_i}$, meaning $\Tr(L_{P_i}^\dagger P_i L_{P_i}  Q)=0$. Therefore, for $|Q|>\support$,  $\widehat{O}_{P_i}(Q)^2 = 0$, which implies that $\wt^{>\support}[O_{P_i}]=0$.
\end{proof}
\noindent Analogous to \cite[Lemma 23]{nadimpalli2023pauli}, we can also upper-bound the distance between the true observable, $O_{P_i}$, and the observable with ``large'' $CZ$ gates removed, $O_{P_i}^*$.
\begin{lemma} \label{thm:avg_dist}
    Let $O_{P_i}$ be an observable corresponding to a \QACZ~circuit $C$ measured with respect to $P_i$. Let $O^*_{P_i}$ be an observable corresponding to the \QACZ~circuit $\widetilde{C}$ (circuit $C$ with all $m$ $CZ_k$'s of size $k> \gaterem$ removed), measured with respect to $P_i$. The average-case distance between these two is observables is upper-bounded by
    \begin{align} \label{eqn:trunc_err}
        \frobd(O_{P_i},O_{P_i}^*) \leq \epsilon^* = \frac{9m^2}{2^{\gaterem}}.
    \end{align}
\end{lemma}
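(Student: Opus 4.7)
The plan is to reduce the bound to a telescoping sum over the $m$ removed large $CZ$ gates and then compute each per-gate contribution exactly using the projector form $CZ_k - I = -2\ket{1_k}\!\bra{1_k} \otimes I_{n-k}$ that follows from Eq.~\ref{eqn:projector}. First, I would index the large gates as $V_1,\ldots,V_m$ in circuit order and interpolate between $C$ and $\widetilde{C}$ via a sequence $C_0 = C, C_1, \ldots, C_m = \widetilde{C}$, where $C_j$ is obtained from $C_{j-1}$ by replacing $V_j$ with the identity. Setting $O_j := C_j^\dagger P_i C_j$, the triangle inequality gives $\|O_{P_i} - O_{P_i}^*\|_F \leq \sum_{j=1}^{m} \|O_{j-1} - O_j\|_F$. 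Writing $C_{j-1} = L_j V_j R_j$ and $C_j = L_j R_j$ for unitaries $L_j,R_j$ and applying unitary invariance of the Frobenius norm, this reduces to bounding $\|V_j^\dagger M_j V_j - M_j\|_F$, where $M_j := L_j^\dagger P_i L_j$ is unitary.

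The per-gate bound is where the real work happens. Writing $V_j = I + \Delta_j$ with $\Delta_j = -2 \Pi_j$ and $\Pi_j := \ket{1_{k_j}}\!\bra{1_{k_j}} \otimes I_{n-k_j}$, direct expansion plus the identity $\Pi_j^2 = \Pi_j$ yields the clean cancellation
\[
V_j^\dagger M_j V_j - M_j \;=\; -2 \bigl[(I-\Pi_j) M_j \Pi_j + \Pi_j M_j (I-\Pi_j)\bigr].
\]
The two summands are orthogonal in the Frobenius inner product (since $\Pi_j(I-\Pi_j) = 0$), so by Pythagoras and $\|M_j\|_\text{op} = 1$,
\[
\|O_{j-1} - O_j\|_F^2 \;\leq\; 8 \,\Tr(\Pi_j) \;=\; 8 \cdot 2^{n-k_j}.
\]
Using $k_j \geq \kappa$, summing over $j$, and dividing by $2^{n/2}$ gives $\frac{1}{2^{n/2}}\|O_{P_i} - O_{P_i}^*\|_F \leq 2\sqrt{2}\,m \cdot 2^{-\kappa/2}$, which on squaring yields $\frobd(O_{P_i}, O_{P_i}^*) \leq 8m^2/2^\kappa \leq 9m^2/2^\kappa$, as required.

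The main obstacle, and the step worth being careful about, is the per-gate bound itself. A naive two-step triangle inequality $\|V_j^\dagger M_j V_j - M_j\|_F \leq 2\|V_j - I\|_F$ together with the crude $\|V_j - I\|_F^2 = 4 \cdot 2^{n-k_j}$ would give a final constant of $16$ rather than the stated $9$. Exploiting the rank-one projector structure of $\Delta_j$, which makes the ``cross'' term in the expansion of $(I+\Delta_j^\dagger)M_j(I+\Delta_j) - M_j$ cancel against the quadratic term, and then using the Pythagorean split along $\Pi_j \oplus (I-\Pi_j)$, is what brings the constant into the claimed range. Everything else---the interpolation, the telescoping, the unitary invariance, and the lower bound $k_j \geq \kappa$---is routine.
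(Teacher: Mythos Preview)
Your proposal is correct and follows essentially the same approach as the paper: a hybrid/telescoping argument removing the $m$ large $CZ$ gates one at a time, combined with the projector identity $CZ_k = I - 2\ket{1_k}\!\bra{1_k}\otimes I_{n-k}$ to bound each per-gate contribution by $\sqrt{8\cdot 2^{n-k_j}}$. The only cosmetic difference is in the per-gate computation: the paper expands $\Tr(M_j^\dagger CZ_{k_j} M_j CZ_{k_j})$ directly and drops a nonnegative term to reach $8\cdot 2^{n-k_j}$, whereas you obtain the same bound via the off-diagonal block decomposition $V_j^\dagger M_j V_j - M_j = -2[(I-\Pi_j)M_j\Pi_j + \Pi_j M_j(I-\Pi_j)]$ and Pythagoras.
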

\begin{proof}[Proof of \Cref{thm:avg_dist}]
    Let us define the sub-circuits $C_j$ and $\widetilde{C}_j$ of $C$ and $\widetilde{C}$, respectively, as
    \begin{align}
        C_j &= U_j \prod_{i=j}^{m} \left(CZ_{k_i} U_{i+1}\right), \\
        \widetilde{C}_j &= \prod_{i=j}^{m} U_{i}.
    \end{align}
    Via unitary-invariance of the Frobenius norm and a hybrid argument,
    \begin{align}
        \left\| O_{P_i}-O_{P_i}^*\right\|_F &=  \left\| C_1^\dagger P_i C_1- \widetilde{C}_1^\dagger P_i \widetilde{C}_1 \right\|_F \\
        &= \left\| CZ_{k_1}^\dagger C_2^\dagger P_i C_2 CZ_{k_1} - \widetilde{C}_2^\dagger P_i \widetilde{C}_2 \right\|_F \\
        &= \left\| CZ_{k_1}^\dagger C_2^\dagger P_i C_2 CZ_{k_1}- C_2^\dagger P_i C_2 + C_2^\dagger P_i C_2 - \widetilde{C}_2^\dagger P_i \widetilde{C}_2 \right\|_F^2 \\
        &\leq \left\| CZ_{k_1}^\dagger C_2^\dagger P_i C_2 CZ_{k_1} - C_2^\dagger P_i C_2 \right\|_F^2 + \left\|C_2^\dagger P_i C_2 - \widetilde{C}_2^\dagger P_i \widetilde{C}_2 \right\|_F^2 \\
        & \quad \quad \quad \quad \vdots \\
        & \leq  \sum_{j=1}^m \left\| CZ_{k_j}^\dagger C_{j+1}^\dagger P_i C_{j+1} CZ_{k_j}- C_{j+1}^\dagger P_i C_{j+1} \right\|_F.  \label{eqn:sum_term}
    \end{align}
    Denoting $V_j = C_{j+1}^\dagger P_i C_{j+1}$ and using the fact that 
    \begin{align}
        CZ_{k_j} = I^{\otimes n} - 2\cdot I^{\otimes n-k_i}\otimes \ketbra{1^{k_j}}{1^{k_j}} = I - 2\ketbra{1^{k_j}}{1^{k_j}},
    \end{align}
    the terms in the summation of \Cref{eqn:sum_term} are upper-bounded as
    \begin{align}
        \left\| CZ_{k_j}^\dagger V_j CZ_{k_j}- V_j \right\|_F &= \sqrt{2\Tr(I)-2\Tr(V_j^\dagger CZ_{k_j} V_j CZ_{k_j}^\dagger)} \\
        &= \sqrt{2\Tr(I)-2\Tr(V_j^\dagger (I - 2\ketbra{1^{k_j}}{1^{k_j}}) V_j (I - 2\ketbra{1^{k_j}}{1^{k_j}}))} \\
        &= \sqrt{2\Tr(I)-2\Tr((V_j^\dagger - 2V_j^\dagger\ketbra{1^{k_j}}{1^{k_j}})  (V_j - 2V_j\ketbra{1^{k_j}}{1^{k_j}}))} \\
        &= \sqrt{2\Tr(I)-2\Tr(I - 4\ketbra{1^{k_j}}{1^{k_j}} + 4V_j^\dagger\ketbra{1^{k_j}}{1^{k_j}}V_j\ketbra{1^{k_j}}{1^{k_j}})} \\
        &= \sqrt{8\Tr(\ketbra{1^{k_j}}{1^{k_j}}) - 8\Tr(V_j^\dagger\ketbra{1^{k_j}}{1^{k_j}}V_j\ketbra{1^{k_j}}{1^{k_j}})} \\
        &= \sqrt{8\cdot 2^{n-k_j} - 8\Tr(V_j^\dagger\ketbra{1^{k_j}}{1^{k_j}}V_j\ketbra{1^{k_j}}{1^{k_j}})} \\
        &\leq  3\cdot 2^{(n-k_j)/2}.
    \end{align}
    Plugging this upper-bound back into \Cref{eqn:sum_term} and denoting $\gaterem=\min_j k_j$, we obtain the desired upper-bound for the Frobenius distance between the two observables, 
    \begin{align} \label{eqn:obs_ub}
        \frobd( O_{P_i},O_{P_i}^*) \leq \frac{1}{2^n} \left( \sum_{j=1}^m 3\cdot 2^{(n-k_j)/2} \right)^2 \leq \frac{1}{2^n} \left( 3m \cdot \max_j  2^{(n-k_j)/2} \right)^2   = \frac{9m^2}{2^\gaterem}.
    \end{align}
\end{proof}
\noindent Analogous to \cite[Claim 24]{nadimpalli2023pauli}, we will now show that the weight of $O_{P_i}$ can be upper-bounded by the weight of $O_{P_i}^*$ and its distance to $O_{P_i}^*$.
\begin{lemma} \label{thm:weight_ub}
    For any degree $k$, the weight of $O_{P_i}$ is upperbounded as
    \begin{align}
    \wt^{>k}[O_{P_i}]
    \leq \left( \wt^{>k}\left[O^*_{P_i}\right]^{1/2}+ \frac{1}{\sqrt{2^n}} \left\| O_{P_i}-O^*_{P_i}\right\|_F \right)^2.
\end{align}
\end{lemma}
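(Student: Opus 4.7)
The plan is to recognize that $\wt^{>k}[\cdot]^{1/2}$ is essentially a (normalized) Frobenius seminorm, and then apply the triangle inequality. Concretely, for any operator $A$ with Pauli decomposition $A = \sum_Q \widehat{A}(Q) \cdot Q$, define the ``high-degree part'' $A^{>k} := \sum_{|Q|>k} \widehat{A}(Q) \cdot Q$. By Parseval's formula (\Cref{fact:parseval} applied to the rescaled operator),
\begin{align}
\wt^{>k}[A] = \sum_{|Q|>k} |\widehat{A}(Q)|^2 = \frac{1}{2^n} \|A^{>k}\|_F^2,
\end{align}
so that $\wt^{>k}[A]^{1/2} = \frac{1}{\sqrt{2^n}} \|A^{>k}\|_F$. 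This turns the claim into a statement about Frobenius norms of projections in the Pauli basis.

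Next, I would apply the triangle inequality to the decomposition $O_{P_i}^{>k} = (O_{P_i} - O^*_{P_i})^{>k} + (O^*_{P_i})^{>k}$, which holds since the projection onto high-degree Paulis is linear:
\begin{align}
\|O_{P_i}^{>k}\|_F \leq \|(O_{P_i} - O^*_{P_i})^{>k}\|_F + \|(O^*_{P_i})^{>k}\|_F.
\end{align}

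The key observation to finish is that truncating to high-degree Pauli coefficients can only decrease the Frobenius norm, since the Pauli basis is orthogonal. That is, by Parseval applied to $O_{P_i} - O_{P_i}^*$,
\begin{align}
\|(O_{P_i} - O^*_{P_i})^{>k}\|_F^2 = 2^n \sum_{|Q|>k} |\widehat{O}_{P_i}(Q) - \widehat{O}^*_{P_i}(Q)|^2 \leq 2^n \sum_{Q} |\widehat{O}_{P_i}(Q) - \widehat{O}^*_{P_i}(Q)|^2 = \|O_{P_i} - O^*_{P_i}\|_F^2.
\end{align}
Substituting back, dividing by $\sqrt{2^n}$, and squaring yields the claimed bound.

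There is no real obstacle here; the statement is essentially the triangle inequality in the Hilbert space of operators equipped with the normalized Hilbert--Schmidt inner product, combined with the fact that the Pauli decomposition is orthonormal so that projection onto any subset of Paulis is a contraction. The only thing to be slightly careful about is keeping the normalization factor $1/\sqrt{2^n}$ consistent throughout, which follows cleanly from Parseval as stated in the preliminaries.
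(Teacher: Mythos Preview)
Your proof is correct and follows essentially the same approach as the paper: both recognize that $\wt^{>k}[\cdot]^{1/2}$ is a seminorm satisfying the triangle inequality, and then bound $\wt^{>k}[O_{P_i}-O^*_{P_i}]$ by the full normalized Frobenius norm. Your version is slightly more explicit in spelling out the projection $A^{>k}$ and why it is a contraction, whereas the paper simply asserts that ``the weight expression satisfies the triangle inequality.''
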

\begin{proof}[Proof of \Cref{thm:weight_ub}]
    Note that the weight expression satisfies the triangle inequality.
    Thus, the weight of $O_{P_i}$ can be decomposed with respect to  $O^*_{P_i}$ as follows. 
    \begin{align}
        \wt^{>k}[O_{P_i}] &= \wt^{>k}\left[O^*_{P_i}+ (O_{P_i}-O^*_{P_i})\right] \\
        &\leq \left( \wt^{>k}\left[O^*_{P_i}\right]^{1/2}+ \wt^{>k}\left[O_{P_i}-O^*_{P_i}\right]^{1/2}\right)^2\\
        &\leq \left( \wt^{>k}\left[O^*_{P_i}\right]^{1/2}+ \frac{1}{\sqrt{2^n}} \left\| O_{P_i}-O^*_{P_i}\right\|_F \right)^2.
    \end{align}
    This concludes the proof.
\end{proof}

Leveraging \Cref{thm:weight_zero}, \Cref{thm:avg_dist}, and \Cref{thm:weight_ub}, we can now straightforwardly prove \Cref{thm:low_deg_conc}.
\begin{proof}[Proof of \Cref{thm:low_deg_conc}]
    Let $O_{P_i}^*$ be defined such that all $CZ$ gates of size $\geq \gaterem = k^{1/d}$ are removed. Since the \QACZ~circuit is depth-$d$, the support of $O_{P_i}^*$ is bounded as $|\supp(O_{P_i}^*)| < \left(k^{1/d}\right)^d = k$.
    Therefore, by \Cref{thm:weight_zero},
    \begin{align} \label{eqn:weight_bound_k}
        \wt^{>k}[O_{P_i}^*] = 0.
    \end{align}
    Furthermore, plugging $\gaterem = k^{1/d}$ into \Cref{thm:avg_dist}, the distance between $O_{P_i}$ and $O_{P_i}^*$ is bounded as
    \begin{align} \label{eqn:dist_bound_k}
        \frobd( O_{P_i}, O_{P_i}^*) \leq \frac{9m^2}{2^{k^{1/d}}} \leq \frac{9s^2}{2^{k^{1/d}}},
    \end{align}
    where we leveraged the fact that the total number of gates removed $m$, must be less than the size $s$ of the circuit. Plugging \Cref{eqn:weight_bound_k} and \Cref{eqn:dist_bound_k} into the degree-$k$ weight upper bound of \Cref{thm:weight_ub}, we obtain the desired result.
\end{proof}

\subsection{Low-Support Concentration}

For the computational efficiency of the learning algorithm to be presented in this work, it is crucial to establish that, beyond low-degree concentrated, the $O_{P_i}$ observables' weight spectrum is concentrated on a set of Paulis with small support, i.e. low-support concentrated. 

To begin, we modify \Cref{thm:weight_zero} to show that the weight of a Heisenberg-evolved observable is zero for all Paulis that lie outside of its support.

\begin{lemma} \label{thm:weight_zero_support}
    Let $O_{P_i}$ be an observable corresponding to a circuit $C$ measured with respect to $P_i$. Let $\mathcal{S}=\{P \in \paulis: P_i = I,~ \forall i \notin \supp(O_{P_i})\}$ denote the set of Pauli strings in $O_{P_i}$'s support. Then $O_{P_i}$'s weight is zero for all Paulis acting non-trivially outside of its support, i.e. 
    \begin{align}
        \wt^{\notin \mathcal{S}}\left[O_{P_i}\right]=0.
    \end{align}
\end{lemma}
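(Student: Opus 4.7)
The plan is to derive the claim directly from the definition of $\supp(O_{P_i})$ given in the preliminaries on Pauli analysis, via a short contrapositive argument. The statement is essentially a tautology once one unpacks what ``support of a unitary'' means in terms of its Pauli decomposition, so no new analytic or circuit-level machinery is needed here; this is in contrast to \Cref{thm:weight_zero}, which is a superficially similar statement but phrased in terms of the degree rather than the support and therefore genuinely requires a light-cone argument.

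Concretely, I would proceed as follows. Fix any Pauli string $Q \notin \calS$. By the definition of $\calS$, there exists some qubit $j \notin \supp(O_{P_i})$ with $Q_j \neq I$. Suppose for contradiction that $\widehat{O}_{P_i}(Q) \neq 0$. Then the definition
\begin{align}
    \supp(O_{P_i}) = \bigcup_{P \in \paulis :\, \widehat{O}_{P_i}(P) \neq 0}\{i \in [n] : P_i \neq I\}
\end{align}
forces $j \in \supp(O_{P_i})$, contradicting our choice of $j$. Hence $\widehat{O}_{P_i}(Q) = 0$ for every such $Q$, and summing over $Q \notin \calS$ yields
\begin{align}
    \wt^{\notin \calS}[O_{P_i}] = \sum_{Q \notin \calS} |\widehat{O}_{P_i}(Q)|^2 = 0.
\end{align}

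If a more computational route is desired, the same fact can equivalently be derived via trace orthogonality: the support definition implies a tensor factorization $O_{P_i} = \tilde{O} \otimes I_{[n]\setminus \supp(O_{P_i})}$, after which $\widehat{O}_{P_i}(Q) = (1/2^n)\Tr(O_{P_i}^\dagger Q)$ factors as a product whose second factor contains $\Tr(Q_j)$, vanishing because any nontrivial single-qubit Pauli is traceless. There is no substantive obstacle along either route; the only item worth flagging is the contrast with \Cref{thm:weight_zero} noted above, which ensures we are not missing a hidden circuit-structural step. Once the lemma is in hand, it plugs directly into the low-support concentration bound (\Cref{thm:weight_ub_support_intro}) by applying it to $O^*_{P_i}$ in place of $O_{P_i}$ and combining with the distance bound of \Cref{thm:avg_dist_intro}.
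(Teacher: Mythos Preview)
Your proof is correct and, as you anticipated, is strictly more direct than the paper's. The paper's proof recycles the light-cone decomposition $C = D_{P_i} L_{P_i}$ from \Cref{thm:weight_zero}, writes $O_{P_i} = L_{P_i}^\dagger P_i L_{P_i}$, and then argues that any $Q \notin \calS$ has a nontrivial tensor factor on a qubit where $O_{P_i}$ acts as identity, forcing $\Tr(O_{P_i} Q) = 0$. But that final step is exactly your tautological argument from the Pauli-analysis definition of $\supp(O_{P_i})$, and the light-cone decomposition contributes nothing to it---the paper never actually uses any property of $L_{P_i}$ beyond the fact that $L_{P_i}^\dagger P_i L_{P_i} = O_{P_i}$. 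Your observation that the light-cone is genuinely needed for the degree statement (\Cref{thm:weight_zero}) but not for the support statement is exactly right: the degree version bounds $|\supp(O_{P_i})|$ by a circuit-structural quantity, whereas here the support is simply given.
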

\begin{proof}[Proof of \Cref{thm:weight_zero_support}]
    Given the circuit $C$ and an observable $P_i$, we will decompose the circuit as
    \begin{align}
        C = D_{P_i} L_{P_i},
    \end{align}
    where $L_{P_i}$ is the unitary corresponding to all gates in the circuit that are in the backwards light-cone of $P_i$ and $D_{P_i}$ is the unitary corresponding to the gates which are not. Thus, by the standard light-cone argument,
    \begin{align}
        C^\dagger P_i C = D_{P_i}^\dagger L_{P_i}^\dagger P_i  L_{P_i} D_{P_i} = L_{P_i}^\dagger P_i  L_{P_i}.
    \end{align}
    Plugging this into our expression for the weight of observable $O_{P_i}$ for Paulis outside the support,
    \begin{align}
        \wt^{\notin \mathcal{S}}[O_{P_i}] &= \sum_{Q\notin \mathcal{S}} \widehat{O}_{P_i}(Q)^2 = \sum_{Q\notin \mathcal{S}} \frac{1}{4^n} \Tr(O_{P_i} Q)^2 = \sum_{Q\notin \mathcal{S}} \frac{1}{4^n} \Tr(C^\dagger P_i C Q)^2 \\
        &= \sum_{Q\notin \mathcal{S}} \frac{1}{4^n} \Tr(L_{P_i}^\dagger P_i L_{P_i}  Q)^2,
    \end{align}
    we see that the weight only depends on gates in the backwards light-cone of $P_i$, and thus in the support of $O_{P_i}$. If $Q$ acts non-trivially outside $\mathcal{S}=\supp(O_{P_i})$, this implies that there must exist at least one non-identity term in $Q$ which corresponds to an identity term in $O_{P_i}$, meaning $\Tr(L_{P_i}^\dagger P_i L_{P_i}  Q)=0$. Therefore, for all $Q \notin \mathcal{S}$,  $\widehat{O}_{P_i}(Q)^2 = 0$, which implies that $\wt^{\notin \mathcal{S}}[O_{P_i}]=0$.
\end{proof}

\noindent Leveraging this result and a proof similar to that of \Cref{thm:weight_ub}, we achieve a low-support concentration result. In particular, we show that the weight of $O_{P_i}$ outside the set of Paulis in the support of $O_{P_i}^*$ is at most the distance between $O_{P_i}$ and $O_{P_i}^*$, which we proved in \Cref{thm:avg_dist} to decay with respect to the size of the support of $O_{P_i}^*$.

\begin{lemma} \label{thm:pauli_lb}
    For $\calS^* = \supp(O^*_{P_i})$, the weight of $O_{P_i}$ outside the support of $O_{P_i}^*$ is upper-bounded as
    \begin{align}
    \wt^{\notin\calS^*}[O_{P_i}]
    \leq \frobd( O_{P_i},O^*_{P_i}) \leq \epsilon^*.
\end{align}
\end{lemma}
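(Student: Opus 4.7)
The plan is to prove this by essentially a ``coefficient matching + Parseval'' argument, then invoke \Cref{thm:avg_dist} for the final inequality. The key observation is that \Cref{thm:weight_zero_support}, applied to $O^*_{P_i}$ with support set $\calS^* = \supp(O^*_{P_i})$, already tells us that all Pauli coefficients of $O^*_{P_i}$ vanish outside of $\calS^*$.

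First, I would rewrite the weight of $O_{P_i}$ outside $\calS^*$ in terms of the difference $O_{P_i} - O^*_{P_i}$. For any Pauli $Q \notin \calS^*$, by linearity of the Pauli decomposition and \Cref{thm:weight_zero_support},
\begin{align}
    \widehat{O}_{P_i}(Q) - \widehat{O}^*_{P_i}(Q) = \widehat{O}_{P_i}(Q) - 0 = \widehat{O}_{P_i}(Q).
\end{align}
Therefore summing the squared magnitudes over $Q \notin \calS^*$ gives
\begin{align}
    \wt^{\notin \calS^*}[O_{P_i}] = \sum_{Q \notin \calS^*} \left|\widehat{O}_{P_i}(Q) - \widehat{O}^*_{P_i}(Q)\right|^2.
\end{align}

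Second, I would extend the sum on the right over all $Q \in \pauli^n$ (which can only increase it, since each term is non-negative) and apply Parseval's formula (\Cref{fact:parseval}) to the matrix $O_{P_i} - O^*_{P_i}$:
\begin{align}
    \sum_{Q \notin \calS^*} \left|\widehat{O}_{P_i}(Q) - \widehat{O}^*_{P_i}(Q)\right|^2 \leq \sum_{Q \in \pauli^n} \left|\widehat{O}_{P_i}(Q) - \widehat{O}^*_{P_i}(Q)\right|^2 = \frac{1}{2^n} \left\|O_{P_i} - O^*_{P_i}\right\|_F^2 = \frobd(O_{P_i}, O^*_{P_i}).
\end{align}
Finally, \Cref{thm:avg_dist} gives $\frobd(O_{P_i}, O^*_{P_i}) \leq \epsilon^* = 9m^2/2^\gaterem$, completing the chain of inequalities.

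There is no real obstacle here: the whole argument is mechanical once \Cref{thm:weight_zero_support} is in hand, since that lemma nails down exactly which Pauli coefficients of $O^*_{P_i}$ are zero, and Parseval converts coefficient-level distances into a Frobenius-norm distance. The only subtle point worth stating carefully is that although $O_{P_i}$ and $O^*_{P_i}$ are unitaries (and hence each satisfies Parseval with total weight $1$), the statement we need applies Parseval to their \emph{difference}, which is not unitary; this is fine because Parseval holds for any matrix expressed in the Pauli basis, not just unitaries.
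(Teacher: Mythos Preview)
Your proof is correct and follows essentially the same route as the paper: both arguments use \Cref{thm:weight_zero_support} to kill the coefficients of $O^*_{P_i}$ outside $\calS^*$, then bound the remaining weight by the full Parseval sum $\frobd(O_{P_i},O^*_{P_i})$ and invoke \Cref{thm:avg_dist}. The only cosmetic difference is that the paper phrases the first step via a triangle inequality on $\wt^{\notin\calS^*}[\cdot]^{1/2}$, whereas you work directly at the coefficient level and actually obtain equality rather than an inequality in that step; your version is arguably cleaner.
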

\begin{proof}[Proof of \Cref{thm:weight_ub}]
    Note that the weight expression satisfies the triangle inequality. Thus, the weight of $O_{P_i}$ can be decomposed with respect to  $O^*_{P_i}$ as follows. 
    \begin{align}
        \wt^{\notin\calS^*}[O_{P_i}] &= \wt^{\notin\calS^*}\left[O^*_{P_i}+ (O_{P_i}-O^*_{P_i})\right] \\
        &\leq \left( \wt^{\notin\calS^*}\left[O^*_{P_i}\right]^{1/2}+ \wt^{\notin\calS^*}\left[O_{P_i}-O^*_{P_i}\right]^{1/2}\right)^2
    \end{align}
    By \Cref{thm:weight_zero_support}, we have that $\wt^{\notin\calS^*}\left[O^*_{P_i}\right] = 0$. Therefore, 
    \begin{align}
        \wt^{\notin\calS^*}[O_{P_i}] &\leq \wt^{\notin\calS^*}\left[O_{P_i}-O^*_{P_i}\right]\leq \frobd( O_{P_i},O^*_{P_i}).
    \end{align}
    This concludes the proof of this lemma.
\end{proof}

\section{Efficient Learning of \texorpdfstring{\QACZ}~ Circuit Unitaries} \label{sec:learn}
Leveraging the low-support concentration of Heisenberg-evolved \QACZ~observables, we will prove the main result of this work -- a sample and time efficient algorithm for learning $n$-output \QACZ~unitaries.
\begin{theorem}[Learning shallow circuits with many-qubit gates] \label{thm:main_result}
    Consider an unknown $n$-qubit, depth-$d$ \QACZ~circuit governed by unitary $C$. For error parameter $\epsilon = 1/\poly(n)$ and failure probability $\delta \in (0,1)$, we can learn a $2n$-qubit unitary $C_\text{sew}$ such that
    \begin{align}
        \davg(C_\text{sew}, C \otimes C^\dagger) \leq \varepsilon,
    \end{align}
    with high probability $1-\delta$. $C_\text{sew}$ can be learned with quasi-polynomial sample and time complexity.
\end{theorem}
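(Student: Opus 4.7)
The plan is to combine \Cref{alg:learning_alg} for learning individual Heisenberg-evolved observables with the sewing procedure from \Cref{thm:csew_bound_intro}, applied across all $3n$ single-qubit Pauli observables $\{X_i, Y_i, Z_i\}_{i \in [n]}$. For each $P_i$, I would run \Cref{alg:learning_alg} to obtain a low-support approximation $\widetilde{O}^{(\support)}_{P_i}$ to $O_{P_i} = C^\dagger P_i C$, and then assemble $C_\text{sew}$ via \Cref{eqn:csew_unitary}. The proof then reduces to picking the $CZ_k$-removal threshold $\gaterem$, the support size $\support$, and the per-coefficient shadow-tomography precision $\eta$ so that the composed error from \Cref{thm:learn_dist_intro} and \Cref{thm:csew_bound_intro} is at most $\varepsilon = 1/\poly(n)$, while keeping both sample- and time-complexity quasi-polynomial.

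For the parameter choice, I would set $\gaterem = c \log n$ for a constant $c$ large enough that $\epsilon^* = 9m^2 / 2^{\gaterem} \leq 9 s^2 / n^{c} \leq 1/\poly(n)$, using $m \leq s \leq \poly(n)$. Because the circuit has constant depth $d$, this forces $\support = \calO(\log^d n)$. I would then invoke classical shadow tomography (\Cref{thm:shadow_tomog}) with precision $\eta$ small enough that $2 \cdot 4^{\support} \eta^2 \leq 1/\poly(n)$ — concretely $\eta = 1 / (n^{c'} \cdot 2^{\support})$ for a suitable $c'$. Plugging into \Cref{eqn:e_bound} yields $\epsilon_{P_i} \leq 1/\poly(n)$ for every observable, and then \Cref{thm:csew_bound_intro} gives
\begin{align}
    \davg(C_\text{sew}, C \otimes C^\dagger) \;\leq\; \tfrac{1}{2} \sum_{i=1}^{n} \sum_{P \in \{X,Y,Z\}} \epsilon_{P_i} \;\leq\; \tfrac{3n}{2} \cdot \tfrac{1}{\poly(n)} \;\leq\; \varepsilon,
\end{align}
after absorbing the factor $n$ into the polynomial.

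For complexity, \Cref{thm:shadow_tomog} requires $N = \calO(3^{\support} \eta^{-2} \log(n^{\support}/\delta'))$ copies to learn all $\calO(n^{\support})$ degree-$\leq \support$ Pauli coefficients to precision $\eta$ with failure probability $\delta'$. Taking $\delta' = \delta/(3n)$ and union-bounding over the $3n$ observables, with $\support = \calO(\log^d n)$ and $\eta^{-1} = \poly(n) \cdot 2^{\support}$, this gives $n^{\calO(\log^d n)} \log(1/\delta)$ samples and tomography time. The sewing step invokes $\proj_U$ on matrices of effective support $\calO(\log^d n)$; by \Cref{fact:project} each SVD runs in $\calO(2^{\poly \log n})$ time, and there are $\calO(n)$ such projections, so the total runtime remains quasi-polynomial as claimed.

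The hard part will be the tightness of the parameter tradeoff: one must simultaneously ensure that (i) $\gaterem$ is large enough that $\epsilon^*$ is inverse-polynomial — which drives $\support$ upward through the depth-$d$ light-cone — (ii) $\eta$ is small enough to tame the $4^{\support}$ blow-up in \Cref{eqn:e_bound}, and yet (iii) the shadow-tomography cost $3^{\support}/\eta^2$ stays quasi-polynomial. A secondary subtlety is that \Cref{thm:csew_bound_intro} controls $\davg(C_\text{sew}, C \otimes C^\dagger)$ directly in terms of the Frobenius bounds $\epsilon_{P_i}$, so I would need to verify that instantiating $\proj_U$ on the explicit $\calO(\log^d n)$-qubit support (rather than on the full $2n$-qubit space) produces exactly the unitary whose error is controlled there, and that the light-cone/sewing reasoning goes through in the \QACZ~setting despite many-qubit gates whose forward light-cones can a priori span all $n$ qubits.
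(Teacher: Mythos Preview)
Your proposal is correct and matches the paper's own proof essentially step for step: the paper likewise invokes \Cref{alg:learning_alg} with $\gaterem = c\log n$, $\support = \calO(\log^d n)$, and $\eta^2 = 1/(n^b \cdot 4^{\support})$, plugs the resulting $\epsilon_{P_i}$ into the sewing bound to get $\davg \leq \tfrac{3n}{2}\max_{P_i}\epsilon_{P_i} = 1/\poly(n)$, and then argues quasi-polynomial time for the sewing step by noting each $\proj_U$ is an SVD on a matrix of support $\support+1$. Your two flagged subtleties are non-issues here: the $\proj_U$ restricted to the support is exactly the same unitary (tensored with identity) as the full-space projection since the input is identity outside that support, and the sewing error bound (\Cref{thm:csew_bound}) is a pure Frobenius/hybrid argument that never invokes light-cones, so many-qubit gates pose no obstacle.
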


\noindent At a high-level, the proof of this theorem consists of two main parts:
\begin{enumerate}
    \item In \Cref{sec:learn_obs}, we demonstrate that the \QACZ~circuit's Heisenberg-evolved single-qubit Pauli observables can be efficiently learned to high accuracy. This section is where most of the algorithm's novelty lies. Via classical shadow tomography, we show that an observable with $\calO(\log^d n)$-support can be learned in quasi-polynomial sample and time complexity. Then, leveraging the previously-established concentration results, we prove that this low-support learned observable is $1 /\poly(n)$-close to the true Heisenberg-evolved observable.
    \item In \Cref{sec:sew}, we leverage \cite[Section 5.2.2]{huang2024learning}'s procedure for ``sewing'' these learned Heisenberg-evolved Pauli observables into a unitary description of the circuit. Note that novel work is done to guarantee that the sewed unitary is close to the true unitary under the \emph{average-case} distance measure. (In \cite{huang2024learning}, guarantees were given according to the worst-case measure).
\end{enumerate} 
In \Cref{sec:improve_depth}, we also offer an efficient procedure to synthesize an explicit poly-logarithmic depth \QAC~circuit that implements a unitary $\frac{1}{\poly(n)}$-close to $C \otimes C^\dagger$.

\subsection{Approximate Learning of \texorpdfstring{\QACZ}~ Heisenberg-Evolved Observables} \label{sec:learn_obs}

In this section, we propose an efficient algorithm, \Cref{alg:learning_alg}, for learning an observable $\widetilde{O}^{(\support)}_{P_i}$, supported on $\support=\calO(\log^d n)$ qubits. The majority of the section will focus on proving the following learning guarantee, which establishes that $\widetilde{O}^{(\support)}_{P_i}$ is $1/\poly(n)$-close to the true \QACZ~Heisenberg-evolved Pauli observable $O_{P_i}$. 

\begin{lemma} \label{thm:learn_obs}
    Let $b \geq 2$ and $c \geq 3d$ be constants. Let $\delta \in (0,1)$ be a failure probability. For observable $O_{P_i}$,  with high probability, $1-\delta$, we can learn an approximate observable $\widetilde{O}^{(\support)}_{P_i}$ such that 
    \begin{align}
        \frobd\left( O_{P_i}, \widetilde{O}_{P_i}^{(\support)}\right) \leq \epsilon_{P_i} \leq   \frac{2}{n^b} + \frac{9d^2}{c^2 \cdot n^{c-2}}  = \frac{1}{\poly(n)}.
    \end{align}
    The sample and time complexity of this procedure are quasi-polynomial, $\calO \left( n^{\poly\log n} \cdot \log \left(1/\delta\right)\right)$.
\end{lemma}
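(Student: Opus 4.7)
The plan is to instantiate \Cref{alg:learning_alg} with a careful choice of the truncation parameter $\support$ and the per-coefficient shadow precision $\eta$, then plug into \Cref{thm:learn_dist_intro} (which bounds the output of the algorithm by $2\cdot 4^\support\cdot\eta^2+\epsilon^*$). The two sources of error I need to balance are the classical-shadows noise term $2\cdot 4^\support\eta^2$ and the large-$CZ_k$ removal error $\epsilon^*=9m^2/2^{\gaterem}$ from \Cref{thm:avg_dist_intro}. Since \Cref{thm:weight_ub_support_intro} (equivalently \Cref{thm:pauli_lb}) already guarantees that the true observable has essentially no weight outside the support of $O^*_{P_i}$, the greedy support-maximization step of \Cref{alg:learning_alg} will lock onto a support that contains almost all of $O_{P_i}$'s Pauli mass, and \Cref{thm:learn_dist_intro} packages this argument into a clean inequality.

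The parameter choice I would make is $\gaterem = c\log n$, which forces $\support = \gaterem^{d} = c^{d}\log^{d} n = \calO(\log^{d} n)$ by the same light-cone argument used in the proof of \Cref{thm:low_deg_conc}. Plugging this into \Cref{thm:avg_dist_intro}, together with the geometric bound $m\le dn/\gaterem$ coming from the fact that in each of the $d$ layers at most $n/\gaterem$ disjoint $CZ_k$ gates of width $\geq \gaterem$ can fit, yields
\begin{align}
\epsilon^* \;=\; \frac{9m^{2}}{2^{\gaterem}} \;\leq\; \frac{9d^{2}n^{2}}{c^{2}\log^{2}n}\cdot\frac{1}{n^{c}} \;\leq\; \frac{9d^{2}}{c^{2}\,n^{c-2}},
\end{align}
matching the second term in the stated bound. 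For the shadow-tomography precision, I take $\eta \leq 1/\bigl(n^{b/2}\cdot 2^{\support+1/2}\bigr)$, which is $1/\operatorname{quasi-poly}(n)$; this guarantees $2\cdot 4^\support\cdot\eta^2 \leq 2/n^{b}$ and matches the first term.

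For the sample/time complexity, I invoke \Cref{thm:shadow_tomog} with the set $\calM=\{Q\in\pauli^{n}:|Q|\leq\support\}$ and precision $\eta$ above. The sample count is $N=\calO\!\bigl(3^{\support}\eta^{-2}\log(n^{\support}/\delta)\bigr)=n^{\calO(\log^{d-1}n)}\log(1/\delta)$, which is quasi-polynomial. The computational cost of Steps 2--4 of \Cref{alg:learning_alg} is dominated by enumerating families $\mathfrak{F}_{\support}$, of which there are $\binom{n}{\support}=n^{\calO(\log^{d}n)}$; each candidate requires summing $4^{\support}$ magnitudes, so the overall runtime remains quasi-polynomial. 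A single union bound across the $|\calM|\le n^{\support}$ Pauli coefficients absorbs the per-coefficient failure probability of \Cref{thm:shadow_tomog} into the overall $\delta$.

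The main obstacle, and the reason \Cref{thm:learn_dist_intro} is nontrivial, is justifying that the greedy Step 2 of \Cref{alg:learning_alg} picks a support $T_\support$ which is essentially as good as the (unknown) true support $\calS^{*}=\supp(O^{*}_{P_i})$. The argument I would follow is: since $|\calS^{*}|\le\support$, $\calS^{*}$ itself is a candidate in the maximization of \Cref{eqn:max_supp}, so $T_\support$ has learned weight at least that of $\calS^{*}$; the learned weight differs from the true weight by at most $\calO(4^\support\eta^{2})$ per support (by triangle-inequality on each coefficient and Parseval), and the true weight outside $\calS^{*}$ is at most $\epsilon^{*}$ by \Cref{thm:pauli_lb}. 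Combining these gives that the true weight of $O_{P_i}$ outside $T_\support$ is at most $\epsilon^{*}+\calO(4^\support\eta^{2})$, and adding the within-$T_\support$ estimation error of $4^\support\eta^{2}$ yields the $2\cdot 4^\support\eta^{2}+\epsilon^{*}$ bound. Once that is in hand, plugging in the parameter choices above completes the proof of the lemma.
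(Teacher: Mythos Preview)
Your proposal is correct and follows essentially the same approach as the paper: you pick $\gaterem=c\log n$ (hence $\support=c^{d}\log^{d}n$), bound $m\le dn/\gaterem$ to get $\epsilon^*\le 9d^{2}/(c^{2}n^{c-2})$, choose $\eta$ so that $2\cdot 4^{\support}\eta^{2}\le 2/n^{b}$, and invoke \Cref{thm:shadow_tomog} for the quasi-polynomial sample and time bounds, exactly as the paper does. Your final paragraph is really a sketch of the proof of \Cref{thm:learn_dist_intro} (which the paper proves separately) rather than part of the proof of \Cref{thm:learn_obs} itself, but it is accurate and does no harm.
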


Importantly, our proposed algorithm fundamentally differs from those of \cite{huang2024learning} and \cite{nadimpalli2023pauli}. In particular, we cannot simply apply \cite[Lemma 10]{huang2024learning} because that result assumes that the true observable to be learned is low-degree. However, in our setting, the true \QACZ~observable is not low-degree (just \emph{close to} a low-degree observable in average-case distance), meaning that the procedure's guarantees no longer hold. Meanwhile, the learning algorithm of \cite{nadimpalli2023pauli} simply leverages the low-degree concentration result to efficiently learn a degree-$\support$ Choi representation, which is close to the true single-output \QACZ~channel Choi representation. However, as will become apparent in the next section, to achieve an efficient computational complexity for the observable sewing procedure, it is critical that we learn an observable which has an $\support$-qubit \emph{support}. That is, we need to learn an observable which is not only low-degree, but also low-support. We cannot simply sample from the low-degree support, because apriori we do not actually know which qubits the observable is supported on.

\subsubsection{The Low-Support Observable Learning Algorithm}
We will now describe \Cref{alg:learning_alg} -- our proposed procedure for learning the Heisenberg-evolved single-qubit Pauli observables of a \QACZ~circuit. With this algorithm, we aim to learn, for each observable $O_{P_i}$, an approximation,
\begin{align}
    \widetilde{O}_{P_i} = \sum_{Q \in \calP^n} \widetilde{O}_{P_i}(Q) \cdot Q,
\end{align}
such that the normalized Frobenius distance is guaranteed to be small, 
\begin{align}
    \frobd \left( O_{P_i}, \widetilde{O}_{P_i}\right) = \sum_{Q \in \calP^n} \left| \widehat{O}_{P_i}(Q)-\widetilde{O}_{P_i}(Q)\right|^2 \leq \epsilon_{P_i}.
\end{align}
Note that $O_{P_i}$ could be supported on all $n$ qubits, meaning $\widehat{O}_{P_i}(Q)$ could be non-zero for all $4^n$ possible Paulis $Q$. Thus, if we were simply to try and learn all the Pauli coefficients of $O_{P_i}$, we would require a learning algorithm with exponential complexity. 

However, in \Cref{sec:heis_evolve_obs}, we saw that $O_{P_i}$ is close in distance to $O^*_{P_i}$, where $|\supp(O^*_{P_i})|\leq \support = O(\log^dn)$. This implies that $O_{P_i}$ is concentrated on the set of Paulis $S^*_\support = \{P \in \paulis: P_i =I,~\forall i \notin \supp(O^*_{P_i})\}$. Therefore, we should be able to learn a decent approximation of $O_{P_i}$ simply by learning the $S_\support^*$-truncated approximation,
\begin{align} \label{eqn:approx_trunc}
    \widetilde{O}^*_{P_i} = \sum_{Q \in S_\support^*} \widetilde{O}_{P_i}(Q) \cdot Q.
\end{align}
Note, however, that we will be learning the coefficients directly from $O_{P_i}$ and do not apriori know which qubits are contained in $\supp(O^*_{P_i})$. Therefore, we will first need to learn approximations of all the degree-$\support$ Pauli coefficients of $O_{P_i}$, i.e. 
\begin{align}
    \widetilde{O}_{P_i}(Q), \quad \forall~Q\in \calF_\support = \{P \in \pauli^n: |P| \leq \support\}.
\end{align}
Then, we will select the learned $\support$-qubit support with maximal weight. 

Formally, to describe this we will need to introduce a bit of notation. Let $S = \{L \subset [n]:|L|=\support\}$ denote the set of all possible $\support$-qubit subsets of the $n$ total qubits. For an $\support$-qubit subset $s \in S$, define the set of all Paulis supported on that set as
\begin{align}
    \calF_{\{s\}} = \left\{P_s \otimes I_{\bar{s}}~|~\forall P \in \pauli^\support\right\}.
\end{align}
Finally, denote the set of all possible $\support$-qubit supports as  
\begin{align}
    \mathfrak{F}_\support = \left\{ \calF_{\{s\}}~|~\forall s \in S\right\}.
\end{align}
Thus, the set of Paulis in an $\support$-qubit support with maximal weight is defined as
\begin{align} \label{eqn:choose_support}
    T_\support = \underset{  \calF_{\{s\}} \in \mathfrak{F}_\support}{\arg\max} \sum_{Q \in S} \left|\widetilde{O}_{P_i}(Q)\right|^2.
\end{align}
We will set any learned coefficients outside this set to zero, i.e.
\begin{align}
    \widetilde{O}^{(\support)}_{P_i}(Q) = \begin{cases}
        \widetilde{O}_{P_i}(Q), & \text{ if } Q \in T_\support, \\
        0, & \text{ otherwise}
    \end{cases}.
\end{align}
Thus, \Cref{alg:learning_alg} will learn the observable 
\begin{align}
    \widetilde{O}^{(\support)}_{P_i} = \sum_{Q \in T_\support} \widetilde{O}^{(\support)}_{P_i}(Q) \cdot Q.
\end{align}

\subsubsection{Observable Learning Guarantees}
We will now prove guarantees for \Cref{alg:learning_alg} -- namely, that the learned observable $ \widetilde{O}^{(\support)}_{P_i}$ is $1/\poly(n)$-close to the true observable $O_{P_i}$.
\begin{lemma}[Observable Learning Guarantees] \label{thm:learn_dist}
    Let $O_{P_i}$ be a \QACZ~Heisenberg-evolved Pauli observable, which is $\epsilon^*$-close to the observable $O_{P_i}^*$, with all gates of width $\geq \gaterem$ removed. Furthermore, suppose that we can learn all the  degree-$\support$ Pauli coefficients of $O_{P_i}$ to precision $\eta$, i.e.
    \begin{align} \label{eqn:approx_err}
        \left| \widehat{O}_{P_i}(Q)-\widetilde{O}_{P_i}(Q)\right| \leq \eta, \quad \forall Q \in \{P \in \pauli^n: |P|\leq \support\}.
    \end{align}
    Leveraging these learned coefficients, \Cref{alg:learning_alg} will produce a learned  observable $\widetilde{O}^{(\support)}_{P_i}$, such that
    \begin{align}
        \frobd \left( \widetilde{O}^{(\support)}_{P_i}, O_{P_i} \right) \leq 2 \cdot  4^\support \cdot \eta^2 + \epsilon^*.
    \end{align}
\end{lemma}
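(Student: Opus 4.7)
The approach is to use Parseval (Pauli-orthogonality) to split the normalized Frobenius distance into a \emph{learning error} on the kept Paulis and an \emph{off-support mass} term, then bound each separately. Since $\widetilde{O}^{(\support)}_{P_i}$ is supported only on the Paulis in $T_\support$, orthogonality of the Pauli basis gives
\begin{align}
\frobd\bigl(\widetilde{O}^{(\support)}_{P_i}, O_{P_i}\bigr) \;=\; \underbrace{\sum_{Q \in T_\support}\bigl|\widehat{O}_{P_i}(Q) - \widetilde{O}_{P_i}(Q)\bigr|^2}_{=:\,\mathrm{Err}(T_\support)} \;+\; \underbrace{\sum_{Q \notin T_\support}\bigl|\widehat{O}_{P_i}(Q)\bigr|^2}_{=:\,\wt^{\notin T_\support}[O_{P_i}]}.
\end{align}
Since $|T_\support| \leq 4^\support$ and each Pauli coefficient is learned to pointwise accuracy $\eta$ by \Cref{eqn:approx_err}, the first term is immediately $\leq 4^\support \eta^2$. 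The remaining work is to show $\wt^{\notin T_\support}[O_{P_i}] \leq 4^\support \eta^2 + \epsilon^*$, which combined with the first bound delivers the announced $2\cdot 4^\support \eta^2 + \epsilon^*$.

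For the off-support mass, the plan is to compare $T_\support$ against the oracle $\support$-qubit support $\calS^* = \supp(O^*_{P_i})$. Since $\calS^* \in \mathfrak{F}_\support$, the arg max in \Cref{eqn:max_supp} yields $\sum_{Q\in T_\support}|\widetilde{O}_{P_i}(Q)|^2 \geq \sum_{Q\in \calS^*}|\widetilde{O}_{P_i}(Q)|^2$; meanwhile \Cref{thm:pauli_lb} gives $\sum_{Q\notin \calS^*}|\widehat{O}_{P_i}(Q)|^2 \leq \epsilon^*$, and Parseval (\Cref{fact:parseval}) gives $\sum_Q|\widehat{O}_{P_i}(Q)|^2 = 1$ because $O_{P_i} = C^\dagger P_i C$ is unitary. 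Transferring the arg-max inequality from \emph{learned} to \emph{true} weights using $|\widehat{O}_{P_i}(Q) - \widetilde{O}_{P_i}(Q)|\leq\eta$ on the (at most $4^\support$) Paulis in each of $T_\support$ and $\calS^*$, one concludes $\sum_{Q\in T_\support}|\widehat{O}_{P_i}(Q)|^2 \geq (1 - \epsilon^*) - 4^\support\eta^2$, equivalently $\wt^{\notin T_\support}[O_{P_i}]\leq \epsilon^* + 4^\support \eta^2$, as needed.

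The delicate step is this learned-to-true weight conversion. Writing $\Delta(Q) := \widehat{O}_{P_i}(Q) - \widetilde{O}_{P_i}(Q)$ and expanding $\widetilde{O}_{P_i}(Q)^2 = \widehat{O}_{P_i}(Q)^2 - 2\widehat{O}_{P_i}(Q)\Delta(Q) + \Delta(Q)^2$ produces a term linear in $\Delta$ that Cauchy--Schwarz only bounds by $O(2^\support \eta)\sqrt{W_S}$, not directly by $O(4^\support \eta^2)$. The bookkeeping to either cancel these linear cross-terms between the $T_\support$ and $\calS^*$ sides of the arg-max inequality, or to absorb them into the $4^\support \eta^2$ slack in the target regime (where $\eta = 1/\poly(n)$ while $2^\support = n^{\Theta(\log^{d-1} n)}$ is superpolynomial, so $4\cdot 2^\support \eta \ll 4^\support \eta^2$), is the only nontrivial step and is where the proof must be careful to obtain the clean $2\cdot 4^\support \eta^2 + \epsilon^*$ bound.
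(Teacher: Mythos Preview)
Your overall strategy is exactly the paper's: Parseval-split into the on-$T_\support$ learning error plus the off-support mass $\wt^{\notin T_\support}[O_{P_i}]$, then compare $T_\support$ to the oracle support $\calS^*$ via the argmax property together with \Cref{thm:pauli_lb} and \Cref{fact:parseval}. You are also right that the learned-to-true weight transfer is the only nontrivial step.

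However, neither of your two proposed fixes for the cross term can deliver the stated $2\cdot 4^\support\eta^2 + \epsilon^*$, and in fact that bound is false without an extra smallness hypothesis on $\eta$. Your absorption argument points the wrong way: with the downstream choice $\eta^2 = 1/(n^b 4^\support)$ from \Cref{thm:learn_obs} one has $2^\support\eta = n^{-b/2}$ and $4^\support\eta^2 = n^{-b}$, so the linear term \emph{dominates} rather than being absorbed. Cancellation also fails: take $\support=1$, $O_{P_i}=X_1$ (so $\epsilon^*=0$), $\eta=0.4$, and adversarial estimates $\widetilde O_{P_i}(X_1)=0.6$, $\widetilde O_{P_i}(X_2)=\widetilde O_{P_i}(Y_2)=\widetilde O_{P_i}(Z_2)=0.4$, others $0$; then the algorithm selects $T_\support=\{2\}$ and $\frobd(\widetilde O^{(\support)}_{P_i},O_{P_i})=1+3\eta^2=1.48>1.28=2\cdot 4\cdot\eta^2$. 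The paper's own proof stumbles at this same step: the displayed ``triangle inequality'' $|\widetilde O(Q)|^2 \le |\widehat O(Q)|^2 + |\widetilde O(Q)-\widehat O(Q)|^2$ is not a valid pointwise inequality, and the subsequent Parseval-style move tacitly assumes $\sum_Q|\widetilde O^{(\support)}_{P_i}(Q)|^2=1$, which the (non-unitary) learned observable need not satisfy. A correct execution of your plan, using Cauchy--Schwarz on the cross term, gives only $\wt^{\notin T_\support}[O_{P_i}] \le \epsilon^* + O(2^\support\eta)+O(4^\support\eta^2)$; this weaker bound is still $1/\poly(n)$ for the parameters of \Cref{thm:learn_obs}, so the downstream conclusions survive, but the lemma as stated does not---and you were right to flag the step as suspicious.
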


\begin{proof}[Proof of \Cref{thm:learn_dist}]
    The error can be decomposed as,
    \begin{align}
        \frobd \left( \widetilde{O}^{(\support)}_{P_i}, O_{P_i} \right) &= \sum_{Q \in \calP^n} \left| \widehat{O}_{P_i}(Q)-\widetilde{O}^{(\support)}_{P_i}(Q)\right|^2 \\
        &= \sum_{Q\in T_\support} \left| \widehat{O}_{P_i}(Q)-\widetilde{O}^{(\support)}_{P_i}(Q)\right|^2 + \sum_{Q\notin T_\support} \left| \widehat{O}_{P_i}(Q)-\widetilde{O}^{(\support)}_{P_i}(Q)\right|^2\\
        &\leq |T_\support|\cdot \eta^2 + \sum_{Q\notin T_\support} \left| \widehat{O}_{P_i}(Q)\right|^2.
    \end{align}
    In order to establish the desired error bound, we will upper-bound $\sum_{Q\notin T_\support} \left| \widehat{O}_{P_i}(Q)\right|^2$. Leveraging \Cref{eqn:choose_support} and the triangle inequality,
    \begin{align}
        \sum_{Q \in S_\support^*} \left| \widetilde{O}^{(\support)}_{P_i}(Q)\right|^2 &\leq \sum_{Q \in T_\support} \left| \widetilde{O}^{(\support)}_{P_i}(Q)\right|^2 \\
        & \leq \sum_{Q \in T_\support} \left(\left| \widehat{O}_{P_i}(Q)\right|^2+\left| \widetilde{O}^{(\support)}_{P_i}(Q)-\widehat{O}_{P_i}(Q)\right|^2\right) \\
        & \leq |T_\support| \cdot \eta^2 + \sum_{Q \in T_\support} \left| \widehat{O}_{P_i}(Q)\right|^2,
    \end{align}
    which implies that 
    \begin{align}
        \sum_{Q \in T_\support} \left| \widehat{O}_{P_i}(Q)\right|^2 \geq \left(\sum_{Q \in S_\support^*} \left| \widetilde{O}^{(\support)}_{P_i}(Q)\right|^2\right) - |T_\support| \cdot \eta^2.
    \end{align}
    Leveraging Parseval's (\Cref{fact:parseval}) and the fact that $\widetilde{O}^{(\support)}_{P_i}$ only has non-zero coefficients for Paulis in $T_\support$,
    \begin{align}
        \sum_{Q \notin T_\support} \left| \widehat{O}_{P_i}(Q)\right|^2 \leq |T_\support| \cdot \eta^2+\sum_{Q \notin S_\support^*} \left| \widetilde{O}^{(\support)}_{P_i}(Q)\right|^2 =|T_\support| \cdot \eta^2+\sum_{Q \notin S_\support^*, Q \in T_\support} \left| \widetilde{O}^{(\support)}_{P_i}(Q)\right|^2
    \end{align}
    Using the triangle inequality and bound from \Cref{thm:pauli_lb}, we obtain an upper-bound in terms of $\eta$ and $\epsilon^*$,
    \begin{align}
        \sum_{Q\notin T_\support} \left| \widehat{O}_{P_i}(Q)\right|^2 
        &\leq |T_\support| \cdot \eta^2+ \sum_{Q \notin S_\support^*, Q \in T_\support}\left| \widehat{O}_{P_i}(Q)-\widetilde{O}^{(\support)}_{P_i}(Q)\right|^2 + \sum_{Q \notin S_\support^*, Q \in T_\support} \left| \widehat{O}_{P_i}(Q)\right|^2 \\
        &\leq |T_\support| \cdot \eta^2+ \sum_{Q \in T_\support}\left| \widehat{O}_{P_i}(Q)-\widetilde{O}^{(\support)}_{P_i}(Q)\right|^2 + \sum_{Q \notin S_\support^*} \left| \widehat{O}_{P_i}(Q)\right|^2 \\
        & \leq |T_\support| \cdot \eta^2 +|T_\support| \cdot \eta^2 + \epsilon^* \\
        & = 2 \cdot |T_\support| \cdot \eta^2 + \epsilon^*
    \end{align}
    Noting that $T_\support$ is the set of Pauli coefficients in the support of $\support$-qubits, $|T_\support|=4^\support$, we obtain the desired result.
\end{proof}

Leveraging \Cref{thm:learn_dist}, classical shadow tomography \cite{huang2020predicting} (as described in \Cref{thm:shadow_tomog}), and some of our prior concentration results, we can now prove \Cref{thm:learn_obs}. In particular, we will now prove that the learning procedure requires quasi-polynomial sample and time complexity.
\begin{proof}[Proof of \Cref{thm:learn_obs}]
    Leveraging our bound on the distance between $O_{P_i}$ and $\widetilde{O}^{(\support)}_{P_i}$ from \Cref{thm:learn_dist}, our goal is to show that 
    \begin{align}
        \frobd\left( O_{P_i}, \widetilde{O}^{(\support)}_{P_i}\right) \leq \epsilon_{P_i} = 2 \cdot 4^\support \cdot \eta^2 + \epsilon^*  \leq \frac{1}{\poly(n)}.
    \end{align}
    
    Setting $\support = C \cdot \log^d n$ (where $C = c^d$ and $c \geq 3d$), this implies that $m=d\cdot \lfloor \frac{n}{\gaterem} \rfloor$ gates of width at least $\gaterem = c \cdot \log n$ are removed from $O_{P_i}$ to obtain $O_{P_i}^*$. Plugging this into \Cref{thm:avg_dist} implies that the distance between $O_{P_i}$ and $O_{P_i}^*$ is bounded as
    \begin{align} \label{eqn:error}
          \epsilon^* \leq \frac{9m^2}{2^{\gaterem}} \leq \frac{9\cdot d^2 \cdot \left\lfloor \frac{n}{\gaterem} \right\rfloor^2}{2^{\gaterem}} \leq \frac{9\cdot d^2 \cdot n^2}{\gaterem^2 \cdot 2^{\gaterem}} = \frac{9\cdot d^2 \cdot n^2}{c^2 \log^2 n \cdot 2^{\log(n^c)}} \leq  \frac{9d^2}{c^2n^{c-2}} 
    \end{align}
    Therefore, setting the observable learning accuracy to
    \begin{align}
        \eta^2 = \frac{1}{n^b \cdot 4^\support},
    \end{align}
    for some constant $b \geq 2$, achieves the desired error bound:
    \begin{align}
        \epsilon_{P_i} = 2 \cdot 4^\support \cdot \eta^2 + \epsilon^*\leq   \frac{2}{n^b} + \frac{9d^2}{c^2 \cdot n^{c-2}} \leq \frac{2}{n^b} + \frac{1}{n^{3d-2}} = \frac{1}{\poly(n)}.
    \end{align}

    Now all that remains is to prove the sample and computational complexity. Denote the set of $n$-qubit Paulis of degree $\leq \support$ as
    \begin{align}
        \calF_\support = \{ P \in \pauli^n : |P| \leq \support\}.
    \end{align}
    Via classical shadow tomography, with the state $\rho = O_{P_i}$ and the set of bounded-degree Pauli observables $\calF_\support$, we can produce an $\eta$-estimate $\widetilde{O}_{P_i} (Q)$ of 
    \begin{align}
        \widehat{O}_{P_i} (Q) = \Tr(Q \cdot O_{P_i})
    \end{align}
    for each $Q \in \calF_\support$. Since the Pauli observables are all of degree at most $\ell$, by \Cref{thm:shadow_tomog}, the sample complexity of this classical shadow tomography procedure is 
    \begin{align}
        \calO \left( \frac{3^\support}{\eta^2} \log \left(\frac{n^\support}{\delta}\right)\right) = \calO \left( 3^\support \cdot n^b \cdot 4^\support \cdot \log \left(\frac{n^\support}{\delta}\right)\right) = \calO \left( n^b \cdot 4^{2\support} \cdot \log \left(\frac{n^\support}{\delta}\right)\right).
    \end{align}
    Furthermore, since the size of $\calF_\support$, or the total number of Pauli of degree $\leq \support$, is upper-bounded as
    \begin{align}
        |\calF_\support| = \sum_{k=1}^\support 3^k \cdot {n \choose k}  \leq \calO (n^\support),
    \end{align}
    the computational complexity is 
    \begin{align}
        \calO \left( n^{\support+b} \cdot 4^{2\support} \cdot \log \left(\frac{n^\support}{\delta}\right)\right). 
    \end{align}

    As described in \Cref{alg:learning_alg}, to produce the  approximation $\widetilde{O}^{(\support)}_{P_i}$, we also need to find the set of Paulis supported on $\ell$ qubits with maximal weight and set all other Pauli coefficients to zero. Since there are ${n \choose \ell}$ possible supports of size $\ell$ and calculating the weight of each support involves summing over $3^\ell$ different Pauli coefficients, the computational complexity of finding the set $T_\ell$ is quasi-polynomial, i.e.
    \begin{align}
        {n \choose \ell} \cdot 3^\ell \leq \calO(n^\ell).
    \end{align}

    To achieve the final, explicit sample and computational complexity, plug in the value $\support = C \cdot \log^d n$, where $C = c^d$ and $c \geq 3d$. Thus, the overall sample complexity is quasi-polynomial 
    \begin{align}
        \calO \left( n^b \cdot 4^{2\support} \cdot \log \left(\frac{n^\support}{\delta}\right)\right) &= \calO \left( n^b \cdot 2^{4 C  \log^d n} \cdot \log \left(\frac{n^{C \log^d n}}{\delta}\right)\right) \\
         &= \calO \left( n^b \cdot 2^{4 C  \log^d n} \cdot\left(C \log^d n \cdot \log n -\log \delta\right)\right) \\
         &= \calO \left( n^b \cdot 2^{4 c^d  \log^d n} \cdot\left( \log^{d+1} n +\log (1/\delta)\right)\right) \\
        &= \calO \left( 2^{\poly\log n} \cdot \log \left(1/\delta\right)\right),
    \end{align}
    and the computational complexity is also quasi-polynomial
    \begin{align}
        \calO \left( n^{\support+b} \cdot 4^{2\support} \cdot \log \left(\frac{n^\support}{\delta}\right)+n^\ell\right) &= \calO \left( n^{\support+b} \cdot 4^{2\support} \cdot \log \left(\frac{n^\support}{\delta}\right)\right) \\
        &= \calO \left( n^{c^d  \log^d n+b} \cdot 2^{4 c^d  \log^d n} \cdot\left( \log^{d+1} n +\log (1/\delta)\right)\right) \\
        &= \calO \left( n^{\poly\log n} \cdot \log \left(1/\delta\right)\right).
    \end{align}
\end{proof}

\subsection{Sewing \texorpdfstring{\QACZ}~ Heisenberg-Evolved Observables} \label{sec:sew}
We will now use a procedure analogous to that of \cite[Section 5.2.2]{huang2024learning} to sew these Heisenberg-evolved Pauli observables and project them onto a unitary which is $1/\poly(n)$-close to the true \QACZ~circuit unitary, with respect to the average-case distance.

Analogous to \cite[Lemma 9]{huang2024learning}, we begin by showing that the error in the sewing procedure can be upper-bounded by the sum of the learned observables' learning error (from the last section). Importantly, note that our proof differs from that of \cite{huang2024learning}, since we are leveraging an average-case instead of a worst-case distance measure.
\begin{lemma}[Observable Sewing Guarantees] \label{thm:csew_bound}
    Suppose $C$ is an $n$-qubit \QACZ~circuit, which has a set of Heisenberg-evolved observables $\left\{O_{P_i}\right\}_{i,P}$, corresponding to each of the $3n$ possible single-qubit Paulis $P_i$. Let $\left\{\widetilde{O}^{(\support)}_{P_i}\right\}_{i,P}$ denote the set of learned observables, which are each at most $\epsilon_{P_i}$-far from the respective true observable. Construct the unitary 
    \begin{align}
        C_\text{sew}  := \textnormal{SWAP}^{\otimes n} \prod_{i=1}^n \left[\proj_U\left(\frac{1}{2} I\otimes I+\frac{1}{2}\sum_{P\in\{X,Y,Z\}} \widetilde{O}_{P_i}^{(\support)} \otimes P_i\right) \right],
    \end{align}
    by ``sewing'' the learned observables, where $\proj_U$ is the projection onto the unitary minimizing Frobenius norm, as defined in \Cref{eqn:proj_u}, and $\textnormal{SWAP}^{\otimes n}$ swaps the first and last $n$ qubits. The average-case distance between $C_\text{sew}$ and $C \otimes C^\dagger$ is at most
    \begin{align}
        \davg(C_\text{sew},C \otimes C^\dagger) \leq \frac{1}{2}~\sum_{i=1}^n \sum_{P\in\{X,Y,Z\}} \epsilon_{P_i}.
    \end{align}
\end{lemma}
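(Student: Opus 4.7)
The plan is to reduce the global error $\davg(C_{\mathrm{sew}}, C\otimes C^\dagger)$ to a sum of per-observable learning errors $\epsilon_{P_i}$ by interpolating between $C\otimes C^\dagger$ and $C_{\mathrm{sew}}$ one factor at a time. The key algebraic identity is that with the \emph{true} Heisenberg-evolved observables $O_{P_i} = C^\dagger P_i C$ in place of the learned $\widetilde{O}_{P_i}^{(\support)}$, the sewing formula collapses to $C\otimes C^\dagger$ exactly. Concretely, using the two-qubit SWAP identity $\mathrm{SWAP} = \tfrac{1}{2}(I + X\otimes X + Y\otimes Y + Z\otimes Z)$, one verifies that $M_i^* := \tfrac{1}{2}\bigl(I + \sum_P O_{P_i}\otimes P_i\bigr) = (C^\dagger\otimes I)\,\mathrm{SWAP}_{i,i+n}\,(C\otimes I)$ is unitary (and Hermitian, since $\mathrm{SWAP}$ is), and the $C$'s telescope through $\mathrm{SWAP}^{\otimes n}\prod_i M_i^*$ to recover $C\otimes C^\dagger$.

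\textbf{Hybrid telescoping.} With $M_i := \tfrac{1}{2}(I + \sum_P \widetilde{O}_{P_i}^{(\support)}\otimes P_i)$ and $W_i := \proj_U(M_i)$, introduce the hybrid unitaries $V_k := \mathrm{SWAP}^{\otimes n}\,W_1\cdots W_k\, M_{k+1}^*\cdots M_n^*$ so that $V_0 = C\otimes C^\dagger$ and $V_n = C_{\mathrm{sew}}$. By left/right Haar invariance of the average fidelity, each hybrid step contributes $\davg(V_{k-1}, V_k) = \davg(W_k, M_k^*)$, reducing the problem to bounding the $n$ per-step discrepancies and summing. The chain $\davg \le \gpid \le \frobd$ from Facts~\ref{fact:davg_ub} and~\ref{fact:gpid_ub} lets us work in Frobenius distance, which is genuinely subadditive under products of unitaries via unitary invariance: $\|\prod W_i - \prod M_i^*\|_F \le \sum_k \|W_k - M_k^*\|_F$.

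\textbf{Per-step bound and Pauli orthogonality.} To control each $\|W_k - M_k^*\|_F$, use that $W_k$ is the Frobenius-closest unitary to $M_k$ and $M_k^*$ is itself unitary, so $\|W_k - M_k\|_F \le \|M_k - M_k^*\|_F$, and triangle inequality gives $\|W_k - M_k^*\|_F \le 2\|M_k - M_k^*\|_F$. Then expand $M_k - M_k^* = \tfrac{1}{2}\sum_P (\widetilde{O}_{P_k}^{(\support)} - O_{P_k})\otimes P_k$ and apply Pauli orthogonality $\Tr(P_k P'_k) = 2^n\delta_{PP'}$ together with the tensor-product identity $\|A\otimes B\|_F^2 = \|A\|_F^2\|B\|_F^2$ to obtain $\frobd(M_k, M_k^*) = \tfrac{1}{4}\sum_P \epsilon_{P_k}$. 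Assembling these pieces yields a bound of the form $c\sum_{i,P}\epsilon_{P_i}$ for some explicit constant $c$.

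\textbf{Main obstacle.} The delicate point is landing on the sharp constant $c = \tfrac{1}{2}$ in the stated bound rather than a cruder $c \in \{1, 2, 4\}$ that comes out of the naive chain above. Three places invite slack: the Procrustes triangle inequality inflates $\|W_k - M_k^*\|_F^2$ by a factor of $4$; the passage $\davg\le\frobd$ can lose another factor near $2$; and a naive application of Cauchy--Schwarz to $\sum_k \|W_k - M_k^*\|_F$ would introduce a disastrous factor of $n$. Recovering $\tfrac{1}{2}$ requires exploiting additional structure: both $M_k$ and $M_k^*$ are Hermitian, and $M_k^*$ is in fact a Hermitian unitary (a reflection, since $\mathrm{SWAP}^2 = I$), so $W_k = \proj_U(M_k)$ is also Hermitian and shares the eigenbasis of $M_k$; in this Hermitian-projection regime a spectral/parallelogram argument tightens $\|W_k - M_k^*\|_F^2 \le 2\|M_k - M_k^*\|_F^2$. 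The other key ingredient is to bound the real part of the inner product $\Tr(C_{\mathrm{sew}}^\dagger (C\otimes C^\dagger)) = \Tr((\prod W_i)^\dagger \prod M_i^*)$ directly by a linear telescoping sum rather than via a squared Frobenius norm, which avoids the Cauchy--Schwarz loss of $n$. Coordinating these two refinements to recover exactly the constant $\tfrac{1}{2}$ is the main technical content of the proof.
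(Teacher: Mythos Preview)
Your approach matches the paper's at the structural level: define $V_i$ (your $M_i^*$), $\widetilde{W}_i$ (your $M_i$), $W_i=\proj_U(\widetilde{W}_i)$; bound $\davg\le\frobd$ via Facts~\ref{fact:davg_ub} and~\ref{fact:gpid_ub}; telescope $\frobd(\prod W_i,\prod V_i)$ by a hybrid argument with unitary invariance; use the Procrustes property $\frobd(W_i,\widetilde{W}_i)\le\frobd(\widetilde{W}_i,V_i)$; and compute $\frobd(\widetilde{W}_i,V_i)=\tfrac{1}{4}\sum_P\epsilon_{P_i}$ via Pauli orthogonality. These are exactly the ingredients the paper uses, in the same order.

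Where you diverge is in your care about the constant. The paper simply applies the triangle inequality to $\frobd$---the \emph{squared} normalized Frobenius distance---at both the telescoping step (writing $\frobd(\prod W_i,\prod V_i)\le\sum_i\frobd(W_i,V_i)$) and the Procrustes step (writing $\frobd(W_i,V_i)\le\frobd(W_i,\widetilde{W}_i)+\frobd(\widetilde{W}_i,V_i)\le 2\frobd(\widetilde{W}_i,V_i)$), and thereby lands on $\tfrac12$ directly. As you correctly flag, squared distances do not literally satisfy the triangle inequality, so these steps are informal as written; a rigorous accounting along this route would lose a factor of $n$ via Cauchy--Schwarz at the first step and a factor of $2$ at the second. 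Your proposed refinements---exploiting the Hermitian structure of $M_i$ and $M_i^*$ to sharpen the Procrustes bound, and telescoping on $\Re\Tr\bigl((\prod W_i)^\dagger\prod V_i\bigr)$ directly to avoid the Cauchy--Schwarz loss---are a sound way to make the constant rigorous, but they go beyond what the paper's own proof contains; the paper does not introduce either device and simply treats $\frobd$ as if it were additive.
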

\begin{proof}[Proof of \Cref{thm:csew_bound}]
    To begin, note that $C\otimes C^\dagger$, can be decomposed in terms of the true observables as 
    \begin{align}
        C \otimes C^\dagger = \textnormal{SWAP}^{\otimes n} \prod_{i=1}^n \left[\frac{1}{2} I\otimes I+\frac{1}{2}\sum_{P\in\{X,Y,Z\}} O_{P_i} \otimes P_i \right].
    \end{align}
    Also, define the following three matrices: 
    \begin{align}
        V_i &= \frac{1}{2} I\otimes I+\frac{1}{2}\sum_{P\in\{X,Y,Z\}} O_{P_i} \otimes P_i\\
        \widetilde{W}_i &= \frac{1}{2} I\otimes I+\frac{1}{2}\sum_{P\in\{X,Y,Z\}} \widetilde{O}_{P_i}^{(\support)} \otimes P_i\\
        W_i &= \proj_U\left(\widetilde{W}_i\right).
    \end{align}
    
    Noting that $C_\text{sew}$ and $C \otimes C^\dagger$ implement unitary channels, we can leverage \Cref{fact:davg_ub} and \Cref{fact:gpid_ub} to upper-bound the average gate fidelity of the channels by the normalized Frobenius distance of the corresponding unitaries, 
    \begin{align}
        \davg(C_\text{sew},C \otimes C^\dagger) \leq \gpid(C_\text{sew}, C \otimes C^\dagger) \leq \frobd(C_\text{sew}, C \otimes C^\dagger).
    \end{align}
    Leveraging the facts that the Frobenius norm is unitary invariant and satisfies the triangle inequality, we perform the following hybrid argument:
    \begin{align}
        \davg(\calC_\text{sew},\calC \otimes \calC^\dagger) &\leq \frobd(C_\text{sew}, C \otimes C^\dagger) \\
        &= \frobd\left( S \prod_{i=1}^n W_i, S \prod_{i=1}^n V_i\right) \\
        &= \frobd \left( \prod_{i=1}^n W_i, \prod_{i=1}^n V_i\right) \\
        &\leq \frobd\left( \prod_{i=1}^n W_i, \prod_{i=1}^{n-1} W_i V_n\right)+\frobd \left( \prod_{i=1}^{n-1} W_i V_n, \prod_{i=1}^n V_i\right) \\
        &\leq \frobd \left( W_n,  V_n\right)+\frobd \left( \prod_{i=1}^{n-1} W_i , \prod_{i=1}^{n-1} V_i\right) \\
        &\quad \quad \quad \vdots \\
        & \leq  \sum_{i=1}^n \frobd \left( W_i,  V_i\right). \label{eqn:davg_bound}
    \end{align}
    Upper-bounding each $\frobd \left( W_i,  V_i\right)$ term as
    \begin{align} \label{eqn:trig_ub}
       \frobd \left( W_i,  V_i\right) \leq \frobd \left( W_i,  \widetilde{W}_i\right)+\frobd \left( \widetilde{W}_i,  V_i\right),
    \end{align}
    we are now interested in the values of $\frobd \left( W_i,  \widetilde{W}_i\right)$ and $\frobd \left( \widetilde{W}_i,  V_i\right)$. However, leveraging the definition of $\proj_U$, $\frobd \left( W_i,  \widetilde{W}_i\right)$ can be expressed in terms of $\frobd \left( \widetilde{W}_i,  V_i\right)$, as
    \begin{align}
        \frobd\left(\widetilde{W}_i, W_i\right) &= \frac{1}{2^{2n}} \left\|\widetilde{W}_i-W_i\right\|_F^2 
        = \frac{1}{2^{2n}} \min_{U\in\mathcal{U}(2^{2n})}\left\|\widetilde{W}_i-U\right\|_F^2 
        \leq \frac{1}{2^{2n}} \left\|\widetilde{W}_i-V_i\right\|_F^2 
        = \frobd\left(\widetilde{W}_i, V_i\right).
    \end{align}
    Therefore, \Cref{eqn:trig_ub} simplifies to 
    \begin{align} 
        \frobd \left( W_i,  V_i\right) \leq 2\cdot \frobd \left( \widetilde{W}_i,  V_i\right). \label{eqn:davg_bound2}
    \end{align}
    Furthermore, we can upper-bound $\frobd \left( \widetilde{W}_i,  V_i\right)$, as
    \begin{align}
        \frobd \left( \widetilde{W}_i,  V_i\right) &= \frac{1}{2^{2n}} \left\|\widetilde{W}_i - V_i\right\|_F^2 \\
        & \leq \frac{1}{2^{2n}} \left\|\left(\frac{1}{2} I\otimes I+\frac{1}{2}\sum_{P\in\{X,Y,Z\}} \widetilde{O}_{P_i}^{(\support)} \otimes P_i\right) - \left(\frac{1}{2} I\otimes I+\frac{1}{2}\sum_{P\in\{X,Y,Z\}} O_{P_i} \otimes P_i\right)\right\|_F^2 \\
        & \leq \frac{1}{2^{2n}} \left\|\frac{1}{2}\sum_{P\in\{X,Y,Z\}} \left( \widetilde{O}_{P_i}^{(\support)}  -  O_{P_i} \right)\otimes P_i\right\|_F^2 \\
         & \leq \frac{1}{2^{2n}} \cdot \frac{1}{4}\Tr \left(\sum_{P\in\{X,Y,Z\}} \left( \widetilde{O}_{P_i}^{(\support)}  -  O_{P_i} \right)^\dagger\otimes P_i\cdot \sum_{Q\in\{X,Y,Z\}} \left( \widetilde{O}_{P_i}^{(\support)}  -  O_{Q_i} \right)\otimes Q_i\right) \\
         & = \frac{1}{2^{2n}} \cdot \frac{1}{4}\sum_{P,Q\in\{X,Y,Z\}}\Tr \left( \left( \widetilde{O}_{P_i}^{(\support)}  -  O_{P_i} \right)^\dagger \left( \widetilde{O}_{P_i}^{(\support)}  -  O_{Q_i} \right) \otimes P_i Q_i\right) \\
         & = \frac{1}{2^{2n}} \cdot \frac{1}{4}\sum_{P,Q\in\{X,Y,Z\}}\Tr \left( \left( \widetilde{O}_{P_i}^{(\support)}  -  O_{P_i} \right)^\dagger \left( \widetilde{O}_{P_i}^{(\support)}  -  O_{Q_i} \right) \right)\Tr\left(P_i Q_i\right) \\
         & = \frac{1}{2^{2n}} \cdot \frac{1}{4}\sum_{P,Q\in\{X,Y,Z\}}\Tr \left( \left( \widetilde{O}_{P_i}^{(\support)}  -  O_{P_i} \right)^\dagger \left( \widetilde{O}_{P_i}^{(\support)}  -  O_{Q_i} \right) \right) \cdot 2^n \delta(P_i=Q_i) \\
         & = \frac{1}{2^{n}} \cdot\frac{1}{4}\sum_{P\in\{X,Y,Z\}}\Tr \left( \left( \widetilde{O}_{P_i}^{(\support)}  -  O_{P_i} \right)^\dagger \left( \widetilde{O}_{P_i}^{(\support)}  -  O_{P_i} \right) \right) \\
         & = \frac{1}{4}\sum_{P\in\{X,Y,Z\}} \frac{1}{2^{n}} \left\| \widetilde{O}_{P_i}^{(\support)}  -  O_{P_i} \right\|_F^2 \\
         & = \frac{1}{4}\sum_{P\in\{X,Y,Z\}} \epsilon_{P_i}. \label{eqn:davg_bound3}
    \end{align}
    Therefore, combining our bounds from \Cref{eqn:davg_bound}, \Cref{eqn:davg_bound2}, and \Cref{eqn:davg_bound3}, we obtain the desired upper-bound:
    \begin{align}
        \davg(C_\text{sew}, C \otimes C^\dagger) \leq  \sum_{i=1}^n \frobd \left( W_i,  V_i\right) \leq  \sum_{i=1}^n 2\cdot \frobd \left( \widetilde{W}_i,  V_i\right) \leq \frac{1}{2} \sum_{i=1}^n \sum_{P\in\{X,Y,Z\}} \epsilon_{P_i}.
    \end{align}
\end{proof}

With these results establishing efficient learning and sewing of \QACZ~Heisenberg-evolved single-qubit Pauli observables, we can now prove the main result of the section, \Cref{thm:main_result}.

\begin{proof}[Proof of \Cref{thm:main_result}]
    Leveraging \Cref{alg:learning_alg} and \Cref{thm:learn_obs}, we can learn the set of observables $\{\widetilde{O}_{P_i}^{(\support)}\}_{i,P}$ such that 
    \begin{align}
        \frobd \left( O_{P_i},\widetilde{O}^{(\support)}_{P_i}\right) \leq \epsilon_{P_i} \leq \frac{2}{n^b} + \frac{9d^2}{c^2 \cdot n^{c-2}}
    \end{align}
    for all $3n$ single-qubit Paulis $P_i$, with quasi-polynomial sample and computational complexity. Leveraging the result of \Cref{thm:csew_bound}, we can then ``sew'' these learned observables into the unitary 
    \begin{align} \label{eqn:sew}
        C_\text{sew} := \textnormal{SWAP}^{\otimes n} \prod_{i=1}^n \left[\proj_U\left(\frac{1}{2} I\otimes I+\frac{1}{2}\sum_{P\in\{X,Y,Z\}} \widetilde{O}_{P_i}^{(\support)} \otimes P_i\right) \right],
    \end{align}
    which, for constants $b \geq 2$ and $c \geq 3d$, satisfies
    \begin{align} \label{eqn:sew_comp}
        \davg(\calC_\text{sew},\calC \otimes \calC^\dagger) \leq \frac{1}{2} \sum_{i=1}^n \sum_{P\in\{X,Y,Z\}} \epsilon_{P_i} \leq \frac{3n}{2} \cdot \underset{P_i}{\arg\max}~\epsilon_{P_i} \leq  \frac{3}{n^{b-1}} + \frac{9d^2}{c^2 \cdot n^{c-3}} = \frac{1}{\poly (n)},
    \end{align}
    meaning that $C_\text{sew}$ is $1/\poly(n)$-close to $C \otimes C^\dagger$ in average-case distance.

    All that remains is to verify that the computational complexity of the sewing procedure of \Cref{eqn:sew} is in fact quasi-polynomial. Note that the construction of $C_\text{sew}$ requires computing 
    \begin{align}
        \widetilde{U}_i = \proj_U\left(\frac{1}{2} I\otimes I+\frac{1}{2}\sum_{P\in\{X,Y,Z\}} \widetilde{O}_{P_i}^{(\support)} \otimes P_i\right)
    \end{align}
    $n$ times. As described in \Cref{fact:project}, computing $\proj_U$ reduces to computing the singular value decomposition of the matrix. In general, computing the SVD of a $2n$-qubit matrix has exponential complexity $\calO(2^{6n})$. However, in \Cref{alg:learning_alg} we specifically imposed that the learned observable $\widetilde{O}_{P_i}^{(\support)}$ only have support on $\support= \calO(\log^d n)$ qubits. Since $P_i$ only has support on 1 qubit, the total support of the matrix $\frac{1}{2} I\otimes I+\frac{1}{2}\sum_{P\in\{X,Y,Z\}} \widetilde{O}_{P_i}^{(\support)} \otimes P_i$ is $\support+1$ qubits. Therefore, we only need to compute the SVD of the sub-matrix corresponding to the non-trivial support, which has quasi-polynomial computational complexity $\calO(2^{\poly \log n})$. Therefore, the total sewing procedure involves performing $n$ of these SVDs, which is still quasi-polynomial complexity.
\end{proof}

\subsection{Learning \texorpdfstring{\QACZ}~ with Optimized Depth} \label{sec:improve_depth}
Now that we have learned a unitary $C_\text{sew}$ which is close in average-case distance to $C \otimes C^\dagger$, we will explore circuit-synthesis procedures that implement unitaries close to $C_\text{sew}$. In other words, we will look into \emph{proper learning} of the \QACZ~circuit. We will show that, while a naive compilation procedure would produce a worst-case quasi-polynomial-depth circuit, we can generate in quasi-polynomial time an explicit (poly-logarithmic depth) \QAC~circuit $C^*_\text{sew}$ that implements a unitary $\frac{1}{\poly(n)}$-close to $C_\text{sew}$. In particular, we will prove the following theorem.

\begin{theorem} \label{thm:uni_synthesis}
    Given a \QACZ~circuit $C$, there exists a quasi-polynomial time algorithm to learn a \QAC~circuit implementing unitary $C_\text{sew}^*$ such that
    \begin{align}
        \frobd \left(C_\text{sew}^*, C \otimes C^\dagger \right) \leq \frac{1}{\poly(n)}.
    \end{align}
\end{theorem}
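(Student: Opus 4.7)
The plan is to build on the learned unitary $C_\text{sew}$ from \Cref{thm:main_result}, which already satisfies $\davg(C_\text{sew}, C \otimes C^\dag) \leq 1/\poly(n)$, and then to replace each of the $n$ sewing blocks $W_i = \proj_U\!\left(\tfrac{1}{2} I\otimes I+\tfrac{1}{2}\sum_{P\in\{X,Y,Z\}} \widetilde{O}_{P_i}^{(\support)} \otimes P_i\right)$ by an explicit \QACZ~circuit $W_i^*$, and finally to schedule all these $W_i^*$'s in parallel via a graph-coloring argument so that the total depth is poly-logarithmic. The $\textnormal{SWAP}^{\otimes n}$ factor is already in \QACZ, so we only need to worry about the product $\prod_{i=1}^n W_i^*$.

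First, I would handle \emph{one} block. Each $W_i$ is supported on at most $\support+1 = \calO(\log^d n)$ qubits, so it is a unitary on a Hilbert space of dimension $2^{\calO(\log^d n)}$. A \QACZ~circuit on $r = \calO(\log^d n)$ qubits has at most quasi-polynomially many distinct placements of $CZ_k$ gates (since there are $\binom{r}{k} \leq 2^r$ choices per gate and at most $\poly(r)$ gate slots per layer with $d=\calO(1)$ layers), and the single-qubit rotations between these gates can be discretized by a polynomial-size $1/\poly(n)$-net in operator norm. Taking the product yields a quasi-polynomial-size $1/\poly(n)$-net over \QACZ~unitaries on $r$ qubits; brute-force search over this net finds $W_i^*$ with $\frobd(W_i^*, W_i) \leq 1/\poly(n)$ in quasi-polynomial time. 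Here I rely on the fact that the normalized Frobenius norm behaves well under tensoring with identity on the untouched qubits.

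Next, I would sew these $W_i^*$'s together. Each $W_i^*$ has support of size $\calO(\log^d n)$, so the interaction graph $G$ on $[n]$ (where $i$ and $j$ are adjacent iff $\supp(W_i^*)\cap\supp(W_j^*)\neq\emptyset$) has maximum degree $\calO(\log^d n) \cdot \calO(\log^d n)=\poly\log n$. A greedy coloring therefore uses $\poly\log n$ colors, and within each color class the blocks act on disjoint qubits and can be executed in a single \QACZ~sub-layer of depth $d$. Concatenating all color classes yields a \QAC~circuit implementing $\prod_{i=1}^n W_i^*$ of depth $\poly\log n$. Composing with $\textnormal{SWAP}^{\otimes n}$ (implementable in constant depth with single-qubit gates and $CZ$ gates, hence in \QACZ) gives the final circuit $C_\text{sew}^*$ of \QAC~depth $\poly\log n$, constructed in quasi-polynomial total time (the dominating cost being $n$ brute-force searches of quasi-polynomial size).

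Finally, I would bound the error. A hybrid argument identical in spirit to the one used inside the proof of \Cref{thm:csew_bound} (swapping $W_i$ for $W_i^*$ one block at a time, using unitarity and the triangle inequality for $\frobd$) gives
\begin{align}
    \frobd(C_\text{sew}^*, C_\text{sew}) \;\leq\; \sum_{i=1}^n \frobd(W_i^*, W_i) \;\leq\; n \cdot \frac{1}{\poly(n)} \;=\; \frac{1}{\poly(n)},
\end{align}
where one tunes the $\varepsilon$-net precision so that the per-block error is an $n$-fold smaller inverse polynomial than the target. Combining with $\frobd(C_\text{sew}, C\otimes C^\dag) \leq 1/\poly(n)$ from \Cref{thm:main_result} (using \Cref{fact:davg_ub,fact:gpid_ub} to move between $\davg$ and $\frobd$ where needed, and the triangle inequality in between) yields the desired $\frobd(C_\text{sew}^*, C\otimes C^\dag) \leq 1/\poly(n)$.

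The main obstacle I expect is the first step: carefully verifying that a quasi-polynomial-size $\varepsilon$-net over \QACZ~architectures on $\calO(\log^d n)$ qubits truly covers every such unitary to precision $1/\poly(n)$ in $\frobd$, including handling the unbounded but quasi-polynomially-many combinatorial choices of $CZ_k$ placements together with the continuous net over single-qubit rotations. The parallelization step, while elegant, is comparatively routine once the support bound $\calO(\log^d n)$ is in hand, since it follows the \cite[Lemma 13]{huang2024learning} template almost verbatim.
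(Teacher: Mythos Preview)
Your overall structure---$\varepsilon$-net brute-force per block, graph-coloring parallelization, hybrid error bound---matches the paper's. However, there is a genuine gap in your first step. You assert that brute-force search over a $1/\poly(n)$-net of \QACZ~circuits on $\calO(\log^d n)$ qubits will find some $W_i^*$ with $\frobd(W_i^*, W_i) \leq 1/\poly(n)$, but this only follows if $W_i$ is itself $1/\poly(n)$-close to \emph{some} constant-depth \QACZ~circuit on that support. A priori, $W_i = \proj_U(\widetilde{W}_i)$ is just an arbitrary unitary on $\calO(\log^d n)$ qubits, and constant-depth \QACZ~circuits do \emph{not} cover all such unitaries (they form a strict subset). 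Your identified ``main obstacle'' concerns the net's size and its coverage of the \QACZ~class; it does not address whether $W_i$ lies near that class at all.

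The paper closes this gap via \Cref{thm:bound_dist}: it observes that the ideal block $V_i^* = \widetilde{L}_i^\dagger S_i \widetilde{L}_i$---built from the light-cone $\widetilde{L}_i$ of the truncated circuit $\widetilde{C}$ (with large $CZ$ gates removed)---is itself a bona fide \QACZ~circuit on $\calO(\log^d n)$ qubits, and then chains $\frobd(\widetilde{W}_i, V_i) \leq \epsilon_{P_i}$, $\frobd(V_i, V_i^*) \leq \epsilon^*$, and the minimizing property of $\proj_U$ to conclude $\frobd(W_i, V_i^*) \leq 1/\poly(n)$. Only with this anchor in hand is the net search (\Cref{thm:e_net}, over circuits of the form $L^\dagger S_i L$) guaranteed to succeed. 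Your argument needs this intermediate target $V_i^*$. A minor second point: you cannot pass from the $\davg$ bound of \Cref{thm:main_result} back to a $\frobd$ bound via \Cref{fact:davg_ub} and \Cref{fact:gpid_ub}, since those inequalities go the wrong way; fortunately the proof of \Cref{thm:csew_bound} already establishes $\frobd(C_\text{sew}, C\otimes C^\dag) \leq 1/\poly(n)$ directly, so the triangle-inequality finish is available.
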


\noindent Throughout this section, we will use the following notation:
\begin{align}
    V_i &= \frac{1}{2} I\otimes I+\frac{1}{2}\sum_{P\in\{X,Y,Z\}} O_{P_i} \otimes P_i,\\
    V^*_i &= \frac{1}{2} I\otimes I+\frac{1}{2}\sum_{P\in\{X,Y,Z\}} O^*_{P_i} \otimes P_i,\\
    \widetilde{W}_i &= \frac{1}{2} I\otimes I+\frac{1}{2}\sum_{P\in\{X,Y,Z\}} \widetilde{O}_{P_i}^{(\support)} \otimes P_i,\\
    W_i &= \proj_U\left(\widetilde{W}_i\right).
\end{align}
Furthermore, by the sewing procedure of \cite{huang2024learning}, if $O_{P_i}$ corresponds to the \QACZ~circuit $C$ and $L_i$ is the light-cone of $C$ with respect to measurement qubit $i$, then
\begin{align} 
    V_i &= \frac{1}{2} I\otimes I+\frac{1}{2}\sum_{P\in\{X,Y,Z\}} O_{P_i} \otimes P_i \\
    &= \frac{1}{2}\sum_{P\in\{I,X,Y,Z\}} C^\dagger P_i C \otimes P_i \\
    &= \frac{1}{2}\sum_{P\in\{I,X,Y,Z\}} L_i^\dagger P_i L_i \otimes P_i \\
    &= (L_i^\dagger \otimes I)\left(\frac{1}{2}\sum_{P\in\{I,X,Y,Z\}}  P_i \otimes P_i \right) (L_i \otimes I) \\
    &= L_i^\dagger S_i L_i,
\end{align}
where $S_i$ denotes the SWAP operation between the $i^\text{i}$ and $(n+i)^\text{th}$ qubits.
Similarly, if $O_{P_i}^*$ corresponds to the \QACZ~circuit $\widetilde{C}$ (with large $CZ$ gates removed) and $\widetilde{L}_i$ is the light-cone of $\widetilde{C}$ with respect to measurement qubit $i$, then
\begin{align}
    V^*_i &= \widetilde{L}_i^\dagger S_i \widetilde{L}_i.
\end{align}

\subsubsection{Naive Implementation}

We will begin by evaluating the worst-case circuit depth of a naive compilation procedure for our learned unitary. Our naive implementation will leverage the following standard fact about the complexity of unitary synthesis.

\begin{fact}[\cite{huang2024learning} Fact 4] \label{fact:unitary_synthesis}
    Given a unitary $U$, which acts on $k$ qubits, there is an algorithm that outputs a circuit (acting on $k$ qubits) that consists of at most $4^k$ two-qubit gates, which exactly implements the unitary $U$, in time $2^{\calO(k)}$.
\end{fact}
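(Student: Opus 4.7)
The plan is to prove this via a recursive circuit-synthesis procedure based on the Cosine-Sine Decomposition (CSD) and the so-called Quantum Shannon Decomposition of Shende--Bullock--Markov. At the top level, CSD reduces a $k$-qubit unitary synthesis problem to four $(k-1)$-qubit synthesis sub-problems plus a small number of ``uniformly-controlled'' single-qubit rotations, each of which admits an efficient Gray-code implementation. Unfolding the resulting recurrence then gives both the $4^k$ two-qubit-gate bound and the $2^{\calO(k)}$ compilation time.

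Concretely, I would first invoke CSD to write any $2^k \times 2^k$ unitary $U$ in the form
\begin{align}
U = \begin{pmatrix} L_0 & 0 \\ 0 & L_1 \end{pmatrix} \begin{pmatrix} C & -S \\ S & C \end{pmatrix} \begin{pmatrix} R_0 & 0 \\ 0 & R_1 \end{pmatrix},
\end{align}
where $L_0, L_1, R_0, R_1 \in U(2^{k-1})$ and $C, S$ are real diagonal $2^{k-1} \times 2^{k-1}$ matrices with $C^2 + S^2 = I$. Viewing the first qubit as a control, each outer block-diagonal factor is a uniformly-controlled $(k-1)$-qubit unitary. I would then ``demultiplex'' each such factor using the eigendecomposition identity $L_0 \oplus L_1 = (V \oplus V) \cdot (D \oplus D^\dagger) \cdot (W \oplus W)$, with $V, W \in U(2^{k-1})$ and $D$ diagonal, turning the block-diagonal into two genuine $(k-1)$-qubit unitaries (applied to the last $k-1$ qubits regardless of the control) separated by a multiplexed phase rotation on the first qubit. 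After this step, $U$ is expressed as a product of four $(k-1)$-qubit unitaries (on which we recurse) interleaved with at most three uniformly-controlled single-qubit rotations.

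The key quantitative ingredient is the M\"ott\"onen-style construction, which implements any $(k-1)$-fold uniformly-controlled single-qubit rotation \emph{exactly} using $2^{k-1}$ CNOTs and $2^{k-1}$ single-qubit rotations, via a Gray-code ordering of the control values that makes the ``correction'' CNOTs cancel in pairs. The per-level two-qubit gate count is therefore $\calO(2^k)$, giving the recurrence $T(k) \leq 4\,T(k-1) + c \cdot 2^k$ with $T(1) = \calO(1)$; this unfolds to $T(k) = \calO(4^k)$, and with the Shende--Bullock--Markov constants one tightens the leading coefficient so that $T(k) \leq 4^k$ holds exactly. For the compilation time, computing a CSD at level $k$ reduces to a constant number of singular-value decompositions of $2^{k-1} \times 2^{k-1}$ matrices together with arithmetic of the same size, all of which classical numerical linear algebra performs in time $\poly(2^k) = 2^{\calO(k)}$. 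Hence the total compilation time obeys $\calT(k) = 4\calT(k-1) + 2^{\calO(k)} = 2^{\calO(k)}$, as claimed.

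The main obstacle I anticipate is pinning down the constant in the gate bound so that it is truly $\leq 4^k$ rather than $C \cdot 4^k$ for some $C > 1$. A crude implementation of each $(k-1)$-fold controlled rotation, for instance as a cascade of doubly-controlled gates, would inflate the per-level cost to $\calO(k \cdot 2^k)$ and propagate to $\calO(k \cdot 4^k)$ overall, violating the stated bound; one genuinely needs the Gray-code / phase-cancellation argument above, together with careful merging of adjacent single-qubit layers across recursive calls. Verifying exactness throughout (no approximation is allowed by the statement) and handling global phases at the base case $k = 1$ are routine but somewhat tedious bookkeeping steps; everything else in the argument is entirely standard quantum-circuit-synthesis material.
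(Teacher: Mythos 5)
Your proposal is correct, but note that the paper does not actually prove \Cref{fact:unitary_synthesis} at all: it is imported verbatim as Fact~4 of \cite{huang2024learning} and used as a black box in the naive-compilation discussion, so there is no in-paper argument to compare against. Your Cosine--Sine/Quantum Shannon Decomposition route is the standard way to establish the fact, and the outline is sound: CSD plus demultiplexing yields exactly four generic $(k-1)$-qubit blocks interleaved with three multiplexed single-qubit rotations, the Gray-code implementation of each multiplexed rotation costs $\Theta(2^{k})$ two-qubit gates, and the recurrence $T(k)\leq 4T(k-1)+c\cdot 2^{k}$ together with $\poly(2^{k})$-time numerical linear algebra at each level gives the claimed gate count and $2^{\calO(k)}$ compilation time. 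The one point you correctly flag but leave unfinished is the leading constant: the Shende--Bullock--Markov count of roughly $\tfrac{23}{48}4^{k}$ CNOTs leaves comfortable room under $4^{k}$ even after absorbing all single-qubit gates into adjacent two-qubit gates (with at most $k$ unabsorbed single-qubit gates left over, each trivially a two-qubit gate), so the bound ``at most $4^{k}$'' does go through; it would be worth stating that absorption step explicitly rather than gesturing at ``careful merging.'' For the purposes of this paper nothing so tight is needed --- the naive-implementation bound in \Cref{sec:improve_depth} only uses $4^{\poly\log n}$ gates on $\poly\log n$ qubits --- so any $\calO(4^{k})$ synthesis would suffice.
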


Because any two-qubit gate can be generated by a constant number of single-qubit gates and $CZ$ gates, we can immediately obtain the following fact from the above.

\begin{fact}[Adapted from \cite{huang2024learning} Fact 4] \label{fact:unitary_synthesis-AC-circuit}
    Given a unitary $U$, which acts on $k$ qubits, there is an algorithm that outputs a circuit (acting on $k$ qubits) that consists of at most $4^k$ single-qubit gates or many-qubit $CZ$ gates, which exactly implements the unitary $U$, in time $2^{\calO(k)}$.
\end{fact}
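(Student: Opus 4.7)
The plan is to derive \Cref{fact:unitary_synthesis-AC-circuit} from the two-qubit gate version, \Cref{fact:unitary_synthesis}, by performing a purely local gate-set conversion. First I would invoke \Cref{fact:unitary_synthesis} to produce, in time $2^{\calO(k)}$, a $k$-qubit circuit with at most $4^k$ two-qubit gates that exactly implements $U$. The desired circuit shape and runtime are already in place; the remaining task is to rewrite each two-qubit gate using only arbitrary single-qubit gates and $CZ$ gates.

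For the rewrite, I would apply a standard two-qubit synthesis result such as the KAK (Cartan) decomposition, which shows that any two-qubit unitary can be expressed as a product of at most three $\mathsf{CNOT}$ gates interleaved with $\calO(1)$ single-qubit gates. Since $\mathsf{CNOT} = (I \otimes H)\, CZ\, (I \otimes H)$ and $H$ is a single-qubit gate, each $\mathsf{CNOT}$ becomes a single (two-qubit) $CZ$ together with two single-qubit gates. Hence each two-qubit gate is replaced by a bounded number $c = \calO(1)$ of single-qubit gates and $CZ$ gates, and the replacement is computable in $\calO(1)$ time per gate.

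After substitution, the total number of single-qubit and $CZ$ gates is at most $c \cdot 4^k$, and the overall compilation time remains $2^{\calO(k)}$. The only subtlety is the absolute constant $c$: strictly, the bound is $\calO(4^k)$ rather than exactly $4^k$. This is harmless for the downstream use in \Cref{sec:improve_depth}, where the bound only enters through $2^{\calO(k)}$-type estimates, and it could be tightened by using a sharper base decomposition (e.g., the quantum Shannon decomposition) if one wishes to retain a small constant in front of $4^k$. There is no genuine obstacle here; the main thing to watch is simply that the per-gate blowup $c$ is independent of $k$, so that the exponent of the gate count is preserved.
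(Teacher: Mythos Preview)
Your approach is correct and matches the paper's: the paper simply notes that ``any two-qubit gate can be generated by a constant number of single-qubit gates and $CZ$ gates'' and derives the fact immediately from \Cref{fact:unitary_synthesis}. Your observation about the constant-factor blowup giving $\calO(4^k)$ rather than exactly $4^k$ is valid but, as you say, immaterial for the paper's purposes.
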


For each qubit measurement index $i$, we will consider a naive circuit-synthesis procedure that simply compiles the individual projected unitaries $W_i$ into a circuit with gates acting on at most 2 qubits and then sews these compiled circuits into $C_\text{sew}$, as specified in \Cref{eqn:sew}, via some arbitrary ordering. Since the support of each unitary $W_i$ is $\ell = \poly\log(n)$, by \Cref{fact:unitary_synthesis}, it would be compiled into a circuit acting on $\ell = \poly\log(n)$ qubits consisting of up to $4^{\poly\log(n)}$ gates. 

Stitching together all $3n$ of these circuits in an arbitrary order results in a circuit with quasi-polynomial depth.
This is substantially deeper than the constant depth of the true \QACZ~circuit that we aimed to learn. Thus, we will now show how improved compilation and ordering of the unitaries in the sewing procedure can reduce this depth down to poly-logarithmic, while still only requiring quasi-polynomial computational complexity.

\subsubsection{Improved Compilation}
We will now describe an improved compilation procedure, that reduces the circuit-synthesis depth for unitaries $W_i$ from $\calO(4^{\poly\log n})$ to constant-depth $d$.  
\begin{theorem} \label{thm:improve_comp}
    The learned unitary $W_i$ can be compiled  into a \QACZ~circuit, governed by unitary $C_i^*$, that is supported on $\calO(\log^d n)$ qubits and such that
    \begin{align}
        \frobd \left( W_i, C_i^* \right) \leq \frac{1}{\poly(n)}.
    \end{align}
    The computational complexity of this compilation procedure is quasi-polynomial.
\end{theorem}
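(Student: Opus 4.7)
The plan is to exhibit an explicit depth-$d$ QACZ architecture on $\support = \calO(\log^d n)$ qubits that $W_i$ must be close to, and then brute-force search a quasi-polynomial $\varepsilon$-net of such architectures. The starting point is that $W_i = \proj_U(\widetilde{W}_i)$ is already $1/\poly(n)$-close in normalized Frobenius distance to the bona-fide QACZ unitary $V_i^* = \widetilde{L}_i^\dagger S_i \widetilde{L}_i$, where $\widetilde{L}_i$ is the light-cone of the large-$CZ$-free circuit $\widetilde{C}$ restricted to the $i$th measurement qubit. This follows by combining \Cref{thm:avg_dist} (which bounds $\frobd(O_{P_i}, O^*_{P_i})$), the learning guarantee \Cref{thm:learn_obs} (which bounds $\frobd(O_{P_i}, \widetilde{O}^{(\support)}_{P_i})$), and the fact that $\proj_U$ only decreases Frobenius distance to any unitary. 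Crucially, $\widetilde{L}_i$ (and hence $V_i^*$) acts non-trivially on only $\support$ qubits and is implemented by a depth-$d$ QACZ circuit whose $CZ_k$ widths are bounded by $\gaterem = \calO(\log n)$.

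Next, I would construct a $1/\poly(n)$-net $\calN$ over depth-$d$ QACZ circuits on $\support$ qubits by separately discretizing the two circuit ingredients. The number of admissible placements of $CZ_k$ gates (with $2 \le k \le \gaterem$) across the $d$ layers on $\support$ qubits is at most $2^{\calO(d \support \log \support)} = 2^{\poly\log n}$, which is quasi-polynomial. Fixing such a placement, a standard $\eta$-net on $U(2)$ of size $(1/\eta)^{\calO(1)}$ for each of the $\calO(d \support) = \poly\log n$ single-qubit slots, with $\eta = 1/\poly(n)$ sufficiently small, gives a sub-net whose elements approximate every assignment of single-qubit rotations to within normalized Frobenius distance $1/\poly(n)$; a standard telescoping product argument using unitarity of the per-layer pieces converts per-gate $\eta$-closeness into global $1/\poly(n)$-closeness. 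Multiplying by the number of $CZ$ placements, $|\calN| = 2^{\poly\log n}$.

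The algorithm is then a brute-force search: for each $C' \in \calN$, compute $\frobd(W_i, C')$ and output the minimizer $C_i^*$. Since both $W_i$ and every element of $\calN$ act non-trivially only on a $\calO(\log^d n)$-qubit block, each Frobenius evaluation is a computation on a $2^{\poly\log n}$-dimensional matrix and takes $2^{\poly\log n}$ time, yielding an overall quasi-polynomial runtime. Correctness follows by triangle inequality: $V_i^*$ lies within $1/\poly(n)$ of some $C^\dagger \in \calN$ by construction of the net, and $W_i$ lies within $1/\poly(n)$ of $V_i^*$, so $C^\dagger$ is within $1/\poly(n)$ of $W_i$, and the minimizer $C_i^*$ is at least as close. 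The main obstacle will be calibrating the per-gate net accuracy $\eta$ so that the product of $\calO(d \support)$ approximation errors compounds to only $1/\poly(n)$ while keeping $\log(1/\eta) = \calO(\log n)$, so that the per-slot net remains polynomial and the total net size remains quasi-polynomial; this requires being careful that the telescoping only picks up a factor of the number of gates (not something larger) and that the $CZ_k$ layers, being unitary and fixed, do not amplify errors.
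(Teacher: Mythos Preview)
Your approach is essentially identical to the paper's: show $\frobd(W_i, V_i^*) \leq 1/\poly(n)$, build a quasi-polynomial $\varepsilon$-net over constant-depth \QACZ\ circuits on $\calO(\log^d n)$ qubits (the paper's \Cref{thm:e_net}), and brute-force search it. One technical slip to fix: the statement that ``$\proj_U$ only decreases Frobenius distance to any unitary'' is false in general (take $\widetilde{W}_i$ near zero). The paper instead uses Procrustes optimality in the form $\frobd(\widetilde{W}_i, W_i) \leq \frobd(\widetilde{W}_i, V_i^*)$ and then the triangle inequality to obtain $\frobd(W_i, V_i^*) \leq 2\,\frobd(\widetilde{W}_i, V_i^*)$, which is all you need. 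Also, be explicit that the search is over circuits of the form $L^\dagger S_i L$ with $L$ in the depth-$d$ net (as the paper does), rather than comparing $W_i$ directly to elements of a depth-$d$ net; your text conflates the two, though you clearly have the right structure in mind.
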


In order to prove this result, we will first show that the learned unitaries $W_i$ are close to the $\calO(\log^d n)$-support \QACZ~circuits governed by unitary $V_i^*=\widetilde{L}_i^\dagger S_i \widetilde{L}_i$.

\begin{lemma} \label{thm:dist_sew}
    Let $O_{P_i}$ be a Heisenberg-evolved observable of the \QACZ~circuit $C$ and $\widetilde{O}_{P_i}$ be a Heisenberg-evolved observable of some other circuit $\widetilde{C}$ such that
    \begin{align}
        \frobd\left(O_{P_i},\widetilde{O}_{P_i}\right) \leq \epsilon, \quad \forall i \in [n],~P \in \{X,Y,Z\}.
    \end{align}
    Then,
    \begin{align}
        \frobd \left( \frac{1}{2} \sum_{P\in\{X,Y,Z\}}O_{P_i} \otimes P_i,~\frac{1}{2} \sum_{P\in\{X,Y,Z\}}\widetilde{O}_{P_i}\otimes P_i\right) \leq \epsilon
    \end{align}
\end{lemma}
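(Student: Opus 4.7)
The plan is to unfold the normalized Frobenius distance on the left-hand side, expand the double sum into a trace of a product, and exploit the orthogonality of the single-qubit Paulis $P_i, Q_i$ under the Hilbert-Schmidt inner product to collapse it to a diagonal sum. Denote $\Delta_{P_i} = O_{P_i} - \widetilde{O}_{P_i}$. Then the object of interest is
\begin{align}
\frobd \Bigl( \tfrac{1}{2}\!\!\sum_{P\in\{X,Y,Z\}}\!\! O_{P_i}\otimes P_i,~\tfrac{1}{2}\!\!\sum_{P\in\{X,Y,Z\}}\!\!\widetilde{O}_{P_i}\otimes P_i\Bigr) = \frac{1}{2^{2n}}\Bigl\|\tfrac{1}{2}\!\!\sum_{P\in\{X,Y,Z\}}\!\!\Delta_{P_i}\otimes P_i\Bigr\|_F^2,
\end{align}
since the argument is a $2n$-qubit operator and $\frobd$ normalizes by $2^{2n}$.

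Next, I would write $\|\cdot\|_F^2 = \Tr((\cdot)^\dagger(\cdot))$ and expand the square. Because the trace factorizes across the tensor product, each cross term becomes $\Tr(\Delta_{P_i}^\dagger \Delta_{Q_i})\cdot \Tr(P_i Q_i)$. The single-qubit Paulis $P_i, Q_i$ acting on the designated $i$-th qubit tensored with identity on the remaining $n-1$ qubits satisfy $\Tr(P_i Q_i) = 2^n \delta_{P,Q}$, which kills all off-diagonal cross terms. This is exactly the same orthogonality computation that appears in the proof of \Cref{thm:csew_bound}, so I can cite that manipulation almost verbatim.

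After this collapse, the expression reduces to
\begin{align}
\frac{1}{2^{2n}} \cdot \frac{2^n}{4}\sum_{P\in\{X,Y,Z\}}\|\Delta_{P_i}\|_F^2 = \frac{1}{4}\sum_{P\in\{X,Y,Z\}}\frobd(O_{P_i},\widetilde{O}_{P_i}).
\end{align}
Applying the hypothesis $\frobd(O_{P_i},\widetilde{O}_{P_i}) \leq \epsilon$ for each $P\in\{X,Y,Z\}$ yields an upper bound of $\tfrac{3\epsilon}{4} \leq \epsilon$, which is the desired inequality (in fact slightly stronger than stated).

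There is no real obstacle here: the lemma is essentially a Pythagorean/Parseval-style identity for the tensor decomposition into mutually orthogonal Pauli sectors, and the bookkeeping is identical to the one already carried out in the proof of \Cref{thm:csew_bound} (display chain culminating in \Cref{eqn:davg_bound3}). The only thing to double-check is the normalization: the $1/2^{2n}$ from $\frobd$ cancels against $2^n$ from $\Tr(P_i^2)$ and leaves a single $1/2^n$, exactly recovering the single-observable normalization of $\frobd(O_{P_i},\widetilde{O}_{P_i})$ on the right.
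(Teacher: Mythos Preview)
Your proposal is correct and follows essentially the same route as the paper: expand the squared Frobenius norm, use the Hilbert--Schmidt orthogonality $\Tr(P_iQ_i)=2^n\delta_{P,Q}$ to kill cross terms, and arrive at the bound $\tfrac{1}{4}\sum_{P}\frobd(O_{P_i},\widetilde{O}_{P_i})\le \tfrac{3}{4}\epsilon$. Your normalization bookkeeping (tracking the $2^{2n}$ versus $2^n$ factors) is in fact cleaner than the paper's own presentation, which phrases the collapse as ``Cauchy--Schwarz'' together with $\|A\otimes B\|_F^2=\|A\|_F^2\|B\|_F^2$ but reaches the same conclusion.
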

\begin{proof}
    Leveraging Cauchy-Schwarz and the fact that $\|A\otimes B\|_F^2 = \|A\|_F^2\| B\|_F^2$,
    \begin{align}
        \frobd &\left( \frac{1}{2} \sum_{P\in\{I,X,Y,Z\}} O_{P_i} \otimes P_i,~\frac{1}{2} \sum_{P\in\{I,X,Y,Z\}} \widetilde{O}_{P_i}\otimes P_i\right) \\
        & \leq  \frac{1}{2^n} \left\| \frac{1}{2} \sum_{P\in\{I,X,Y,Z\}} (O_{P_i} -\widetilde{O}_{P_i})\otimes P_i\right\|_F^2 \\
        &\leq  \frac{1}{4}   \sum_{P\in\{I,X,Y,Z\}} \frac{1}{2^n} \left\|(O_{P_i} -\widetilde{O}_{P_i})\otimes P_i\right\|_F^2 \\
        &\leq  \frac{1}{4}   \sum_{P\in\{I,X,Y,Z\}} \frac{1}{2^n} \left\|(O_{P_i} -\widetilde{O}_{P_i})\right\|_F^2 \left\| P_i\right\|_F^2 \\
        &=  \frac{1}{4}   \sum_{P\in\{X,Y,Z\}} \frobd \left(O_{P_i},\widetilde{O}_{P_i}\right) \\
        & \leq \frac{3}{4} \cdot \epsilon.
    \end{align}
    This concludes the proof.
\end{proof}

\begin{corollary} \label{thm:bound_dist}
    The learned unitaries $W_i$ are $1/\poly(n)$-close to the unitaries $V_i^*=\widetilde{L}_i^\dagger S_i \widetilde{L}_i$, 
    \begin{align}
        \frobd \left( W_i,V_i^*\right) \leq \frac{1}{\poly (n)}.
    \end{align}
\end{corollary}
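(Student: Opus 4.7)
}
The plan is to chain three closeness bounds: (i) observable-level closeness between $O^*_{P_i}$ and the learned $\widetilde{O}^{(\support)}_{P_i}$, (ii) lift this to closeness of the corresponding ``sewing blocks'' $V^*_i$ and $\widetilde{W}_i$ via \Cref{thm:dist_sew}, and (iii) use the fact that $V^*_i$ is \emph{already} unitary to convert a bound on $\frobd(\widetilde{W}_i,V^*_i)$ into one on $\frobd(W_i,V^*_i)$ through the optimality of the projection $\proj_U$.

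First, I would note that by \Cref{thm:avg_dist}, $\sqrt{\frobd(O_{P_i},O^*_{P_i})}\le \sqrt{\epsilon^*}$, and by \Cref{thm:learn_obs}, $\sqrt{\frobd(O_{P_i},\widetilde{O}^{(\support)}_{P_i})}\le \sqrt{\epsilon_{P_i}}$. Since $\sqrt{\frobd(\cdot,\cdot)}$ is proportional to $\|\cdot\|_F$ and hence obeys the triangle inequality, we get
\begin{align}
\frobd\bigl(O^*_{P_i},\widetilde{O}^{(\support)}_{P_i}\bigr)\le \bigl(\sqrt{\epsilon^*}+\sqrt{\epsilon_{P_i}}\bigr)^{2}\le \frac{1}{\poly(n)},
\end{align}
with the parameters chosen in \Cref{thm:learn_obs} guaranteeing the $1/\poly(n)$ bound. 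Applying \Cref{thm:dist_sew} with $O_{P_i}\leftarrow O^*_{P_i}$ and $\widetilde{O}_{P_i}\leftarrow \widetilde{O}^{(\support)}_{P_i}$ (note that the constant $\frac12 I\otimes I$ part cancels in $V^*_i-\widetilde{W}_i$), this immediately yields $\frobd(V^*_i,\widetilde{W}_i)\le 1/\poly(n)$.

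Next comes the main step: $V^*_i=\widetilde{L}_i^\dagger S_i \widetilde{L}_i$ is a conjugate of a SWAP by a unitary, hence is itself unitary. Therefore, by the definition of $\proj_U$ as the Frobenius-closest unitary to $\widetilde{W}_i$,
\begin{align}
\frobd(W_i,\widetilde{W}_i)=\min_{U\in \mathcal{U}(2^{2n})}\frac{1}{2^{2n}}\|\widetilde{W}_i-U\|_F^{2}\le \frobd(V^*_i,\widetilde{W}_i).
\end{align}
A triangle inequality on $\|\cdot\|_F$ then gives $\|W_i-V^*_i\|_F\le \|W_i-\widetilde{W}_i\|_F+\|\widetilde{W}_i-V^*_i\|_F\le 2\|\widetilde{W}_i-V^*_i\|_F$, so that $\frobd(W_i,V^*_i)\le 4\cdot \frobd(\widetilde{W}_i,V^*_i)\le 1/\poly(n)$, which is the desired conclusion.

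I do not anticipate a real obstacle here: the work has already been done in \Cref{thm:avg_dist,thm:learn_obs,thm:dist_sew} and in the Orthogonal Procrustes characterization of $\proj_U$ (\Cref{fact:project}). The only thing to be mildly careful about is that $\frobd$ is a \emph{squared} distance, so triangle inequalities must be applied at the level of $\|\cdot\|_F$ (equivalently, $\sqrt{\frobd}$) rather than to $\frobd$ directly; this only costs a harmless constant factor and leaves the final $1/\poly(n)$ bound intact.
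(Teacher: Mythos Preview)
Your proposal is correct and follows essentially the same route as the paper: bound $\frobd(\widetilde{W}_i,V^*_i)$ using the observable-level bounds from \Cref{thm:avg_dist} and \Cref{thm:learn_obs} lifted through \Cref{thm:dist_sew}, then exploit that $V^*_i$ is unitary so that the Procrustes projection gives $\frobd(W_i,\widetilde{W}_i)\le\frobd(\widetilde{W}_i,V^*_i)$, and finish with a triangle inequality. The only cosmetic difference is that the paper passes through the intermediate point $V_i$ (bounding $\frobd(\widetilde{W}_i,V_i)$ and $\frobd(V_i,V^*_i)$ separately), whereas you first combine at the observable level to get $\frobd(O^*_{P_i},\widetilde{O}^{(\support)}_{P_i})$ and apply \Cref{thm:dist_sew} once; your care in applying the triangle inequality at the level of $\sqrt{\frobd}$ rather than $\frobd$ is in fact more precise than the paper's own write-up.
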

\begin{proof}
    By \Cref{thm:learn_obs}, 
    \begin{align}
        \frobd \left(\widetilde{O}^{(\support)}_{P_i},O_{P_i}\right) \leq \epsilon_{P_i} = \frac{1}{\poly(n)},    
    \end{align}
    and, by \Cref{thm:avg_dist},
    \begin{align}
        \frobd \left( \widetilde{O}^*_{P_i},O_{P_i}\right) \leq \epsilon^* = \frac{1}{\poly(n)}.    
    \end{align}
    By triangle inequality and \Cref{thm:dist_sew}, we have that
    \begin{align}
        &\frobd \left( \widetilde{W}_i, V^*_i \right) \\
        & \leq  \frobd \left( \widetilde{W}_i, V_i \right)+ \frobd \left( V_i, V_i^* \right)\\
        & \leq \frac{1}{2^n} \left\| \frac{1}{2} \sum_{P\in\{X,Y,Z\}}\left(\widetilde{O}_{P_i}^{(\support)}\otimes P_i - O_{P_i}\otimes P_i\right) \right\|_F^2 + \frac{1}{2^n} \left\| \frac{1}{2} \sum_{P\in\{X,Y,Z\}}\left(O_{P_i}\otimes P_i - O_{P_i}^*\otimes P_i\right) \right\|_F^2\\
        & \leq \epsilon_{P_i}+\epsilon^* \leq \frac{1}{\poly(n)}.
    \end{align}
    Furthemore, since $\proj_U$ is the projection onto the unitary minimizing the Frobenius norm, for $W_i = \proj_U(\widetilde{W}_i)$, it must be true that
    \begin{align}
        \frobd \left( \widetilde{W}_i, W_i \right) \leq \frobd \left( \widetilde{W}_i, V^*_i \right) \leq \frac{1}{\poly(n)}.
    \end{align}
    Therefore, by triangle inequality we obtain the desired result,
    \begin{align}
        \frobd \left( W_i, V_i^*\right) \leq \frobd \left( \widetilde{W}_i, W_i \right) + \frobd \left( \widetilde{W}_i, V^*_i \right)  \leq \frac{1}{\poly (n)}.
    \end{align}
\end{proof}
This implies that there must exist a circuit of the form $V_i^* = \widetilde{L}_i^\dagger S_i \widetilde{L}_i$ that is $1/\poly(n)$-close to each of the learned unitaries $W_i$. Note that since the swap gate $S_i$ can be implemented in \QACZ, the circuit  $V_i^*$ is contained \QACZ. This implies that, to find a circuit $1/\poly(n)$-close to $W_i$, rather than searching over all possible \QACZ~architectures, we can restrict our search to \QACZ~circuits of the form  of $V_i^*$, or more precisely \QACZ~circuits in the lightcone $\widetilde{L}_i$. Thus, we will now show how to efficiently construct an $\epsilon$-net over \QACZ~circuits of depth-$d$, with $\calO(\log^d n)$ support, that is guaranteed to contain a \QACZ~circuit $1/\poly(n)$-close to the learned unitary $W_i$.



\begin{lemma} \label{thm:e_net}
    Let $\calC^*$ be the class of all depth-$d$ \QACZ~circuits with $CZ$ gates acting on at most $\gaterem=\calO(\log n)$ qubits and supported on  $\calO(\log^d n)$ qubits. $\calC^*$ has a $1/\poly(n)$-net, denoted $\calN_{1/\poly(n)} (\calC^*)$, of quasi-polynomial size that can be constructed in quasi-polynomial time.
\end{lemma}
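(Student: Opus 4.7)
The plan is to build the net by cleanly separating the discrete ``wiring'' choices of a circuit in $\calC^*$ from its continuous single-qubit parameters, enumerating each part separately and then taking a Cartesian product. First, I would enumerate the \emph{architectural} data of a candidate circuit: the choice of the $m = \calO(\log^d n)$ qubits on which the circuit is supported (at most $\binom{n}{m} \leq n^m$ possibilities), and, within each of the $d$ layers, the configuration of disjoint many-qubit $CZ$ gates. Each $CZ_k$ in a layer is specified by a $k$-qubit subset of the support with $k \leq \gaterem = \calO(\log n)$, and at most $m/2$ such gates can coexist in one layer since their supports must be disjoint. Hence the per-layer count of $CZ$-configurations is bounded by $m^{\gaterem \cdot m/2}$ and, over $d$ layers, by $m^{d \gaterem m / 2}$. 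Plugging in $m = \calO(\log^d n)$ and $\gaterem = \calO(\log n)$, the total number of architectures is $n^m \cdot m^{d \gaterem m/2} = 2^{\calO(\poly\log n)}$, i.e.\ quasi-polynomial in $n$.

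Second, for each fixed architecture I would populate the single-qubit gate slots from a standard operator-norm $\varepsilon'$-net $\mathcal{N}_{SU(2)}$ for $SU(2)$, of size $\calO(1/\varepsilon'^{3})$. The circuit has at most $N = d \cdot m = \calO(\log^d n)$ such slots (WLOG one single-qubit layer between consecutive $CZ$ layers, with at most $m$ slots per layer). A standard hybrid argument--replacing gates one at a time and using the triangle inequality together with unitary invariance and sub-multiplicativity of the operator norm--shows that if every single-qubit gate is within $\varepsilon'$ of its target in operator norm, then the resulting $m$-qubit unitary $W^*$ satisfies $\| W - W^* \|_\infty \leq N \varepsilon'$. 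Setting $\varepsilon' = 1/(N \cdot \poly(n))$ thus forces $\| W - W^* \|_\infty \leq 1/\poly(n)$, and since both unitaries act on only $m$ qubits,
\begin{align}
\frobd(W, W^*) \;=\; \frac{1}{2^m}\|W - W^*\|_F^2 \;\leq\; \|W - W^*\|_\infty^2 \;\leq\; \frac{1}{\poly(n)},
\end{align}
which is the accuracy required by the lemma. The per-architecture parameter enumeration then has size $\calO(1/\varepsilon')^{3N} = \poly(n)^{\calO(\log^d n)} = 2^{\calO(\poly\log n)}$.

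Combining the two stages, the total size of $\calN_{1/\poly(n)}(\calC^*)$ is the product of the quasi-polynomial architectural count and the quasi-polynomial per-architecture parameter count, hence again $2^{\calO(\poly\log n)}$. Each element is produced by writing down the chosen $CZ$-architecture and plugging in the chosen single-qubit matrices, which takes $\poly(n, m)$ time per element; so the whole net is constructed in quasi-polynomial total time.

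The main obstacle is bookkeeping rather than substance: one must ensure that per-gate $\varepsilon'$ errors accumulate only linearly in the number of gates $N$--not multiplicatively in the matrix dimension $2^m$--which is guaranteed by carrying out the hybrid estimate in operator norm and only then converting to the dimension-free normalized Frobenius distance $\frobd$. A secondary subtlety is that the ``support'' subset of size $m$ is not known a priori when the lemma is invoked; this is resolved either by enumerating all $n^m$ subsets (a quasi-polynomial overhead absorbed above) or, in the downstream compilation of a specific $W_i$, by reading the support directly off the learned low-support observable $\widetilde{O}^{(\support)}_{P_i}$.
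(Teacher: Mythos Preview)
Your proposal is correct and follows essentially the same two-stage approach as the paper: enumerate the discrete $CZ$-architectures, then discretize the continuous single-qubit gates via an $SU(2)$ $\varepsilon'$-net combined with a standard hybrid error argument. If anything, your bookkeeping is tighter than the paper's---you explicitly account for the choice of the $m$-qubit support, carry out the hybrid estimate in operator norm before converting to $\frobd$, and correctly take the Cartesian product of per-slot $SU(2)$-nets (the paper's final size calculation multiplies where it should exponentiate, though the quasi-polynomial conclusion is unaffected).
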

\begin{proof}
    Recall that the general structure of a \QACZ~circuit is alternating layers of $CZ$ gates and layers of arbitrary single-qubit gates. Therefore, we can construct an $\epsilon$-net for $\calC^*$ by first enumerating all possible architectures (i.e. placements of $CZ$ gates) and then, for each architecture, creating an $\epsilon'$-net for each possible SU(2) gate.
    
    We will begin by enumerating all possible \QACZ~architectures, i.e. configurations of $CZ$ gates of width at most $\gaterem = \calO(\log n)$ acting on $\support=\calO(\log^d n)$ qubits across $d$ layers. We begin thinking of a given layer of the architecture as a graph with $\support$ vertices, corresponding to each of the qubits in the support. Within this graph framework, if a set of vertices are contained in a $k$-clique this means the corresponding qubits are acted upon by a $CZ_k$ gate. Therefore, to enumerate the total number of distinct $CZ$ configurations in the layer, we simply need to enumerate the number of distinct graphs comprised of cliques of size at most $\gaterem$. This is trivially upper-bounded by the number of distinct subgraphs of the complete graph on $\support$ nodes, which is quasi-polynomial, i.e.
    \begin{align}
        2^{{\support \choose 2}} \leq 2^{\support^2} = 2^{\poly\log(n)}.
    \end{align}
    Since the circuits are depth-$d$, where $d$ is constant, the total number of $CZ$ configurations across the whole circuit is $d \cdot 2^{\poly\log(n)}$, which is also quasi-polynomial.

    As previously mentioned, between the layers of $CZ$ gates are layers of arbitrary single-qubit gates. In total, there are at most $d \cdot \support \leq \calO(\log^d n)$ single-qubit gates. By a standard hybrid argument, it can be shown that the error propagation of the SU(2) $\epsilon'$-net is additive both within and across layers of the single qubit gates. Therefore, to achieve an overall $ 1/\poly(n)$-net,
    \begin{align}
        d \cdot \support \cdot \epsilon' \leq \poly\log (n) \cdot \epsilon' \leq \frac{1}{\poly(n)},
    \end{align}
    which implies that $\epsilon'\leq 1/\poly(n)$. An $\epsilon'$-net for $SU(2)$ can be constructed with $\left(\frac{c_0}{\epsilon'}\right)^{c_1}$ elements, which for $\epsilon'=1/\poly(n)$ is polynomial size. 
    
    Therefore, the total size of the net is the size of the $SU(2)$ $\epsilon'$-net times the total number of single-qubit gates times the total number of architectures, which is bounded as
    \begin{align}
        \poly(n) \cdot d \cdot \poly\log(n) \cdot 2^{\poly\log(n)} \leq \calO(2^{\poly\log(n)}),
    \end{align}
    and therefore quasi-polynomial.
\end{proof}

We will now show that we can efficiently find an element of the $\epsilon$-net that is $1/\poly(n)$-close to $W_i$. Combining this with the prior results of the section, we achieve a simple proof of \Cref{thm:improve_comp}.

\begin{proof}[Proof of \Cref{thm:improve_comp}]
    By using a brute-force search procedure, we can iterate through the quasi-polynomial elements of the $\epsilon$-net described in \Cref{thm:e_net} to find the element  
    \begin{align}
        L^*_i = \underset{L \in \calN_{1/\poly(n)} (\calC^*)}{\arg\min} \frobd \left( W_i, L^\dagger S_i L \right).
    \end{align}
    Since the swap gate $S_i$ can be implemented in constant-depth in \QACZ, we thus have a constant-depth circuit implementation of the unitary $C_i^* = (L^*_i)^\dagger S_i L^*_i$.
    By \Cref{thm:bound_dist}, 
    \begin{align}
        \frobd \left( W_i, C_i^* \right) \leq \frobd \left( W_i, \widetilde{L}_i^\dagger S_i \widetilde{L}_i \right) = \frobd \left( W_i, V_i^*\right) \leq \frac{1}{\poly (n)},
    \end{align}
    meaning that $C_i^*$ is $1/\poly(n)$-close to $W_i$, as desired.
\end{proof}

\subsubsection{Improved Ordering}
Leveraging the improved compilation result, we will now propose an improved ordering for the sewing procedure. This will enable us to construct a \QAC~circuit which is $1/\poly(n)$-close to $C_\text{sew}$, thereby proving \Cref{thm:uni_synthesis}.

To begin, we show how the constant-depth learned circuits for $C_i^*$ can be sewed into a worst-case poly-logarithmic depth circuit. Note that our proof approach is similar to that of \cite[Lemma 13]{huang2024learning}. However, the circuit for each $C_i^*$ has poly-logarithmic support (whereas those of \cite{huang2023learning} have constant support), meaning we only achieve poly-logarithmic depth (instead of constant depth). 

\begin{lemma}(Sewing into a poly-logarithmic depth circuit) \label{thm:sew_order} Given $3n$ learned observables $\widetilde{O}_{P_i}^{(\support)}$, such that for any qubit $i$, $\left| \bigcup_P \supp\left(\widetilde{O}_{P_i}^{(\support)}\right) \right| = \calO (\poly\log(n))$ and there are only $\poly\log(n)$ qubits $j$ such that 
\begin{align}
    \supp\left(\widetilde{O}_{P_i}^{(\support)}\right) \cap \supp\left(\widetilde{O}_{P_j}^{(\support)}\right) \neq \varnothing.
\end{align}
There exists a sewing ordering for $C_\text{sew}$, as defined in \Cref{eqn:sew}, such that it can be implemented by a $\poly\log(n)$-depth quantum circuit. The computational complexity for finding this sewing order is polynomial, i.e. $\calO(n \log^d n)$.
\end{lemma}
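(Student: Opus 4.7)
The plan is to recast the sewing problem as a graph coloring problem on an \emph{interference graph}, apply the single-unitary compilation result \Cref{thm:improve_comp} in parallel within each color class, and then concatenate the resulting parallel blocks. This mirrors the strategy of \cite[Lemma 13]{huang2024learning}, but the key quantitative change is that the per-qubit support here is $\poly\log(n)$ rather than constant, so the parallelization yields $\poly\log(n)$ depth rather than constant depth.

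Concretely, I would first construct the interference graph $G = (V, E)$ with $V = [n]$ (one vertex per output qubit index) and an edge $\{i, j\} \in E$ whenever $\bigcup_{P} \supp\!\big(\widetilde{O}_{P_i}^{(\support)}\big) \cap \bigcup_{P} \supp\!\big(\widetilde{O}_{P_j}^{(\support)}\big) \neq \varnothing$. By the hypotheses of the lemma, every vertex in $G$ has degree at most $\poly\log(n)$, so the maximum degree $\Delta$ of $G$ is $\poly\log(n)$. Building $G$ by scanning each vertex's $\poly\log(n)$ neighbors takes time $\calO(n \log^d n)$. Next, I would run a standard greedy coloring on $G$, which produces a proper coloring with at most $\Delta + 1 = \poly\log(n)$ colors in time $\calO(n \Delta) = \calO(n \log^d n)$. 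Call the resulting color classes $\mathcal{C}_1, \ldots, \mathcal{C}_K$ with $K = \poly\log(n)$.

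The key claim is that the unitaries $\{W_i\}_{i \in \mathcal{C}_k}$ within a single color class act on pairwise disjoint subsets of the $2n$-qubit register and therefore commute and can be applied simultaneously. The support of $W_i$ on the top $n$-qubit register equals $\bigcup_P \supp\!\big(\widetilde{O}_{P_i}^{(\support)}\big)$, which by the coloring is disjoint from that of any other $W_j$ with $j \in \mathcal{C}_k$, $j \neq i$. On the bottom $n$-qubit register, $W_i$ acts only on qubit $i$ via $P_i$, so these single-qubit footprints are trivially disjoint across distinct $i$. Then, by \Cref{thm:improve_comp}, each $W_i$ admits a depth-$d$ \QACZ~implementation on $\calO(\log^d n)$ qubits to within $1/\poly(n)$ error. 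Executing all $W_i$ in a color class in parallel therefore costs only depth $d$, and concatenating across the $K$ color classes gives a circuit of depth $d \cdot K = \poly\log(n)$ for $\prod_{i=1}^{n} W_i$. The final $\textnormal{SWAP}^{\otimes n}$ acts on $n$ disjoint qubit pairs and decomposes in constant depth into single-qubit gates and $CZ$, so it contributes only $\calO(1)$ to the total depth, and the overall depth of $C_\text{sew}$ is $\poly\log(n)$.

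The main subtlety, and the step I would be most careful to verify, is the disjoint-support claim within a color class: one must check that the edges in $G$ (defined from the top-register supports alone) are enough to certify full disjointness of the $W_i$ across the entire $2n$-qubit register. Because the bottom-register supports are automatically disjoint for distinct $i$, the interference graph built purely from top-register overlaps suffices, which is what makes the greedy-coloring reduction work. A secondary issue is that \Cref{thm:improve_comp} produces only a $1/\poly(n)$-approximation of each $W_i$; propagating the approximation error through the $n$-fold product $\prod_i W_i$ by a standard hybrid argument (as in the proof of \Cref{thm:csew_bound}) adds an extra $n/\poly(n) = 1/\poly(n)$ error, which is acceptable and establishes the approximation guarantee needed in \Cref{thm:uni_synthesis}.
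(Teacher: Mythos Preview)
Your proposal is correct and follows essentially the same approach as the paper: build an interference graph on $[n]$ with edges determined by overlapping top-register supports, observe it has $\poly\log(n)$ degree, greedily color it with $\poly\log(n)$ colors in $\calO(n\log^d n)$ time, and parallelize the constant-depth compiled unitaries $C_i^*$ from \Cref{thm:improve_comp} within each color class. Your explicit check that the bottom-register supports are automatically disjoint (since $W_i$ touches only qubit $n+i$ there) and your remark on hybrid-argument error propagation are slightly more detailed than the paper's write-up, but the strategy is identical.
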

\begin{proof}[Proof of \Cref{thm:sew_order}]
    Defining $A(i) = \bigcup_P \supp\left(\widetilde{O}_{P_i}^{(\support)}\right)$, then $\supp\left(C^*_{i}\right) \subseteq A(i) \cup \{n+i\}$, where $C_i^*$ is the learned constant-depth circuit from \Cref{thm:improve_comp}. 
    
    Now, consider an $n$-node graph (where each node represents one of the $n$ qubits), such that each pair $(i,j)$ of nodes is connected by an edge if 
    \begin{align}
        A(i) \cap A(j) \neq \varnothing.
    \end{align}
    The graph only has $\calO(n \log^d n)$ many edges and can be constructed as an adjacency list in time $\calO(n \log^d n)$. Since the size of the support $A(i)$ is poly-logarithmic, the graph has poly-logarithmic degree. Thus, we can use a $\calO(n\log^d n)$-time greedy graph coloring algorithm to color the graph using $\chi = \calO(\log^d n)$ colors. For each node $i$, let $c(i)$ denote the color labeled from 1 to $\chi$. 
    
    We can modify the arbitrary sewing order of the $3n$ observables $\widetilde{O}_{P_i}^{(\support)}$ in \Cref{eqn:sew} to the ordering given by this greedy graph coloring, where we order from the smallest to the largest color. By the definition of graph coloring, for any pair $(i,j)$ of qubits with the same color,
    \begin{align}
        A(i) \cap A(j) = \varnothing.
    \end{align}
    Therefore, for each color $c'$, we can implement the $2n$-qubit unitary
    \begin{align}
        \prod_{i:c(i)=c'} C^*_i
    \end{align} 
    via the constant-depth quantum circuits $C^*_i$. Since there are at most a poly-logarithmic number of colors, with the color-based ordering, $C_\text{sew}$ will be poly-logarithmic depth in the worst-case.
\end{proof}

Combining this improved ordering result with the improved compilation result of the last section, we can now prove \Cref{thm:uni_synthesis}.

\begin{proof}[Proof of \Cref{thm:uni_synthesis}]
    For each qubit $i$, by \Cref{thm:improve_comp} we can find a constant-depth \QACZ~circuit $C_i^*$ such that 
    \begin{align}
        \frobd \left( W_i, C_i^* \right) \leq \frac{1}{\poly(n)}.
    \end{align}
    Performing this for all $n$ measurement qubits requires quasi-polynomial computational complexity. By \Cref{thm:sew_order}, we can find a sewing order for all the $C_i^*$ circuits, in polynomial time, which sews them into the poly-logarithmic depth circuit $C_\text{sew}^*$. 
    
    To conclude, we will prove that  $C_\text{sew}^*$ is $1/\poly(n)$-close to $C \otimes C^\dagger$. By triangle inequality and \Cref{thm:main_result}
    \begin{align}
        \frobd(C_\text{sew}^*,C \otimes C^\dagger) & \leq \frobd(C_\text{sew}^*,C_\text{sew}) + \frobd(C_\text{sew},C \otimes C^\dagger) \leq \frobd(C_\text{sew}^*,C_\text{sew})+\frac{1}{\poly(n)} 
    \end{align}
    Therefore, all that remains is to show that  $C_\text{sew}^*$ is $1/\poly(n)$-close to $C_\text{sew}$. This can be achieved by leveraging the bound of \Cref{thm:improve_comp} and a simple hybrid argument, 
    \begin{align}
        \frobd(C_\text{sew}^*,C_\text{sew}) & = \frobd\left(\prod_{i} C^*_i,\prod_{i=1}^n W_i\right) \\
        &\leq \frobd\left(\prod_{i=1}^n C^*_i,W_1\prod_{i=2}^n C^*_i\right)+\frobd\left(W_1\prod_{i=2}^n C^*_i,\prod_{i=1}^n W_i\right)  \\
        &\leq \frobd\left(C^*_1,W_1\right)+\frobd\left(\prod_{i=2}^n C^*_i,\prod_{i=2}^n W_i\right)  \\
        &\leq \frac{1}{\poly(n)}+\frobd\left(\prod_{i=2}^n C^*_i,\prod_{i=2}^n W_i\right) \\
        & \quad \quad \quad \vdots \\
        &\leq n \cdot \frac{1}{\poly(n)} \leq \frac{1}{\poly(n)}.
    \end{align}
    This concludes the proof.
\end{proof}

\section{Concentration and Learning of \texorpdfstring{\QACZ}~ with Ancillas} \label{sec:ancillas}

Now we will show how things change for  \QACZ~circuits with ancillas. In this section, let $C$ be the unitary corresponding to an $(n+a)$-qubit \QACZ~circuit, of the same form as \Cref{eqn:qac0_circ}, operating on $n$ standard qubits and $a$ ancilla qubits. Similar to \cite{huang2024learning}, we will only consider circuits where the ancillas are initialized to the $\ket{0^a}$ state and the computation is clean (meaning ancillas are reverted to the $\ket{0^a}$ state at the end of the computation). Note that since the computation is clean, the action of $C$ on the $(n+a)$-qubit system is equivalent to the action of another unitary $A$ on just the $n$-qubit system without ancillas, i.e.
\begin{align} \label{eqn:clean_compute}
    C (I \otimes \ket{0^a}) = A \otimes \ket{0^a}.
\end{align}
We will define the Heisenberg-evolved Pauli observables of this system ``without ancilla restriction'' as
\begin{align}
    \obsanc = C (P_i \otimes I^a) C^\dagger 
\end{align}
and ``with ancilla restriction'' as
\begin{align}
    \obsared = (I \otimes \bra{0^a}) \cdot \obsanc \cdot (I \otimes \ket{0^a}) = (I \otimes \bra{0^a}) C (P_i \otimes I^a) C^\dagger (I \otimes \ket{0^a}) = A P_i A^\dagger.
\end{align}

\subsection{Concentration of \texorpdfstring{\QACZ}~ Heisenberg-Evolved Observables with Ancillas} \label{sec:anc_conc}
We will now re-prove the concentration results of \Cref{sec:heis_evolve_obs} for  \QACZ~circuits with ancillas.  To begin, generalizing \Cref{thm:avg_dist} to \QACZ~circuits with ancillas, we bound the distance between the true observable $\obsared$ and the observable $\obsared^*$ (with large $CZ$ gates removed).

\begin{lemma} \label{thm:avg_dist_anc}
    Let $C$ be an $(n+a)$-qubit \QACZ~circuit performing clean computation with respect to $a$ ancillas. Let $\obsanc = C (P_i \otimes I^a) C^\dagger$ be the Heisenberg-evolved $P_i$ observable without ancilla restriction and $\obsared = (I \otimes \bra{0^a}) \cdot \obsanc \cdot (I \otimes \ket{0^a})$ be the same observable with ancilla restriction. Define $\obsanc^*= \widetilde{C} (P_i \otimes I^a) \widetilde{C}^\dagger$ to be the Heisenberg-evolved $P_i$ observable corresponding to the \QACZ~circuit $\widetilde{C}$, which is simply $C$ with all $m$ $CZ_k$'s of size $k> \gaterem$ removed. Let $\obsared^* = (I \otimes \bra{0^a}) \cdot \obsanc^* \cdot (I \otimes \ket{0^a})$ be the ancilla-restricted version of this observable. The average-case distance between observables corresponding to circuits $C$ and $\widetilde{C}$ is upper-bounded by
    \begin{align}
        \frobd \left( \obsared, \obsared^*\right) \leq \frobd ( \obsanc, \obsanc^* ) \leq 2^a \cdot \frac{9m^2}{2^{\gaterem}}.
    \end{align}
\end{lemma}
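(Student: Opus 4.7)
The plan is to split the claimed chain of inequalities into two independent ingredients: a purely linear-algebraic ``partial projection'' bound (first inequality) and a direct application of \Cref{thm:avg_dist} to the $(n+a)$-qubit system (second inequality). For the first inequality, I note that by linearity
\begin{align}
\obsared - \obsared^{*} = (I \otimes \bra{0^{a}}) (\obsanc - \obsanc^{*}) (I \otimes \ket{0^{a}}),
\end{align}
which is, up to relabelling of indices, the ``$0^{a},0^{a}$'' diagonal block of the $(n+a)$-qubit operator $\obsanc - \obsanc^{*}$. Writing both Frobenius norms as sums of squared matrix elements and throwing away the blocks indexed by nonzero ancilla strings immediately yields $\|\obsared - \obsared^{*}\|_{F}^{2} \leq \|\obsanc - \obsanc^{*}\|_{F}^{2}$, and dividing by $2^{n}$ gives the first inequality (under the $1/2^{n}$ normalization convention used throughout the ancilla section).

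For the second inequality, I observe that $\obsanc$ and $\obsanc^{*}$ are precisely the Heisenberg-evolved single-qubit Pauli observables, \emph{without} ancilla restriction, of the two $(n+a)$-qubit \QACZ~circuits $C$ and $\widetilde{C}$, where $\widetilde{C}$ is obtained from $C$ by removing exactly the same $m$ wide $CZ_{k}$ gates of size $k > \gaterem$. So the hypotheses of \Cref{thm:avg_dist} are satisfied verbatim for the $(n+a)$-qubit system, giving $\frac{1}{2^{n+a}} \|\obsanc - \obsanc^{*}\|_{F}^{2} \leq \frac{9 m^{2}}{2^{\gaterem}}$. Multiplying both sides by $2^{a}$ converts the $(n+a)$-qubit normalization on the left into the $1/2^{n}$ normalization used in the lemma statement, producing the advertised $2^{a}$ blowup.

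The only subtlety, and the main place to be careful, is the normalization: reading $\frobd$ with a common $1/2^{n}$ factor on both sides (as the informal restatement in \Cref{sec:results} makes explicit) is what makes the ancilla overhead appear as a multiplicative $2^{a}$ rather than silently cancelling with the Hilbert-space dimension. Beyond this bookkeeping, neither step requires any new idea beyond the hybrid argument already established for the ancilla-free \Cref{thm:avg_dist}, so I do not expect any real obstacle.
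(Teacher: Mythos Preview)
Your proposal is correct and follows essentially the same approach as the paper: the paper establishes the first inequality via $\|AB\|_F \leq \|A\|\cdot\|B\|_F$ with $\|I\otimes\ket{0^a}\|=1$ (equivalent to your block/submatrix argument), and obtains the second inequality by re-running the hybrid computation of \Cref{thm:avg_dist} on the $(n+a)$-qubit system, which is exactly your black-box application followed by the $2^a$ normalization adjustment. Your explicit flagging of the $1/2^n$ normalization convention is apt, as that is indeed the only place where care is needed.
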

\begin{proof}
    We derive the following bound, using the facts that $\|AB\|_F \leq \|A\| \cdot \|B\|_F$ and $\|I \otimes \ket{0^a}\| = 1$:
    \begin{align}
        \| \obsared - \obsared^* \|_F &= \| (I \otimes \bra{0^a}) \left( C (P_i \otimes I^a)  C^\dagger - \widetilde{C} (P_i \otimes I^a) \widetilde{C}^\dagger \right)(I \otimes \ket{0^a}) \|_F \\
        & \leq  \| I \otimes \bra{0^a} \| \cdot \|\left( C (P_i \otimes I^a)  C^\dagger - \widetilde{C} (P_i \otimes I^a) \widetilde{C}^\dagger \right)(I \otimes \ket{0^a}) \|_F \\
        & \leq  \| I \otimes \bra{0^a} \| \cdot \| C (P_i \otimes I^a)  C^\dagger - \widetilde{C} (P_i \otimes I^a) \widetilde{C}^\dagger \|_F \cdot \|I \otimes \ket{0^a} \| \\
        & =  1 \cdot \| C (P_i \otimes I^a)  C^\dagger - \widetilde{C} (P_i \otimes I^a) \widetilde{C}^\dagger \|_F \cdot 1 \\
        & = \| C (P_i \otimes I^a)  C^\dagger - \widetilde{C} (P_i \otimes I^a) \widetilde{C}^\dagger \|_F \\
        & = \| \obsanc - \obsanc^* \|_F. 
    \end{align}
    Therefore, if $C$ has $m$ $CZ$ gates of width $\geq \gaterem$ to be removed in $\widetilde{C}$, then following from \Cref{eqn:obs_ub},
    \begin{align}
        \frobd ( \obsared,  \obsared^* ) & \leq \frobd ( \obsanc, \obsanc^* ) \\
        & \leq \frac{1}{2^n} \left( \sum_{j=1}^m 3\cdot 2^{(n+a-k_j)/2} \right)^2 \\
        & \leq \frac{1}{2^n} \left( 3m \cdot \max_j  2^{(n+a-k_j)/2} \right)^2   \\
        & = \frac{9m^2}{2^{\gaterem-a}}.
    \end{align}
\end{proof}
We will now demonstrate that the Pauli weight of the observable with ancilla restriction $\obsared$ can be upper-bounded by that of the observable without ancilla restriction $\obsanc$. 
\begin{lemma} \label{thm:anc_weight_relate} Let $\calS \subseteq \pauli^n$ be a subset of the set of $n$-qubit Paulis, then 
    \begin{align}
        \wt^{\in \calS}[\obsared] \leq 2^a \cdot \wt^{\in\calS}[\obsanc].
    \end{align}
\end{lemma}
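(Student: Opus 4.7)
The plan is to relate the Pauli coefficients of $\obsared$ directly to those of $\obsanc$ by expanding the ancilla projector $\ketbra{0^a}{0^a}$ in the Pauli basis and then applying Cauchy--Schwarz to obtain the factor of $2^a$. Throughout, we interpret $\wt^{\in\calS}[\obsanc]$ as the total Pauli weight on $(n+a)$-qubit Pauli strings whose first $n$ tensor factors lie in $\calS$, i.e.\ over $\{Q\otimes R : Q\in\calS,\ R\in\pauli^a\}$; this is the natural extension of $\calS\subseteq\pauli^n$ to the larger system and is consistent with how the lemma is applied in the sequel.

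First, I would rewrite the ancilla-restricted observable using the cyclicity of trace to get, for any $n$-qubit Pauli $Q\in\calS$,
\begin{align}
    \widehat{\obsared}(Q) \;=\; \frac{1}{2^n}\Tr\!\big(\obsared \cdot Q\big) \;=\; \frac{1}{2^n}\Tr\!\Big(\obsanc \cdot \big(Q \otimes \ketbra{0^a}{0^a}\big)\Big).
\end{align}
Next, I would expand the ancilla projector in the Pauli basis via $\ketbra{0}{0} = \tfrac{1}{2}(I+Z)$, giving $\ketbra{0^a}{0^a} = \frac{1}{2^a}\sum_{R\in\{I,Z\}^a} R$. Substituting this in and using the definition $\widehat{\obsanc}(Q\otimes R) = \tfrac{1}{2^{n+a}}\Tr(\obsanc\cdot(Q\otimes R))$ produces the key identity
\begin{align}
    \widehat{\obsared}(Q) \;=\; \sum_{R\in\{I,Z\}^a} \widehat{\obsanc}(Q\otimes R).
\end{align}

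The main step is then a Cauchy--Schwarz application to this identity. Viewing the right-hand side as an inner product of the all-ones vector in $\mathbb{R}^{2^a}$ with the vector of coefficients $\big(\widehat{\obsanc}(Q\otimes R)\big)_{R\in\{I,Z\}^a}$, we obtain
\begin{align}
    \big|\widehat{\obsared}(Q)\big|^2 \;\leq\; 2^a \sum_{R\in\{I,Z\}^a} \big|\widehat{\obsanc}(Q\otimes R)\big|^2.
\end{align}
Summing over all $Q\in\calS$ and enlarging the inner sum from $R\in\{I,Z\}^a$ to all $R\in\pauli^a$ (which only increases nonnegative terms) yields
\begin{align}
    \wt^{\in\calS}[\obsared] \;\leq\; 2^a \sum_{Q\in\calS}\sum_{R\in\pauli^a} \big|\widehat{\obsanc}(Q\otimes R)\big|^2 \;=\; 2^a \cdot \wt^{\in\calS}[\obsanc],
\end{align}
which is the desired bound.

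I do not expect any serious obstacle: the Pauli expansion of $\ketbra{0^a}{0^a}$ is standard, and the only inequality used is Cauchy--Schwarz, which naturally accounts for the exponential blow-up in $a$. The mildly delicate point is making sure the interpretation of $\wt^{\in\calS}$ for the $(n+a)$-qubit operator is unambiguous; stating it explicitly (as above) and observing that the final enlargement of the $R$-sum to all of $\pauli^a$ is lossless in terms of the stated inequality should suffice.
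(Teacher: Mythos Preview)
Your proposal is correct and follows essentially the same approach as the paper: derive the identity $\widehat{\obsared}(Q)=\sum_{R\in\{I,Z\}^a}\widehat{\obsanc}(Q\otimes R)$ and then apply Cauchy--Schwarz. The only cosmetic difference is that you obtain the identity by expanding $\ketbra{0^a}{0^a}=\tfrac{1}{2^a}\sum_{R\in\{I,Z\}^a}R$, whereas the paper reaches it by a direct term-by-term trace computation; both routes are equivalent and yield the same bound.
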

\begin{proof}[Proof of \Cref{thm:anc_weight_relate}]
    We will begin by assuming that $\obsanc$ has the following Pauli decomposition,
    \begin{align}
        \obsanc = \sum_{Q \in \pauli^n, R \in \pauli^a} \widehat{\obsanc}(Q \otimes R) \cdot Q \otimes R
    \end{align}
    We can relate the Pauli coefficients of $\obsanc$ to those of $\obsared$, as
    \begin{align}
        \widehat{\obsared}(S) &= \frac{1}{2^n} \cdot \Tr \left((I \otimes \bra{0^a}) \cdot \obsanc \cdot (I \otimes \ket{0^a}) \cdot S\right) \\
        &= \frac{1}{2^n} \cdot \Tr \left((I \otimes \bra{0^a}) \left(\sum_{Q \in \pauli^n, R \in \pauli^a} \widehat{\obsanc}(Q \otimes R) \cdot Q \otimes R \right) (I \otimes \ket{0^a}) \cdot (S \otimes I^a)\right) \\
        &= \frac{1}{2^n} \cdot \sum_{Q \in \pauli^n, R \in \pauli^a} \widehat{\obsanc}(Q \otimes R) \cdot \Tr \left(QS \otimes \bra{0^a}R \ket{0^a}\right) \\
        &= \frac{1}{2^n} \cdot \sum_{Q \in \pauli^n, R \in \pauli^a} \widehat{\obsanc}(Q \otimes R) \cdot \Tr (QS) \cdot \bra{0^a}R \ket{0^a} \\
        &= \frac{1}{2^n} \cdot \sum_{Q \in \pauli^n, R \in \pauli^a} \widehat{\obsanc}(Q \otimes R) \cdot 2^n \cdot \delta\{Q=S\} \cdot \delta\{R \in \{I,Z\}^{\otimes a}\} \\
        &= \sum_{R \in \{I,Z\}^{\otimes a}} \widehat{\obsanc}(S \otimes R).
    \end{align}
    Leveraging this result and the Cauchy-Schwarz inequality, we can prove the desired result as
    \begin{align}
        \wt^{\in\calS}[\obsared] &= \sum_{Q:Q\in \calS} \left|\widehat{\obsared}(Q)\right|^2 \\
        &=\sum_{Q:Q\in \calS} \left|\sum_{R \in \{I,Z\}^{\otimes a}} \widehat{\obsanc}(Q \otimes R)\right|^2 \\
        & \leq 2^a \cdot \sum_{Q:Q\in \calS} \sum_{R \in \{I,Z\}^{\otimes a}} \left|\widehat{\obsanc}(Q \otimes R)\right|^2 \\
        & \leq 2^a \cdot \wt^{\in \calS}[\obsanc].
    \end{align}
\end{proof}
\noindent Note that this result is similar to that of \cite[Proposition 27]{nadimpalli2023pauli}, but we prove the bound for the weight over arbitrary subsets of Paulis $\calS \in \pauli^n$, as opposed to just the weight above a certain degree. This enables us to use the result to achieve both the low-degree and low-support concentration bounds as simple corollaries.

\begin{corollary}[Low-Degree Concentration with Ancillas] \label{thm:deg_conc_anc}
    Suppose $C$ is a depth-$d$, size-$s$ \QACZ~circuit performing clean computation on $n+a$ qubits, where $a$ is the number of ancilla. Let $\obsared = (I \otimes \bra{0^a}) C (P_i \otimes I^a) C^\dagger (I \otimes \ket{0^a})$ be a Heisenberg-evolved single-qubit Pauli observable with ancilla restriction. Then for every degree $k \in [n]$,
    \begin{align}
        \wt^{>k}[\obsared] \leq \calO \left(s^2 2^{-k^{1/d}}\right) \cdot 2^a.
    \end{align}
\end{corollary}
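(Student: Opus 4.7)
\textbf{Proof plan for Corollary \ref{thm:deg_conc_anc}.} The plan is to reduce the ancilla-restricted low-degree concentration statement to the already-established ancilla-free concentration result (\Cref{thm:low_deg_conc}) applied to the full $(n+a)$-qubit observable $\obsanc$, paying only a $2^a$ blow-up coming from the ancilla projection \Cref{thm:anc_weight_relate}.

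First I would apply \Cref{thm:low_deg_conc} directly to $\obsanc = C(P_i \otimes I^a)C^\dagger$, viewing $C$ as a depth-$d$, size-$s$ \QACZ~circuit on $n+a$ qubits and $P_i \otimes I^a$ as a single-qubit Pauli on that larger system. This immediately yields
\begin{align}
\wt^{>k}[\obsanc] = \sum_{P \in \pauli^{n+a} : |P| > k} \left|\widehat{\obsanc}(P)\right|^2 \leq \calO\!\left(s^2 \, 2^{-k^{1/d}}\right).
\end{align}

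Next I would set $\calS = \{S \in \pauli^n : |S| > k\}$ and apply \Cref{thm:anc_weight_relate}, whose proof expresses $\widehat{\obsared}(S) = \sum_{R \in \{I,Z\}^{\otimes a}} \widehat{\obsanc}(S \otimes R)$ and then invokes Cauchy--Schwarz to obtain
\begin{align}
\wt^{>k}[\obsared] = \sum_{S \in \calS} \left|\widehat{\obsared}(S)\right|^2 \leq 2^a \sum_{S \in \calS} \sum_{R \in \{I,Z\}^{\otimes a}} \left|\widehat{\obsanc}(S \otimes R)\right|^2.
\end{align}

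The key observation, which is the only non-routine step, is that every Pauli appearing on the right-hand side has degree at least $|S| > k$ in the full $(n+a)$-qubit system, so each $S \otimes R$ lies in the set $\{P \in \pauli^{n+a} : |P| > k\}$ and these terms are all distinct. Hence the double sum is upper bounded by $\wt^{>k}[\obsanc]$, and combining with the first step gives
\begin{align}
\wt^{>k}[\obsared] \leq 2^a \cdot \wt^{>k}[\obsanc] \leq 2^a \cdot \calO\!\left(s^2 \, 2^{-k^{1/d}}\right),
\end{align}
which is precisely the claimed bound. There is no hard step here beyond correctly interpreting the (slightly overloaded) set-subscript notation in \Cref{thm:anc_weight_relate} as implicitly extending $\calS \subseteq \pauli^n$ to $\calS \otimes \{I,Z\}^{\otimes a} \subseteq \pauli^{n+a}$; everything else is direct application of earlier lemmas.
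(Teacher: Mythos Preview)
Your proposal is correct and follows essentially the same route as the paper: apply \Cref{thm:anc_weight_relate} with $\calS = \{S \in \pauli^n : |S| > k\}$ to get $\wt^{>k}[\obsared] \leq 2^a \cdot \wt^{>k}[\obsanc]$, then invoke \Cref{thm:low_deg_conc} on the $(n+a)$-qubit observable $\obsanc$. Your explicit remark that each $S \otimes R$ has degree $>k$ in $\pauli^{n+a}$ is exactly the justification behind the paper's slightly overloaded notation $\wt^{\in\calS}[\obsanc]$, so there is no substantive difference.
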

\begin{proof}[Proof of \Cref{thm:deg_conc_anc}]
    Let $\calS$ be the set of Paulis of degree $>k$, i.e. $\calS=\{P \in \paulis:|P|>k\}$. By \Cref{thm:anc_weight_relate},
    \begin{align} \label{eqn:weight_relate}
        \wt^{>k}[\obsared] \leq 2^a \cdot \wt^{>k}[\obsanc].
    \end{align}
    Since $\obsanc$ is an observable defined over $n+a$ qubits without ancilla restriction, by \Cref{thm:low_deg_conc},
    \begin{align}
        \wt^{>k}[\obsanc] \leq \calO \left(s^2 2^{-k^{1/d}}\right).
    \end{align}
    Plugging this into \Cref{eqn:weight_relate}, we obtain the desired result.
\end{proof}

\begin{corollary}[Low-Support Concentration with Ancillas] \label{thm:sup_conc_anc}
    For $\calS^* = \supp(\obsared^*)$, the weight of $\obsared$ outside the support of $\obsared^*$ is upper-bounded as
    \begin{align}
        \wt^{\notin\calS^*}[\obsared] \leq \frobd \left( \obsared, \obsared^*\right) \leq \frobd \left( \obsanc, \obsanc^*\right) \leq 2^a \cdot \frac{9m^2}{2^{\gaterem}} 
    \end{align}
\end{corollary}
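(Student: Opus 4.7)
The plan is to mirror the proof of Lemma~\ref{thm:pauli_lb}, now in the ancilla setting, and then to chain in the two Frobenius-distance estimates provided by Lemma~\ref{thm:avg_dist_anc}. Concretely, the three inequalities in the corollary decompose into (i) a low-support Parseval bound relating $\wt^{\notin\calS^*}[\obsared]$ to $\frobd(\obsared,\obsared^*)$, (ii) the ancilla-restriction step $\frobd(\obsared,\obsared^*)\leq \frobd(\obsanc,\obsanc^*)$ already established in Lemma~\ref{thm:avg_dist_anc}, and (iii) the explicit $2^a\cdot 9m^2/2^\gaterem$ bound also supplied by Lemma~\ref{thm:avg_dist_anc}. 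Steps (ii) and (iii) are just a verbatim invocation of the preceding lemma; the only work is step (i).

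For step (i), I would first observe that by the very definition $\calS^*=\supp(\obsared^*)$, every Pauli $Q$ acting non-trivially outside $\calS^*$ satisfies $\widehat{\obsared^*}(Q)=0$, and hence $\wt^{\notin\calS^*}[\obsared^*]=0$. This is the direct analogue of Lemma~\ref{thm:weight_zero_support}, and it does not require that $\obsared^*$ be a Heisenberg-evolved observable of a unitary channel on $n$ qubits; it uses only the Pauli decomposition of the operator. Then, exactly as in the proof of Lemma~\ref{thm:pauli_lb}, I would apply the triangle inequality for the square-root of Pauli weight:
\begin{align}
\wt^{\notin\calS^*}[\obsared]^{1/2}
\;\leq\; \wt^{\notin\calS^*}[\obsared^*]^{1/2} + \wt^{\notin\calS^*}[\obsared-\obsared^*]^{1/2}
\;=\; \wt^{\notin\calS^*}[\obsared-\obsared^*]^{1/2}.
\end{align}
Squaring and using that the sum of squared Pauli coefficients over any subset is bounded by the sum over the whole Pauli group, together with the normalized Parseval identity $\tfrac{1}{2^n}\|M\|_F^2=\sum_{P\in\paulis}|\widehat M(P)|^2$ applied to $M=\obsared-\obsared^*$, gives
\begin{align}
\wt^{\notin\calS^*}[\obsared]
\;\leq\; \wt^{\notin\calS^*}[\obsared-\obsared^*]
\;\leq\; \tfrac{1}{2^n}\|\obsared-\obsared^*\|_F^2
\;=\; \frobd(\obsared,\obsared^*).
\end{align}

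Finally, I would chain this with Lemma~\ref{thm:avg_dist_anc}, which provides both $\frobd(\obsared,\obsared^*)\leq\frobd(\obsanc,\obsanc^*)$ (via the two-sided contraction by $I\otimes\ket{0^a}$) and $\frobd(\obsanc,\obsanc^*)\leq 2^a\cdot 9m^2/2^\gaterem$ (via the hybrid argument over removed large-$CZ$ gates). Combining yields the claimed three-step bound. I do not expect any real obstacle here: both Lemma~\ref{thm:weight_zero_support}'s support argument and Lemma~\ref{thm:pauli_lb}'s Parseval/triangle argument transfer without modification to the operator $\obsared$, so the corollary is genuinely a one-line combination of the ancilla-free template with Lemma~\ref{thm:avg_dist_anc}. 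The only subtlety worth a sentence is that $\obsared^*$ need not itself be the Heisenberg evolution of a unitary circuit on $n$ qubits (since $\widetilde C$ may not perform clean computation), but the support/weight argument uses only the Pauli decomposition, so this poses no issue.
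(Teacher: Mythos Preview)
Your proposal is correct and follows essentially the same route as the paper: triangle inequality for $\wt^{1/2}$, the vanishing of $\wt^{\notin\calS^*}[\obsared^*]$, Parseval to pass to $\frobd(\obsared,\obsared^*)$, and then Lemma~\ref{thm:avg_dist_anc} for the remaining two inequalities. Your remark that $\obsared^*$ need not be a Heisenberg-evolved observable of a clean computation, so that the vanishing of its weight outside $\calS^*$ should be argued directly from the Pauli decomposition rather than the light-cone Lemma~\ref{thm:weight_zero_support}, is a valid and careful point that the paper glosses over.
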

\begin{proof}[Proof of \Cref{thm:sup_conc_anc}]
    Note that the weight expression satisfies the triangle inequality. Thus, the weight of $\obsared$ can be decomposed with respect to  $\obsared^*$ as follows. 
    \begin{align}
        \wt^{\notin\calS^*}[\obsared] &= \wt^{\notin\calS^*}\left[\obsared^*+ (\obsared-\obsared^*)\right] \\
        &\leq \left( \wt^{\notin\calS^*}\left[\obsared^*\right]^{1/2}+ \wt^{\notin\calS^*}\left[\obsared-\obsared^*\right]^{1/2}\right)^2 \label{eqn:inter_weight}
    \end{align}
    Since $\calS^* = \supp(\obsared^*)$, by \Cref{thm:weight_zero_support},
    \begin{align}
        \wt^{\notin\calS^*}\left[\obsared^*\right] =0.
    \end{align}
    Furthermore, by \Cref{thm:avg_dist_anc}
    \begin{align}
         \wt^{\notin\calS^*}\left[\obsared-\obsared^*\right]\leq \frobd \left( \obsared, \obsared^*\right) \leq \frobd \left( \obsanc, \obsanc^*\right).
    \end{align}
    Plugging these results into \Cref{eqn:inter_weight} and leveraging our upper-bound on the distance from \Cref{thm:avg_dist_anc} obtains the desired expression.
\end{proof}

\subsection{Learning \texorpdfstring{\QACZ}~ Heisenberg-Evolved Observables with Ancillas}
In order to learn the unitary of an $(n+a)$-qubit \QACZ~circuit with ancillas, it turns out that we can use the procedure of \Cref{sec:learn} with some slight modifications. 

\begin{proposition}[Learning \QACZ~with Logarithmic Ancillas] \label{thm:learn_log_anc}
    Suppose we are given an $(n+a)$-qubit depth-$d$ \QACZ~circuit governed by unitary $C$, performing clean computation 
    \begin{align} \label{eqn:clean_comp}
        C (I \otimes \ket{0^a}) = A \otimes \ket{0^a}
    \end{align}
    on a logarithmic number of ancilla qubits, i.e. $a=\calO( \log n)$.
    For error parameter $\varepsilon= 1/\poly(n)$ and failure probability $\delta \in (0,1)$, we can learn a $2n$-qubit unitary $A_\text{sew}$ which is $\varepsilon$-close to the unitary $A \otimes A^\dagger$, i.e.
    \begin{align}
        \davg(A_\text{sew}, A \otimes A^\dagger) \leq \varepsilon,
    \end{align}
    with high probability $1-\delta$, as well as quasi-polynomial sample and computational complexity.
\end{proposition}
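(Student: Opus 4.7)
The plan is to reduce to the ancilla-free learning procedure of Theorem~\ref{thm:main_result}, applied to the ``effective'' $n$-qubit unitary $A$ defined by the clean-computation identity. The key observation is that we do not have direct query access to $A$, but the relation $C(I\otimes\ket{0^a})=A\otimes\ket{0^a}$ lets us simulate each use of $A$ (resp.\ $A^\dagger$) by one use of $C$ (resp.\ $C^\dagger$) with ancillas initialized to $\ket{0^a}$ and discarded at the end. In particular, for any $n$-qubit stabilizer input and Pauli-basis measurement on the output qubits, we can estimate expectations of $A P_i A^\dagger = \obsared$ via classical shadow tomography exactly as in the ancilla-free case, with only a constant multiplicative overhead per query.

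First I would run Algorithm~\ref{alg:learning_alg} on each of the $3n$ single-qubit Pauli observables $\obsared$, learning all Pauli coefficients of degree $\leq\support$ to precision $\eta$ via the bound of Lemma~\ref{thm:shadow_tomog}. Re-running the proof of Lemma~\ref{thm:learn_dist} verbatim, but substituting the ancilla-adjusted low-support concentration bound of Corollary~\ref{thm:sup_conc_anc} for the $\wt^{\notin\calS^*}$ term, yields
\begin{align}
    \frobd\!\left(\widetilde{O}^{(\support)}_{P_i,n},\,\obsared\right)\leq 2\cdot 4^{\support}\cdot\eta^{2}+2^{a}\cdot\frac{9m^{2}}{2^{\gaterem}}.
\end{align}
The main obstacle is the $2^{a}$ blow-up in the second term, but this is exactly what the $a=\calO(\log n)$ hypothesis is designed to absorb: since $2^{a}=\poly(n)$, choosing $\gaterem=c\log(n+a)=\calO(\log n)$ for a sufficiently large constant $c$ still drives that term below $1/\poly(n)$. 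In turn, this forces the learned support size $\support=c^{d}\log^{d}(n+a)=\calO(\log^{d}n)$, and setting $\eta^{2}=1/(n^{b}\cdot 4^{\support})$ keeps the first term $1/\poly(n)$ as well. Because $\support$ is still poly-logarithmic in $n$, the sample and time complexity estimates from the proof of Lemma~\ref{thm:learn_obs} go through unchanged, giving total cost $\calO(n^{\poly\log n}\log(1/\delta))$.

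Next, I would feed the $3n$ learned low-support observables $\{\widetilde{O}^{(\support)}_{P_i,n}\}$ into the sewing procedure of Lemma~\ref{thm:csew_bound} to form
\begin{align}
    A_{\text{sew}}=\textnormal{SWAP}^{\otimes n}\prod_{i=1}^{n}\proj_{U}\!\left(\tfrac{1}{2}I\otimes I+\tfrac{1}{2}\sum_{P\in\{X,Y,Z\}}\widetilde{O}^{(\support)}_{P_i,n}\otimes P_i\right).
\end{align}
Note that this is a $2n$-qubit (not $2(n+a)$-qubit) object precisely because the clean-computation identity lets the ancilla register drop out. Because each $\widetilde{O}^{(\support)}_{P_i,n}$ is supported on $\calO(\log^{d}n)$ qubits, each $\proj_{U}$ reduces to an SVD on a matrix of quasi-polynomial dimension, preserving the overall quasi-polynomial runtime. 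Invoking Lemma~\ref{thm:csew_bound} with the per-observable error $\epsilon_{P_i}\leq 1/\poly(n)$ established above gives
\begin{align}
    \davg(A_{\text{sew}},A\otimes A^{\dagger})\leq\tfrac{1}{2}\sum_{i=1}^{n}\sum_{P\in\{X,Y,Z\}}\epsilon_{P_i}\leq\tfrac{3n}{2}\max_{i,P}\epsilon_{P_i}\leq\frac{1}{\poly(n)},
\end{align}
upon taking $b$ and $c$ large enough. The proof is essentially bookkeeping: the only real content is verifying that the $2^{a}$ factor in Corollary~\ref{thm:sup_conc_anc} is absorbed by enlarging $c$ (hence $\support$) by a constant factor, which is possible exactly when $a=\calO(\log n)$ and fails as soon as $a$ becomes super-logarithmic — the same boundary that motivates Conjecture~\ref{conj:anc_conc}.
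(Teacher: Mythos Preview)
Your proposal is correct and follows essentially the same route as the paper: simulate access to $A$ via the clean-computation identity, run Algorithm~\ref{alg:learning_alg} on the restricted observables $\obsared$, invoke the ancilla-adjusted low-support concentration bound (Corollary~\ref{thm:sup_conc_anc}) inside Lemma~\ref{thm:learn_dist}, and absorb the $2^{a}$ blow-up by enlarging the constant $c$ in $\gaterem=c\log(n+a)$ precisely because $a=\calO(\log n)$. If anything, your write-up is more complete than the paper's, which stops after bounding the per-observable error and leaves the sewing step and final $\davg$ bound implicit; you spell those out explicitly via Lemma~\ref{thm:csew_bound}.
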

\begin{proof}[Proof of \Cref{thm:learn_log_anc}]
    As expressed in \Cref{eqn:clean_comp}, since the circuit $C$ performs clean computation, the ancillas return to their starting state by end of the computation. Therefore, we can measure out the ancillas at the end of the computation, to obtain the unitary $A$ which performs the same computation as $C$, but only acts on the $n$ computation qubits. Thus, in the learning procedure, we will be interested in learning and sewing the $n$-qubit Heisenberg-evolved single-qubit Paulis \emph{with} ancilla restriction, i.e. $\obsared$.

    Since these observables with ancilla restriction only act on $n$ qubits, we can straightforwardly apply \Cref{alg:learning_alg} for learning the Heisenberg-evolved single-qubit Pauli observables of the circuit. However, the learning guarantees in this case differ from those of the ancilla-free case. In particular, to show that the learned observable $\widetilde{O}^{(\support)}_{P_i,n}$ is close to the desired observable with ancilla restriction $\obsared$, we need to bound the distance between these operators. Plugging $\widetilde{O}^{(\support)}_{P_i,n}$ and  $\obsared$  into \Cref{thm:learn_dist}, we get that their distance is bounded as
    \begin{align}
        \frobd \left( \widetilde{O}^{(\support)}_{P_i,n},  \obsared \right) \leq 2 \cdot  4^\support \cdot \eta^2 + \wt^{\notin\calS^*}[\obsared] \leq 2 \cdot 4^\support \cdot \eta^2 + 2^a \cdot \frac{9m^2}{2^{\gaterem}},
    \end{align}
    where $\eta$ is the learning accuracy and the second inequality leveraged our low-support concentration result from \Cref{thm:sup_conc_anc}. Notice that the key difference to the ancilla-free case is a $2^a$ factor amplifying the error in removing large $CZ$ gates.
    
    As in the ancilla-free case, we need to perform a balancing act to ensure that this learning error is small while ensuring that the algorithm has efficient sample and computational complexity. Since the sample and time complexity are directly related to the supports of the learned observables, to achieve quasi-polynomial complexity, we will want to learn observables with $\poly\log n$ support. However, if we were to remove all $CZ$ gates of width $\geq \gaterem = c \cdot \log(n+a)$ in this circuit of size $s=\poly(n)$, then 
    \begin{align} \label{eqn:error_ub}
        \wt^{\notin\calS^*}[\obsared] & \leq \frac{9m^2}{2^{\gaterem-a}}  \leq \frac{9\cdot s^2\cdot 2^a}{2^{\log((n+a)^c)}} = \frac{9\cdot s^2\cdot 2^a}{(n+a)^c}.
    \end{align}
    Due to the $2^a$ factor in the numerator of \Cref{eqn:error_ub}, $\wt^{\notin \calS^*}[O_{P_i}]$ is guaranteed to be $\leq1/\poly(n)$ only if the number of ancillas is logarithmic, i.e. $a = \calO(\log n)$. 
\end{proof}

Note that the reason our algorithm is restricted to \QACZ~circuits with a logarithmic number of ancillas is the ancilla-dependence in the low-support concentration result of \Cref{thm:sup_conc_anc}. However, we do not believe this bound to be tight. If \cite[Conjecture 1]{nadimpalli2023pauli} were proven true, it would imply ancilla-independent low-degree concentration of the Heisenberg-evolved observables.
\begin{conjecture}[\cite{nadimpalli2023pauli} Conjecture 1] \label{conj:low_deg_choi_conc} For a size-$s$, depth-$d$ \QACZ~circuit acting on $n$-qubits and $\poly(n)$ ancilla qubits, then for all $k \in [n+1]$,
    \begin{align}
        \wt^{>k} [\Phi_{\calE_C}]  \leq \poly(s) \cdot 2^{-\Omega(k^{1/d})}.
    \end{align} 
\end{conjecture}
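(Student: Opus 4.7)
The plan is to extend Proposition~\ref{thm:low_deg_conc}'s low-degree Pauli concentration from the ancilla-free case to \QACZ\ circuits performing clean computation on $\poly(n)$ ancillas. The naive route---passing to $\obsanc$ and applying \Cref{thm:anc_weight_relate}---loses a multiplicative $2^a$, which is fatal for $a=\poly(n)$. My proposal is instead to work directly with the Choi representation $\Phi_{\calE_C}$, keeping the $\ket{0^a}$ boundary condition in place throughout, and to mimic the two-step structure of the ancilla-free proof: (i) truncate the circuit by removing every $CZ_k$ of width exceeding $\gaterem = k^{1/d}$; (ii) bound the Frobenius perturbation of $\Phi_{\calE_C}$ this causes, without paying a $2^a$ factor.

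Step (i) is essentially unchanged from \Cref{thm:weight_zero}: after truncation every gate acts on at most $\gaterem$ qubits, so the depth-$d$ backward light-cone of the designated output qubit contains at most $\gaterem^d = k$ qubits regardless of $a$, and any Pauli coefficient of $\Phi_{\calE_{\widetilde C}}$ of degree $>k$ must vanish. Step (ii) is the substantive piece. A hybrid argument over the $m \le s$ removed $CZ_{k_j}$ gates reduces it to bounding, per gate $j$, a squared Frobenius contribution of the form
\begin{align}
\left\| (CZ_{k_j} - I)\, V_j \bigl(\ket{\text{EPR}_n} \otimes \ket{0^a}\bigr) \right\|^2,
\end{align}
where $V_j$ is the portion of $C$ preceding the $j$-th removed gate. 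Using $CZ_{k_j} - I = -2\ketbra{1^{k_j}}{1^{k_j}}$, this reduces (after averaging over the EPR state) to a multiple of $\Tr\bigl(P_{\text{tgt}} \cdot V_j (I_n/2^n \otimes \ketbra{0^a}{0^a}) V_j^\dagger\bigr)$, where $P_{\text{tgt}}$ is the projector $\ketbra{1^{k_j}}{1^{k_j}}$ on the $k_j$ target qubits and identity elsewhere---i.e., the probability that a \QACZ\ subcircuit, fed a maximally mixed data register and clean ancillas, places amplitude $\ket{1^{k_j}}$ on those $k_j$ targets.

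\textbf{The main obstacle}, and the reason the conjecture remains open, is bounding this probability by roughly $\poly(s)\cdot 2^{-k_j}$ rather than $2^{a-k_j}$. My planned attack route is the concurrent work of \cite{anshu2024computational}, which shows that \QACZ-computable projectors admit spectral-norm approximations by polynomials in the Pauli $Z$'s of degree $\poly\log n$, with guarantees at slightly superlinear ancilla. Applied to $V_j^\dagger P_{\text{tgt}} V_j$, such an approximation lets one expand the trace in the Pauli basis; because $I_n/2^n \otimes \ketbra{0^a}{0^a}$ has nonzero Pauli coefficients only along $I_n \otimes \{I,Z\}^{\otimes a}$, and the approximating polynomial has degree only $\poly\log n$, the sum collapses to $n^{\poly\log n}$ nontrivial terms, each of magnitude $\sim 2^{-k_j}$ from the uniformly mixed data block, giving an overall prefactor absorbable into $\poly(s)$. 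Closing the plan requires (a) pushing the Anshu et al.\ ancilla budget from slightly superlinear down to arbitrary polynomial, and (b) translating their spectral-norm approximation into a Frobenius-norm bound on the Choi trace that composes cleanly across the $m$ hybrid steps. Step (a) is where I expect the argument to get stuck: polynomial-method lower bounds for \QACZ\ are most effective precisely when ancillas are scarce, and a polynomial-ancilla strengthening of \cite{anshu2024computational} would already be a substantial advance in its own right---without it, the $2^a$ penalty seems genuinely hard to shake.
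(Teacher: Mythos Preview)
This statement is a \emph{conjecture}, not a theorem: the paper does not prove it and explicitly imports it as \cite{nadimpalli2023pauli}'s Conjecture~1, using it only as a conditional hypothesis to derive \Cref{thm:poly_anc}. There is therefore no ``paper's own proof'' to compare against; the paper itself flags exactly the obstacle you identify---the $2^a$ blow-up from \Cref{thm:anc_weight_relate}---and leaves the polynomial-ancilla case open.

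Your proposal is a reasonable research plan rather than a proof, and you are candid about this. The two-step skeleton (truncate large $CZ_k$'s, then control the Choi perturbation with the $\ket{0^a}$ boundary kept in place) is the natural approach, and your reduction of the per-gate hybrid term to the probability that a \QACZ\ subcircuit, on input $I_n/2^n \otimes \ketbra{0^a}{0^a}$, lands in $\ket{1^{k_j}}$ on the target wires is correct and is precisely the quantity that resists an ancilla-independent bound. The proposed route through \cite{anshu2024computational} is plausible but, as you note, currently blocked: their guarantees stop at slightly superlinear ancilla, and extending them to arbitrary $\poly(n)$ ancilla is itself an open problem at least as hard as the conjecture. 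So the honest assessment is that your plan correctly localizes the difficulty but does not close it---which is the state of the art.
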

\begin{corollary}
    For a size-$s$, depth-$d$ \QACZ~circuit acting on $n$-qubits and $\poly(n)$ ancilla qubits, then for all degrees $k \in [n]$,
    \begin{align}
        \wt^{>k} [\obsared] \leq \poly(s) \cdot 2^{-\Omega(k^{1/d})}.
    \end{align}
\end{corollary}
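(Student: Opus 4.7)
The plan is to reduce low-degree concentration of the ancilla-restricted Heisenberg-evolved observable $\obsared$ directly to low-degree concentration of the Choi representation $\Phi_{\calE_C}$ of the associated single-output channel, thereby avoiding the lossy $2^a$ factor that appears in \Cref{thm:anc_weight_relate}. The bridge is \Cref{thm:relate_coeffs}, which expresses each Pauli coefficient of a Heisenberg-evolved single-qubit observable as (up to a sign) twice a Pauli coefficient of the corresponding single-output channel Choi representation.

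First, I would identify the relevant channel. Because $C$ performs clean computation, $C(I\otimes\ket{0^a}) = A\otimes\ket{0^a}$, so on the $n$-qubit computational register the circuit acts as the effective unitary $A$, and $\obsared = A P_i A^\dagger$. I would let $\calE_C(\rho) = \Tr_{n-1}\left(A\rho A^\dagger\right)$ denote the associated single-output $n$-qubit channel obtained by keeping only the $i$-th output qubit; note that this is exactly the channel $\calE_C$ whose $(n+1)$-qubit Choi representation $\Phi_{\calE_C}$ is the object of \Cref{conj:low_deg_choi_conc}. Applying \Cref{thm:relate_coeffs} to the $n$-qubit unitary $A$, I would obtain, for every $n$-qubit Pauli $Q$,
\begin{align}
    \widehat{\obsared}(Q) \;=\; 2\cdot \widehat{\Phi}_{\calE_C}(P_i\otimes Q)\cdot (-1)^{\delta\{Q\text{ has an odd \# of }Y\text{s}\}}.
\end{align}

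Next, I would square and sum over all Paulis of degree strictly greater than $k$. Since $|P_i\otimes Q| = 1 + |Q|$, any term with $|Q|>k$ corresponds to an $(n+1)$-qubit Pauli $P_i\otimes Q$ of degree $>k$, so the map $Q\mapsto P_i\otimes Q$ injects $\{Q : |Q|>k\}$ into $\{S\in\pauli^{n+1} : |S|>k\}$. This gives
\begin{align}
    \wt^{>k}[\obsared] \;=\; \sum_{|Q|>k}\!\left|\widehat{\obsared}(Q)\right|^{2} \;=\; 4\sum_{|Q|>k}\!\left|\widehat{\Phi}_{\calE_C}(P_i\otimes Q)\right|^{2} \;\leq\; 4\cdot \wt^{>k}[\Phi_{\calE_C}].
\end{align}
Invoking \Cref{conj:low_deg_choi_conc} on the right-hand side then yields $\wt^{>k}[\obsared]\leq 4\cdot\poly(s)\cdot 2^{-\Omega(k^{1/d})} = \poly(s)\cdot 2^{-\Omega(k^{1/d})}$, which is the claim.

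There is no serious obstacle: the argument is essentially a one-line coefficient-by-coefficient transfer. The only point requiring care is matching conventions, namely that \Cref{conj:low_deg_choi_conc} is a statement about the $(n+1)$-qubit Choi representation of the single-output channel on the computational register (with ancillas already absorbed into the clean effective unitary $A$), and that \Cref{thm:relate_coeffs} is applied to $A$ rather than to the full $(n+a)$-qubit circuit $C$. This is exactly what allows the proof to bypass the ancilla-summation step that would otherwise reintroduce the $2^a$ blow-up of \Cref{thm:anc_weight_relate}.
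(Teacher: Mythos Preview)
The proposal is correct and takes essentially the same approach as the paper: both use \Cref{thm:relate_coeffs} to convert $\wt^{>k}[\obsared]$ into a sum of squared Pauli coefficients of $\Phi_{\calE_C}$, bound this by (four times) a high-degree weight of $\Phi_{\calE_C}$, and then invoke \Cref{conj:low_deg_choi_conc}. The paper uses the marginally sharper observation $|P_i\otimes Q|>k+1$ (giving $\wt^{>k+1}[\Phi_{\calE_C}]$ in the intermediate step) whereas you use $|P_i\otimes Q|>k$, but this makes no difference to the asymptotic conclusion.
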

\begin{proof}
    By  \Cref{thm:relate_coeffs} and \Cref{conj:low_deg_choi_conc}, for $k \in [n]$
    \begin{align}
        \wt^{>k} [\obsared] &= \sum_{Q \in \pauli^{n}:|Q| >k} \left|\widehat{\obsared}(Q)\right|^2  \\
        &= 4  \sum_{Q \in \pauli^{n}:|Q| >k} \left| \widehat{\Phi_{\calE_C}}(Q \otimes P)\right|^2 \\
        &\leq 4  \sum_{R \in \pauli^{n+1}:|R| > k+1} \left| \widehat{\Phi_{\calE_C}}(R)\right|^2 \\
        & \leq 4  \cdot \poly(s) \cdot 2^{-\Omega((k+1)^{1/d})} \\ 
        &\leq \poly(s) \cdot 2^{-\Omega(k^{1/d})}
    \end{align}
\end{proof}
\noindent However, for the purposes of our learning algorithm, it does not suffice to have low-degree concentration. Instead, we need low-support concentration. Thus, we conjecture that the ancilla-dependence of the low-support concentration result of \Cref{thm:sup_conc_anc} can be eliminated.
\begin{conjecture}[Ancilla-Independent Low-Support Concentration] For a size-$s$, depth-$d$ \QACZ~circuit acting on $n$-qubits and $\poly(n)$ ancilla qubits and support $\calS$ such that $|\calS|=k^d$,
    \begin{align}
        \wt^{\notin\calS}[\obsared] \leq \poly(s) \cdot 2^{-\Omega(k^{1/d})}.
    \end{align}
    
\end{conjecture}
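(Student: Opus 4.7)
The plan is to bound $\wt^{\notin\calS}[\obsared]$ for a carefully chosen support $\calS$ of size at most $k^d$, without routing through the intermediate quantity $\wt^{\notin\calS}[\obsanc]$ (which is where the $2^a$ blowup was introduced by Cauchy–Schwarz in \Cref{thm:anc_weight_relate}). Define $\widetilde{C}$ as $C$ with all $CZ_{k_j}$ gates of width $k_j > k^{1/d}$ removed, and set $A^{\ast} = (I\otimes\bra{0^a})\widetilde{C}(I\otimes\ket{0^a})$; note that $A^{\ast}$ is in general a contraction (not a unitary), because $\widetilde{C}$ no longer cleanly preserves the ancilla register. Let $\calS$ be the projection onto the computation qubits of the backwards light-cone of $P_i$ through $\widetilde{C}$; since $\widetilde{C}$ has only gates of width $\le k^{1/d}$ and depth $d$, we get $|\calS|\le k^d$. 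Then I would set $\obsared^{\ast} = A^{\ast}P_i(A^{\ast})^\dagger$ and verify $\supp(\obsared^{\ast})\subseteq\calS$, which allows a weight-outside-support decomposition analogous to \Cref{thm:pauli_lb} to reduce the statement to an upper bound on $\frobd(\obsared,\obsared^{\ast})$.

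The centerpiece is then a hybrid argument that replaces the gates $CZ_{k_j}$ one at a time inside $C$ and tracks the Frobenius norm of the difference \emph{after} sandwiching with $(I\otimes\bra{0^a})\,\cdot\,(I\otimes\ket{0^a})$, rather than bounding the unrestricted difference first and restricting afterwards. Writing $C = C_{\mathrm{post}}\cdot CZ_{k_j}\cdot C_{\mathrm{pre}}$, the hybrid term becomes
\begin{align}
\bigl\|(I\otimes\bra{0^a})\,C_{\mathrm{post}}\bigl(CZ_{k_j}^\dagger V_j CZ_{k_j} - V_j\bigr)C_{\mathrm{pre}}(I\otimes\ket{0^a})\bigr\|_F,
\end{align}
where $V_j = C_{\mathrm{post}}^\dagger(P_i\otimes I^a)C_{\mathrm{post}}$. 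The goal is to show that this quantity is controlled by $3\cdot 2^{-k_j/2}$ (or similar) \emph{without} the $2^{a/2}$ factor that the naive bound produces. The key structural input is that $C_{\mathrm{pre}}(I\otimes\ket{0^a})$ has small Schmidt rank across any cut separating a small light-cone from its complement, so the ancilla register cannot ``spread'' the $\ketbra{1^{k_j}}{1^{k_j}}$ projector into all $2^a$ ancilla configurations with equal weight. Concretely, one would expand $\ketbra{1^{k_j}}{1^{k_j}}$ on its ancilla-supported qubits and use that these qubits have only been touched by at most $d-1$ layers of $\widetilde{C}$-type gates before reaching $CZ_{k_j}$, giving a Pauli-spectrum bound on the preimage state that avoids the worst-case $2^a$.

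The hard part will be precisely this ancilla-aware hybrid bound. The natural induction writes $C$ as a product of unitaries and pays an operator-norm price per factor, which erases the structural restriction that the ancillas start in $\ket{0^a}$. A successful proof needs to thread the $\ket{0^a}$-initialization through every hybrid step---perhaps by inducting on the depth and using a layer-wise Pauli-weight bound on the state produced by the first $j$ layers acting on $\ket{0^a}$, in the spirit of the random-restriction/switching-lemma machinery of \cite{nadimpalli2023pauli}. An alternative route I would pursue in parallel is to analyze the contraction $A^{\ast}$ directly via a Choi-state formulation: the Choi state of $A^{\ast}$ equals $(I\otimes I\otimes\bra{0^a})(I\otimes\widetilde{C})|\mathrm{EPR}_n\rangle\otimes|0^a\rangle$, and low-degree/low-support concentration for this state would follow from the Pauli-weight results for $\widetilde{C}$ acting on a fixed product-state input, a setting where the $2^a$ factor is genuinely absent because we are not summing over all ancilla measurement outcomes.

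Beyond the hybrid bound, one also needs to verify the support containment $\supp(\obsared^{\ast})\subseteq\calS$ in the non-unitary contraction setting, where the clean light-cone argument of \Cref{thm:weight_zero_support} must be adapted to handle the ancilla projector; this is a technical but routine adaptation once the main concentration inequality is in place. Assuming both pieces go through, standard algebra then yields $\wt^{\notin\calS}[\obsared]\le \poly(s)\cdot 2^{-\Omega(k^{1/d})}$, and plugging this into the learning analysis of \Cref{sec:ancillas} immediately upgrades the logarithmic-ancilla guarantee to $\poly(n)$ ancillas, establishing \Cref{thm:poly_anc}.
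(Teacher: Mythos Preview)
The statement you are attempting to prove is stated in the paper as a \emph{conjecture}, not a theorem; the paper gives no proof of it and explicitly identifies it as open (see the discussion preceding and following the conjecture in \Cref{sec:ancillas}). So there is no ``paper's own proof'' to compare against.

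As a research strategy your outline is reasonable, but it is not a proof, and you yourself locate the gap accurately: the ``ancilla-aware hybrid bound'' is precisely the obstruction that makes this a conjecture rather than a theorem. Two concrete issues. First, once you sandwich by $(I\otimes\bra{0^a})\,\cdot\,(I\otimes\ket{0^a})$ before running the hybrid, the intermediate maps are contractions rather than unitaries, so the unitary-invariance step $\|U_1^\dagger M U_1 - U_2^\dagger M U_2\|_F = \|M - U_1 U_2^\dagger M U_2 U_1^\dagger\|_F$ that drives the telescoping in \Cref{thm:avg_dist} is no longer available; you would need a replacement that does not silently reintroduce a factor depending on $a$. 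Second, the Schmidt-rank heuristic is not justified: a single $CZ_{k_j}$ can act on $\Theta(k_j)$ ancilla qubits, and the pre-circuit $C_{\mathrm{pre}}$ may already entangle the ancilla register heavily with the data register, so the claim that $C_{\mathrm{pre}}(I\otimes\ket{0^a})$ has ``small Schmidt rank across any cut separating a small light-cone from its complement'' is exactly what needs to be proved and is not a consequence of depth alone. Your alternative Choi-state route faces the same issue: low-support concentration of the Choi state of the contraction $A^\ast$ for \emph{arbitrary} clean \QACZ\ computations is equivalent in strength to the conjecture itself.

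In short: your plan correctly isolates where the $2^a$ factor enters (the Cauchy--Schwarz step in \Cref{thm:anc_weight_relate}) and proposes to avoid it, but the replacement estimate you need is the content of the conjecture, and the proposal does not supply it.
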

\noindent If this conjecture were proven true, it would imply that our learning algorithm works for \QACZ~circuits with polynomially many ancilla qubits.
\begin{corollary}[Efficient learning of \QACZ~with Polynomial Ancillas]
    Suppose we are given an $(n+a)$-qubit depth-$d$ \QACZ~circuit governed by unitary $C$, performing clean computation 
    \begin{align}
        C (I \otimes \ket{0^a}) = A \otimes \ket{0^a}
    \end{align}
    on polynomially many ancilla qubits, i.e. $a=\poly(n)$.
    For failure probability $\delta \in (0,1)$, we can learn a $2n$-qubit unitary $A_\text{sew}$ such that
    \begin{align}
        \davg(A_\text{sew}, A \otimes A^\dagger) \leq \frac{1}{\poly(n)},
    \end{align}
    with high probability $1-\delta$. The sample and computational complexity of this procedure are quasi-polynomial.
\end{corollary}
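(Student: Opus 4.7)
The plan is to mirror the proof of Proposition~\ref{thm:learn_log_anc} (learning \QACZ~with logarithmic ancillas), swapping in the stronger conjectured low-support concentration bound to remove the $2^a$ obstruction that previously restricted us to $a = \calO(\log n)$. Concretely, because the clean-computation assumption gives $C(I \otimes \ket{0^a}) = A \otimes \ket{0^a}$, the $n$-qubit unitary $A$ we want to learn has single-qubit Heisenberg-evolved observables $\obsared = A P_i A^\dagger$, which live on only $n$ qubits. Thus \Cref{alg:learning_alg} applies verbatim to learn each of the $3n$ observables $\widetilde{O}_{P_i,n}^{(\support)}$ via classical shadow tomography, with $\support = \calO(\log^d n)$ and shadow-precision $\eta$, and the sample/time bookkeeping of \Cref{thm:learn_obs} is unchanged because the measurements occur on an $n$-qubit system regardless of how many ancillas the circuit internally uses.

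The key step is the accuracy analysis. By \Cref{thm:learn_dist} (which is a purely algebraic statement about how \Cref{alg:learning_alg} approximates a target Hermitian operator given $\eta$-accurate low-degree Pauli estimates), we have
\begin{align}
    \frobd\left(\widetilde{O}^{(\support)}_{P_i,n}, \obsared\right) \;\leq\; 2 \cdot 4^\support \cdot \eta^2 + \wt^{\notin \calS^*}[\obsared],
\end{align}
where $\calS^*$ is the support of the ``$CZ$-truncated'' observable $\obsared^*$. The previous analysis bounded the second term by $2^a \cdot 9m^2 / 2^\gaterem$ via \Cref{thm:sup_conc_anc}, which killed the argument for $a$ superlogarithmic. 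Invoking \Cref{conj:anc_conc} instead, I would replace this by $\poly(s) \cdot 2^{-\Omega(\gaterem)}$ with no $2^a$ factor. Then setting $\gaterem = c \log n$ (so that $\support = \gaterem^d = \calO(\log^d n)$) and $\eta^2 = 1/(n^b \cdot 4^\support)$ for appropriate constants $b,c$, both terms become $1/\poly(n)$, exactly as in the ancilla-free analysis.

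From here the sewing step is mechanical: apply \Cref{thm:csew_bound} to the $3n$ learned observables $\{\widetilde{O}^{(\support)}_{P_i,n}\}_{i,P}$ to construct
\begin{align}
    A_{\text{sew}} = \textnormal{SWAP}^{\otimes n} \prod_{i=1}^n \proj_U\!\left(\tfrac{1}{2} I \otimes I + \tfrac{1}{2} \sum_{P \in \{X,Y,Z\}} \widetilde{O}^{(\support)}_{P_i,n} \otimes P_i\right),
\end{align}
so that $\davg(A_{\text{sew}}, A \otimes A^\dagger) \leq \tfrac{1}{2} \sum_{i,P} \epsilon_{P_i,n} \leq 1/\poly(n)$ by the same union-bound over the $3n$ Paulis as in the proof of \Cref{thm:main_result}. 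The $\proj_U$ projection remains computable in quasi-polynomial time because the learned observables are still forced to have support $\support = \poly\log n$ by \Cref{alg:learning_alg}, so each SVD is performed on a matrix of dimension $2^{\calO(\log^d n)}$.

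The main obstacle is, of course, that \Cref{conj:anc_conc} is exactly what we are assuming; the content of the corollary is that \emph{modulo} this conjecture, the entire pipeline of \Cref{sec:learn} transfers to the $\poly(n)$-ancilla regime with no other changes. A subtle point to verify is that $\gaterem = c \log n$ still suffices to produce a small-support truncated observable even when the full circuit acts on $n+a = n + \poly(n)$ qubits, but this is immediate because the relevant light-cone bound depends only on the widths of the retained $CZ$ gates and the depth $d$, giving $|\supp(\obsared^*)| \leq \gaterem^d = \calO(\log^d n)$ on the $n$-qubit output register regardless of $a$. Union-bounding the classical-shadow failure probabilities over the $3n$ observables and rescaling $\delta$ preserves the high-probability guarantee.
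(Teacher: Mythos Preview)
Your proposal is correct and follows essentially the same approach as the paper: both start from the logarithmic-ancilla analysis, invoke \Cref{conj:anc_conc} to replace the $2^a$-inflated low-support concentration bound by an ancilla-independent $\poly(s)\cdot 2^{-\Omega(\gaterem)}$, set $\gaterem = \Theta(\log n)$ to drive this term to $1/\poly(n)$, and then reuse the learning and sewing machinery unchanged. Your write-up is in fact somewhat more explicit than the paper's (you spell out the sewing step, the quasi-polynomial $\proj_U$ cost, and the light-cone subtlety on the $n$-qubit register), but there is no substantive difference in strategy.
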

\begin{proof} Following from the proof of \Cref{thm:learn_log_anc}, in this case, the distance between  $\widetilde{O}^{(\support)}_{P_i,n}$ and  $\obsared$ is bounded as
    \begin{align}
        \frobd \left( \widetilde{O}^{(\support)}_{P_i,n}, \obsared \right) \leq 2 \cdot  4^\support \cdot \eta^2 + \wt^{\notin\calS^*}[\obsared] \leq 2 \cdot 4^\support \cdot \eta^2 +  \poly(s) \cdot 2^{-\Omega(k^{1/d})},
    \end{align}
    where $\eta$ is the learning accuracy. Since the sample and time complexity are directly related to the supports of the learned observables, to achieve quasi-polynomial complexity, we will want to learn observables with $\poly\log n$ support. In this case, if we remove all $CZ$ gates of width $\geq \gaterem = c \cdot \log(n+a)$ where $a=n^b$ in this circuit of size $s=\poly(n)$, then 
    \begin{align}
        \wt^{\notin\calS^*}[\obsared] & \leq \poly(s) \cdot 2^{-\Omega(k^{1/d})} \leq \frac{\poly(s)}{2^{c \cdot \log(n+n^b)}}\leq \frac{\poly(s)}{n^{bc}}.
    \end{align}
    Thus, if $c$ is chosen to be a constant sufficiently large such that $n^{bc} > \poly(s)$, then $\wt^{\notin\calS^*}[\obsared] \leq 1/\poly(n)$ and we obtain the desired learning guarantee.
\end{proof}

\section{Hardness of Learning \texorpdfstring{\QACZ}~} \label{sec:hardness_proof}

We will conclude with a result on the hardness of learning~\QACZ. This result follows straightforwardly from \cite[Proposition 3]{huang2024learning}, which showed that it is exponentially-hard to learn \QACZ~according to the diamond-norm distance. In particular, the proof leverages a specific worst-case logarithmic-depth circuit $U_x$, which via a Grover lower-bound is shown to require exponential queries to learn according to the worst-case measure. We simply observe that $U_x$ is in and of itself a \QACZ~circuit to extend the hardness result to \QACZ~circuits. For the sake of completeness, we re-write the theorem statement of \cite[Proposition 3]{huang2024learning} in the context of \QACZ~circuits, as well as the proof.

\begin{proposition}[Hardness of learning \QACZ] Consider an unknown $n$-qubit unitary $U$ generated by a \QACZ~circuit. Then,
\begin{enumerate}
    \item Learning $U$ to $\frac{1}{3}$ dimond distance with high probability requires $\exp(\Omega(n))$ queries.
    \item Distinguishing whether $U$ equals the identity matrix $I$ or is $\frac{1}{3}$-far from the identity matrix in diamond distance with high probability requires $\exp(\Omega(n))$ queries.
\end{enumerate}
\end{proposition}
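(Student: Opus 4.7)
The plan is to follow the hardness construction of \cite[Proposition 3]{huang2024learning} and simply observe that the hard family of unitaries used there already lies in \QACZ; the Grover-based lower bound then transfers with no further work. Concretely, consider the reflection family $\{U_x\}_{x \in \{0,1\}^n}$ with $U_x := I - 2\ketbra{x}{x}$, one unitary for each hidden bit string $x$. The first step is to verify $U_x \in$ \QACZ. Letting $N_x := \bigotimes_{i:\, x_i = 0} X_i$ be the tensor product of $X$ gates that maps $\ket{x} \mapsto \ket{1^n}$, we have
\begin{align}
    U_x \;=\; N_x \bigl(I - 2 \ketbra{1^n}{1^n}\bigr)\, N_x \;=\; -\,N_x \,CZ_n\, N_x,
\end{align}
which is a depth-$3$ \QACZ~circuit (Pauli layer, a single many-qubit $CZ_n$, Pauli layer); the global phase $-1$ is physically irrelevant.

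The second step is the Grover-style query lower bound. Distinguishing an unknown $U \in \{I\} \cup \{U_x : x \in \{0,1\}^n\}$ with success probability $\geq 2/3$ is exactly the decision version of unstructured search, so by BBBV optimality it requires $\Omega(2^{n/2}) = \exp(\Omega(n))$ queries to $U$ and $U^\dagger$. Combined with the diamond-distance computation $\|\calU_x - \calI\|_\diamond = 2$---which follows from the standard identity $\|\calU_1 - \calU_2\|_\diamond = 2\sqrt{1-\nu^2}$, where $\nu$ is the distance from the origin to the numerical range of $U_1^\dagger U_2$, together with the fact that $U_x$ has eigenvalues $\pm 1$ (so $\nu = 0$, witnessed by the state $\tfrac{1}{\sqrt{2}}(\ket{x} + \ket{x'})$ for any $x'\neq x$)---this yields part $2$ of the proposition immediately. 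For part $1$, any learner that with probability $\geq 2/3$ outputs some $\hat U$ with $\|\hat U - U\|_\diamond \leq 1/3$ can be converted into such a distinguisher: the triangle inequality together with $\|\calU_x - \calI\|_\diamond = 2 > 2/3$ ensures $\hat U$ cannot be within $1/3$ of both $I$ and $U_x$, so simply checking whether $\|\hat U - I\|_\diamond \leq 1/3$ correctly identifies which hypothesis holds. Hence any such learner must also use $\exp(\Omega(n))$ queries.

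There is no serious conceptual obstacle---the whole argument is a short reduction that packages the Huang--Tang--Yuan Grover lower bound together with the observation that the hardness witness $U_x$ already lies in \QACZ. The only items that require care are (a) verifying that $U_x = -N_x\, CZ_n\, N_x$ is implementable in the \QACZ~model defined in \Cref{sec:prelims} (arbitrary single-qubit gates plus many-qubit $CZ$ gates), which is immediate from the displayed identity; and (b) the diamond-distance computation, which is a routine application of the standard formula for the diamond norm of a difference of unitary channels. Once these are in place, steps 2 and 3 give the two parts of the proposition.
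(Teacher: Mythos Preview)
Your proposal is correct and follows essentially the same approach as the paper: observe that the Grover reflection $U_x = I - 2\ketbra{x}{x}$ is a depth-$3$ \QACZ\ circuit via conjugation of $CZ_n$ by Pauli-$X$'s, then invoke the BBBV lower bound. One minor slip: since $CZ_n = I - 2\ketbra{1^n}{1^n}$ already, you have $N_x\, CZ_n\, N_x = I - 2\ketbra{x}{x} = U_x$ with no extra $-1$; your remark that a global phase is irrelevant makes this harmless in any case. Your write-up is in fact more explicit than the paper's on the diamond-distance value $\|\calU_x - \calI\|_\diamond = 2$ and on the learning-to-distinguishing reduction, both of which the paper defers to \cite{huang2024learning}.
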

\begin{proof}
    For $x,y \in \{0,1\}^n$, let $U_x$ be the unitary,
    \begin{align}
        U_x \ket{y} = \begin{cases}
            1, & x=y, \\
            -1 & x \neq y,
        \end{cases}
    \end{align}
    which can be constructed as 
    \begin{align}
        U_x = \left(\prod_{\substack{i \in [n]: \\ x_i =0}} X_i\right) CZ_{[n]} \left(\prod_{\substack{i \in [n]: \\ x_i =0}} X_i\right).
    \end{align}
    
    \cite{huang2024learning} used $U_x$ to prove a learning lower-bound for logarithmic depth circuits comprised of constant-width gates. To begin, they showed that, in the class of circuits comprised solely of constant-width gates, $U_x$ could be synthesized in $\calO(\log n)$-depth. Core to the argument is that, if one can learn a unitary $U$ up to $\frac{1}{3}$ error in diamond distance with high probability or distinguish whether $U$ equals the identity $I$ or is $\frac{1}{3}$-far from $I$ in diamond distance with high probability, then one can successfully distinguish $I$ from $U_x$. However, distinguishing $I$ from one of $U_x$, $\forall x \in \{0,1\}$ is the Grover search problem. Therefore, by the Grover lower bound \citep{bennett1997strengths}, the number of queries must be at least $\Omega(2^{n/2})=\exp(\Omega (n))$.

    Note that the unitary $U_x$ is not only contained in the class of logarithmic-depth constant-width circuits, but also is contained in \QACZ, since it is comprised solely of single-qubit gates and an $n$-qubit $CZ$ gate. Therefore, the argument by and lower-bound of \cite{huang2024learning} also applies to \QACZ, concluding the proof.
\end{proof}

\end{document}